\pdfoutput=1
\documentclass{article}
\PassOptionsToPackage{numbers, compress}{natbib}

\usepackage[preprint]{neurips_2026}
\usepackage{tabularx}
\usepackage{wrapfig}

\usepackage[utf8]{inputenc}
\usepackage[T1]{fontenc}
\usepackage{float}
\usepackage{hyperref}
\hypersetup{hidelinks}
\hypersetup{pdftitle={Causal Inference in Metric Spaces via Distance Profiles},pdfauthor={Zheye Song, Zhiheng Zhang}}
\usepackage{url}
\usepackage{booktabs}
\usepackage{placeins}
\usepackage{amsfonts}
\usepackage{nicefrac}
\usepackage{microtype}
\usepackage{xcolor}

\usepackage{algorithm}
\usepackage{algorithmic}
\usepackage{dsfont}

\usepackage{graphicx}
\usepackage{amsmath,amssymb,amsthm}
\usepackage{bm}
\usepackage{comment}
\usepackage[capitalize,noabbrev]{cleveref}
\usepackage{enumitem}
\usepackage{physics}
\usepackage{bbm}

\theoremstyle{plain}
\newtheorem{theorem}{Theorem}[section]
\newtheorem{proposition}[theorem]{Proposition}
\newtheorem{lemma}[theorem]{Lemma}
\newtheorem{corollary}[theorem]{Corollary}
\theoremstyle{definition}
\newtheorem{definition}[theorem]{Definition}
\newtheorem{example}[theorem]{Example}
\newtheorem{remark}[theorem]{Remark}
\newtheorem{assumption}[theorem]{Assumption}

\DeclareMathOperator*{\E}{\mathbb{E}}
\DeclareMathOperator*{\Prob}{\mathbb{P}}

\DeclareMathOperator*{\Var}{Var}

\newcommand{\ind}{\perp\!\!\!\perp}

\newcommand{\dto}{\xrightarrow{\,d\,}}

\makeatletter
\providecommand{\customlabel}[2]{%
  \protected@write\@auxout{}{\string\newlabel{#1}{{#2}{\thepage}{#2}{#1}{}}}%
  \hypertarget{#1}{#2}%
}
\makeatother

\title{Causal Inference in Metric Spaces via Distance Profiles}

\author{%
  Zheye Song \\
  Department of Biostatistics \\
  University of North Carolina at Chapel Hill \\
  Chapel Hill, NC, USA \\
  \texttt{songzy@unc.edu} \\
  \And
  Zhiheng Zhang\thanks{Corresponding author.} \\
  School of Statistics and Data Science \\
  Shanghai University of Finance and Economics \\
  Shanghai, China \\
  \texttt{zhangzhiheng@mail.shufe.edu.cn} \\
}

\begin{document}

\maketitle
\setcounter{footnote}{0}

\begin{abstract}
Modern interventions increasingly produce object-valued outcomes, such as distributions, covariance matrices, networks, and trees. Existing causal frameworks often exploit special geometry such as one-dimensional Wasserstein quantiles or geodesic Fr\'echet means, whereas metric statistics studies random objects without confounding adjustment. We propose the \emph{profile causal effect} $\Delta(\omega,t)$, which describes where and at what scale a treatment shifts probability mass, including changes in dispersion, tails, or structure that leave mean-level summaries unchanged. Formally, $\Delta(\omega,t)=\Pr\{d(\omega,Y(1))\le t\}-\Pr\{d(\omega,Y(0))\le t\}$ contrasts the potential-outcome probabilities of the same anchor-centered metric ball and requires no structure beyond the metric. On grids fixed by a deterministic or auxiliary design, we combine cross-fitted doubly robust scores with high-dimensional Gaussian approximation and the multiplier bootstrap to obtain simultaneous confidence bands and a global test under regularity conditions. The bands bound treatment-induced changes in ball and shell probabilities simultaneously and, at fixed anchors, extend to all radii. A protocol guides the choice of anchors and radii. Across the experiments, the global test detects concentration effects that mean-quantile and Fr\'echet-mean summaries miss. Its empirical size is near or below the nominal level, even under confounding, where naive metric tests are substantially oversized.
\end{abstract}

\section{Introduction}
\label{sec:intro}

Modern interventions increasingly produce object-valued outcomes: a dispatch policy changes an urban mobility network, a treatment changes a brain connectivity graph, and a pricing rule changes the distribution and spatial pattern of waiting times. Object-oriented data analysis and Fr\'echet regression formalize such observations as statistical objects or metric-space-valued responses rather than Euclidean vectors~\citep{marron2014overview,petersen2019frechet,dubey2024metric}. Recent causal work has begun to move beyond independent scalar outcomes through network interference models, text-as-outcome designs, distributional treatment effects, and geodesic random-object effects, but these approaches typically rely on domain-specific dependence structure, learned representations, quantile geometry, or Fr\'echet/geodesic structure~\citep{ogburn2024causal,egami2022texts,lin2023causal,kurisu2024geodesic}. In these settings, the scientific question is rarely whether a scalar mean changes. A policy may leave the average waiting time nearly unchanged while increasing tail risk in underserved neighborhoods; a therapy may preserve a Fr\'echet mean while reducing dispersion in a network-valued biomarker. Scalar summaries can therefore erase precisely the causal signal of interest.

Object-valued outcomes also make the choice of inferential coordinates part of the design. For scalar or Euclidean outcomes, the outcome coordinates are usually fixed by the measurement scale. For random objects, many informative descriptors are distribution-relative: centrality and outlyingness are defined relative to a reference population or empirical data cloud, while signal relevance depends on the chosen anchors and radii. Depth-based analysis formalizes such distribution-relative descriptors, and metric halfspace depth extends this perspective to object data in general metric spaces~\citep{zuo2000general,dai2023tukey}. Metric statistics provides a complementary representation through distance profiles: for an anchor \(\omega\), \(F_\omega(t)=\Pr\{d(\omega,X)\le t\}\) records the probability mass of metric balls centered at \(\omega\)~\citep{dubey2024metric}. Thus a profile coordinate is not an intrinsic attribute of a single object; it is a metric-neighborhood descriptor defined relative to a reference law or design.

We adopt this idea in a causal form. For each anchor--radius pair \(g=(\omega,t)\), define the potential profile indicator \(Z^{(a)}(g)=\mathds{1}\{d(\omega,Y(a))\le t\}\). The event \(d(\omega,Y(a))\le t\) asks whether the potential outcome falls inside the metric neighborhood \(B(\omega,t)\). At a fixed profile location, the object-valued problem becomes an ordinary binary potential-outcome contrast. Varying \((\omega,t)\) turns these coordinates into a geometry-indexed causal process (\hyperref[fig:profile-geometry]{Figure~\ref*{fig:profile-geometry}}). This transformation preserves potential-outcome reasoning while avoiding linear operations, quantile maps, and geodesic calculus on the outcome space.

We define the \emph{profile causal effect}
\[
\Delta(\omega,t)
=
\Pr\{d(\omega,Y(1))\le t\}
-
\Pr\{d(\omega,Y(0))\le t\}.
\]
This estimand compares how the two potential-outcome laws assign probability to the same anchor-centered metric balls. It provides a common inferential language across distributions, covariance matrices, networks, trees, and related random objects, while preserving the native metric of each space.

\begin{figure}[htbp]
\centering
\includegraphics[width=0.917\textwidth]{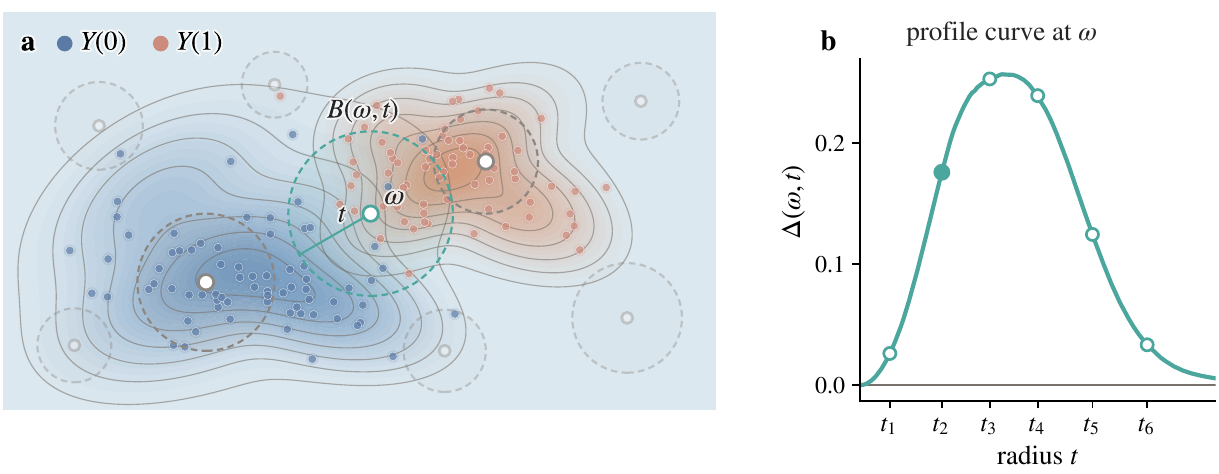}
\caption{Profile geometry. (a) Points are draws from the laws of \(Y(0)\) (blue) and \(Y(1)\) (orange), whose densities are shown by shading and contours. Each dashed circle is a metric ball \(B(\omega,t)\) around an anchor, and fainter circles are other anchor--radius cells of the inferential grid. The profile causal effect \(\Delta(\omega,t)\) is the probability that \(Y(1)\) falls in the ball minus the probability that \(Y(0)\) does. (b) The profile curve \(t\mapsto\Delta(\omega,t)\) at the anchor of the teal ball; markers are the grid radii \(t_1,\ldots,t_6\), and the filled one is the radius in (a).}
\label{fig:profile-geometry}
\end{figure}

Profile coordinates thus depend on the chosen anchors and radii, and the design becomes part of the inferential target. If anchors, radii, and trimming masks are chosen from the same observations used for score evaluation, the target process itself becomes main-sample adaptive. Rather than solving a full selective-inference problem, we separate grid construction from score evaluation: we fix the finite inferential grid through a deterministic or auxiliary design, leaving the main sample solely for cross-fitted doubly robust score evaluation~\citep{rasines2023splitting}. This design-exogenous finite-grid target matches the regime in which high-dimensional Gaussian approximation and the multiplier bootstrap have formal guarantees~\citep{chernozhukov2013gaussian}; with the anchors fixed in the same way, the bands extend to all radii (\hyperref[subsec:all_radii]{Section~\ref*{subsec:all_radii}}).

Causal interpretation also requires adjustment for confounding. Distance-based group comparison can reveal association, but it does not identify a causal effect when covariates affect both treatment and the object-valued outcome. Under conditional exchangeability and positivity, adjustment for the propensity score suffices to remove confounding~\citep{rosenbaum1983central}. We therefore embed metric profiles into the potential-outcome framework and estimate them with orthogonal doubly robust scores, drawing on semiparametric efficiency, augmented inverse-probability weighting (AIPW), cross-fitting, and random-object causal inference~\citep{hahn1998role,bang2005doubly,chernozhukov2018double,bhattacharjee2025doubly}.

Existing approaches address important but distinct pieces of the problem. Distributional causal inference gives a sharp construction in \(W_2(\mathbb R)\) by exploiting one-dimensional quantile geometry~\citep{lin2023causal}. Geodesic causal inference defines interpretable mean-level effects when geodesics and Fr\'echet means are available~\citep{kurisu2024geodesic}. Metric statistics develops distance-profile tools for random-object exploration and testing in general metric spaces~\citep{dubey2024metric}, but it does not address treatment assignment, confounding, or potential outcomes. At a fixed anchor, \(d(\omega,Y)\) is a scalar outcome, so each profile coordinate is a distributional treatment effect for that distance. Its identification, efficient influence function, and doubly robust estimation are those of a binary treatment effect~\citep{hahn1998role,chernozhukov2013counterfactual,belloni2017program}, and because the reduction uses only the metric, they hold in every separable metric space. At a single coordinate, efficiency also maximizes local power among Wald tests built from regular estimators (\hyperref[thm:pointwise-eff-power]{Proposition~\ref*{thm:pointwise-eff-power}}). The problems specific to object-valued outcomes arise when coordinates are combined: calibrating them jointly across anchors and radii, including coordinates with vanishing variance, and designing the grid on which they are compared.

We make four contributions.
\begin{enumerate}[label=(\roman*), leftmargin=*]
\item \textbf{Profile causal estimands for random objects.} We define anchor--radius profile causal effects for outcomes in separable metric spaces, turning object-valued causal questions into a geometry-indexed collection of binary potential-outcome contrasts.
\item \textbf{Doubly robust simultaneous inference.} We construct cross-fitted doubly robust scores and prove simultaneous confidence bands and max-statistic global tests on inferential grids fixed by an auxiliary design under explicit nuisance-rate conditions. For a fixed set of anchors, the bands extend to all radii, including coordinates with zero variance.
\item \textbf{Interpretation and design.} We show how simultaneous bands translate into probability statements about balls and shells, describe what finitely many profile coordinates cannot recover, and give a protocol for choosing anchors and radii and for reading primary and sensitivity designs.
\item \textbf{Empirical evidence.} We study the finite-grid procedure on five simulated outcome spaces and on two semi-synthetic benchmarks built from real object geometry. The max test has empirical size near or below the nominal level, including under confounding, detects concentration effects that mean-quantile and Fr\'echet-mean proxies miss, and has power that depends on the anchor design.
\end{enumerate}

\section{Related Work}
\label{sec:related}

\paragraph{Metric statistics and random objects.}
Our methodological starting point is metric statistics for random objects, especially distance-profile representations for metric-space-valued data~\citep{dubey2024metric}. Distance profiles describe random objects through anchor-centered metric-ball probabilities and support nonparametric exploration and testing in metric spaces. This line is object-general, but it is not a confounding-adjusted causal framework: it studies random-object laws and distance-based comparisons rather than potential outcomes under treatment assignment. We use the same metric-neighborhood coordinates, but define them for potential outcomes, estimate them with doubly robust scores, and calibrate them jointly on a finite grid fixed by an auxiliary design or, at fixed anchors, over all radii.

\paragraph{Distributional causal inference.}
Our work intersects with causal inference for distribution-valued outcomes, where existing methods exploit the one-dimensional Wasserstein geometry of quantile functions~\citep{lin2023causal}. This construction is sharp when such quantile geometry is available. Our profile effect targets a different object: the probabilities assigned by potential-outcome laws to anchor-centered metric balls. In \(W_2(\mathbb R)\), Lin-type estimands describe horizontal mean-quantile deformation, whereas the profile map describes anchor-relative changes in the law of distribution-valued outcomes.

\paragraph{Geodesic causal inference.}
Geodesic causal inference defines treatment effects through geodesics connecting Fr\'echet means of potential-outcome laws~\citep{kurisu2024geodesic}. This gives an interpretable mean-level target when the outcome space has usable geodesics and stable Fr\'echet means. Our framework differs in both target and required geometry. The profile effect map \(\Delta(\omega,t)\) is indexed by anchors and radii and can detect changes in dispersion, tails, and multimodality that leave a Fr\'echet mean unchanged. Pointwise identification requires no outcome-space geometry beyond the metric.

\paragraph{Distributional treatment effects and kernel tests.}
Inference on distributional treatment effects is well developed for scalar outcomes~\citep{chernozhukov2013counterfactual,belloni2017program,byambadalai2024distributional}, and kernel methods compare interventional outcome laws through mean embeddings, with doubly robust tests and one-step estimators~\citep{martinez2023drxkte,fawkes2024drkernel,luedtke2024hilbert,jain2026conditional}. Finite-location tests evaluate the witness function at selected points~\citep{jitkrittum2016interpretable}; \citet{zenati2026drme} calibrate a single chi-square test at witness locations learned from a local-power criterion.\footnote{\citet{zenati2026drme} (arXiv, 8 May 2026) is concurrent and independent work; an earlier version of this paper was submitted for peer review on 4 May 2026.} At a fixed anchor, the profile curve is the distribution function of the scalar distance \(d(\omega,Y)\), so our pointwise scores belong to the first of these lines. What we calibrate is different: many such effects, indexed by the anchors and radii of one metric, with a simultaneous interval for every retained coordinate rather than a single kernel statistic. Under the ball-indicator kernel, witness values are linear combinations of profile coordinates (\hyperref[app:ball_kernel]{Appendix~\ref*{app:ball_kernel}}).

\section{Preliminaries}
\label{sec:defs}

Let \((\Omega,d)\) be a separable metric space. The outcome \(Y\) and potential outcomes \(Y(0),Y(1)\) take values in \(\Omega\). For an anchor \(\omega\in\Omega\) and radius \(t\in\mathbb R_+\), the distance profile of a random object \(X\) is
\[
F_\omega(t)=\Pr\{d(\omega,X)\le t\}.
\]
If \(\Omega\) is totally bounded and \((\Omega,d^\theta)\) is of strong negative type (\hyperref[def:snt]{Definition~\ref*{def:snt}}) for some \(\theta>0\), two laws whose distance profiles agree at every anchor and every radius are equal~\citep[Proposition~1]{dubey2024metric}. This law-level property concerns the full anchor family rather than a finite grid; pointwise identification and estimation below use only the profile indicator at a fixed anchor--radius pair.

Let \(\mathcal W\subseteq\Omega\) be an anchor domain and \(\mathcal T\subseteq\mathbb R_+\) a radius domain. The population working domain is
\[
\mathcal G=\mathcal W\times\mathcal T,
\qquad
g=(\omega,t)\in\mathcal G.
\]
This continuous domain motivates the profile process. Formal simultaneous inference is carried out on finite grids and, for a fixed set of anchors, over all radii.

\begin{definition}[Grid layers]\label{def:grid_defs}
We use three nested grids, \(\mathcal G_n^{\mathrm{inf}}\subseteq\mathcal G_n^{\mathrm{cand}}\subseteq\mathcal G_n^{\mathrm{est}}\subset\mathcal G\). The \emph{estimation grid} \(\mathcal G_n^{\mathrm{est}}\) carries point estimates, visualizations, and diagnostics; the \emph{candidate grid} \(\mathcal G_n^{\mathrm{cand}}\) is the finite set of cells considered for formal inference; and the \emph{inferential grid} \(\mathcal G_n^{\mathrm{inf}}\) is the set of cells retained after pilot trimming, on which bands and tests are computed.
\end{definition}

Let \(\mathcal F_n^{\mathrm{aux}}\) denote the sigma-field generated by all auxiliary design information used to construct anchors, radii, pilot trimming masks, and inferential grids. Candidate and inferential grids may be deterministic or \(\mathcal F_n^{\mathrm{aux}}\)-measurable. The main inferential requirement is \emph{design exogeneity}: the data used for main-sample score evaluation do not choose anchors, radii, or the cells of the inferential grid. Conditionally on \(\mathcal F_n^{\mathrm{aux}}\), the inferential grid is treated as fixed.

In our experiments, trimming uses an auxiliary pilot sample with index set \(\mathcal I_{\mathrm{pilot}}\). For \(g=(\omega,t)\in\mathcal G_n^{\mathrm{cand}}\), define
\[
\hat p_{\mathrm{pilot}}(g)
=
\frac{1}{|\mathcal I_{\mathrm{pilot}}|}
\sum_{i\in\mathcal I_{\mathrm{pilot}}}
\mathds{1}\{d(\omega,Y_i)\le t\}.
\]
For a fixed \(c_p\in(0,1/2)\),
\[
\mathcal G_n^{\mathrm{inf}}
=
\{g\in\mathcal G_n^{\mathrm{cand}}:
\hat p_{\mathrm{pilot}}(g)\in[c_p,1-c_p]\}.
\]
This rule removes near-degenerate cells, whose ball probabilities are close to zero or one, and thereby stabilizes the finite target. Because it acts on marginal ball probabilities, it does not by itself bound the score variance from below; \hyperref[thm:finite_grid_inference]{Theorem~\ref*{thm:finite_grid_inference}} assumes such a bound separately, in (A4).

\section{Causal Framework and Identification}
\label{sec:framework}

\paragraph{Setup and assumptions.}
For each unit we observe \(O=(X,A,Y)\), where \(X\) is a covariate vector, \(A\in\{0,1\}\) is treatment, and \(Y\in\Omega\) is the observed random object. Let \(Y(0),Y(1)\) denote potential outcomes. For \(g=(\omega,t)\), define
\[
Z^{(a)}(g)=\mathds{1}\{d(\omega,Y(a))\le t\},
\qquad
Z(g)=\mathds{1}\{d(\omega,Y)\le t\}.
\]
We assume consistency and no interference, conditional exchangeability \(Y(a)\perp A\mid X\) for \(a\in\{0,1\}\), and positivity: for some \(\epsilon>0\), \(\epsilon\le \Pr(A=1\mid X)\le1-\epsilon\) almost surely.

\paragraph{Profile causal effect.}
The pointwise profile causal effect is
\[
\Delta(g)
=
\mathbb E\{Z^{(1)}(g)\}
-
\mathbb E\{Z^{(0)}(g)\}.
\]

The next two examples are population-level separations between the profile estimand and neighboring causal targets. In \hyperref[sec:experiments]{Section~\ref*{sec:experiments}} we instantiate them as finite-grid stress tests: the Wasserstein collapse experiment mirrors \hyperref[ex:lin-separation]{Example~\ref*{ex:lin-separation}}, and the mean-zero examples on graphs and symmetric positive definite (SPD) matrices mirror \hyperref[ex:kurisu-separation]{Example~\ref*{ex:kurisu-separation}}.

\begin{example}[A separation from mean-quantile effects]\label{ex:lin-separation}
Let \(\Omega=W_2(\mathbb R)\). Suppose
\[
Y(0)=
\begin{cases}
N(-1,1), & \text{with probability }1/2,\\
N(1,1), & \text{with probability }1/2,
\end{cases}
\qquad
Y(1)=N(0,1)\quad\text{almost surely.}
\]
For Gaussian distributions with unit variance, \(Q_{N(\mu,1)}(u)=\mu+\Phi^{-1}(u)\). Hence \(\mathbb E\{Q_{Y(0)}(u)\}=Q_{Y(1)}(u)\), so the mean-quantile contrast is zero. With anchor \(\omega=N(0,1)\), however, \(W_2(Y(1),\omega)=0\) almost surely and \(W_2(Y(0),\omega)=1\) almost surely. Thus, for every \(t\in(0,1)\), \(\Delta(\omega,t)=1\). The profile effect and the mean-quantile contrast target different aspects of causal change.
\end{example}

\begin{example}[A canonical separation from Fr\'echet-mean displacement]\label{ex:kurisu-separation}
Consider a geodesic metric space with two points \(y_-\) and \(y_+\) symmetric around a midpoint \(m\), with \(d(y_-,m)=d(y_+,m)=r>0\). Assume that \(m\) is the unique Fr\'echet mean of the symmetric two-point law. Let \(Y(0)\) put mass \(1/2\) on each of \(y_-\) and \(y_+\), while \(Y(1)=m\) almost surely. The Fr\'echet-mean displacement is zero. With anchor \(\omega=m\), however, \(d(Y(1),\omega)=0\) and \(d(Y(0),\omega)=r\), so \(\Delta(\omega,t)=1\) for every \(t\in(0,r)\). This mirrors the graph and SPD mean-zero examples of \hyperref[subsec:estimand_separation]{Section~\ref*{subsec:estimand_separation}}.
\end{example}

Let
\[
\pi(x)=\Pr(A=1\mid X=x),
\qquad
m_a(g;x)=\mathbb E\{Z(g)\mid A=a,X=x\},
\]
and write \(\eta=(\pi,m_0,m_1)\), with true value \(\eta_0\).

\begin{theorem}[Pointwise identification and orthogonal score]\label{thm:ident}
Under the causal assumptions above, for each fixed \(g\in\mathcal G\),
\[
\Delta(g)=\mathbb E\{m_1(g;X)-m_0(g;X)\}.
\]
Moreover, the uncentered AIPW score
\[
\phi(O;g,\eta)
=
m_1(g;X)-m_0(g;X)
+
\left\{
\frac{A}{\pi(X)}
-
\frac{1-A}{1-\pi(X)}
\right\}
\{Z(g)-m_A(g;X)\}
\]
satisfies \(\mathbb E\{\phi(O;g,\eta_0)\}=\Delta(g)\). Its centered version
\[
\psi(O;g)=\phi(O;g,\eta_0)-\Delta(g)
\]
is the efficient influence function for the binary profile coordinate. The score is Neyman orthogonal and doubly robust in the usual sense for an average treatment effect.
\end{theorem}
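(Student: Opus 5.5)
The plan is to reduce everything to the standard binary-outcome ATE argument, using that for fixed \(g=(\omega,t)\) the indicator \(Z^{(a)}(g)\) is a bounded scalar potential outcome. This requires only that \(\{d(\omega,Y(a))\le t\}\) be an event. Since \(y\mapsto d(\omega,y)\) is continuous on the separable metric space \(\Omega\), it is Borel, so \(Z^{(a)}(g)\) is measurable and takes values in \(\{0,1\}\).

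\emph{Identification.} By consistency, \(Z(g)=AZ^{(1)}(g)+(1-A)Z^{(0)}(g)\). Exchangeability \(Y(a)\ind A\mid X\) implies \(Z^{(a)}(g)\ind A\mid X\), because \(Z^{(a)}(g)\) is a measurable function of \(Y(a)\). Positivity makes \(m_a(g;X)\) well defined almost surely. Combining these,
\[
\mathbb E\{Z^{(a)}(g)\mid X\}=\mathbb E\{Z^{(a)}(g)\mid A=a,X\}=m_a(g;X),
\]
and iterated expectations give \(\Delta(g)=\mathbb E\{m_1(g;X)-m_0(g;X)\}\).

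\emph{Unbiasedness of the score.} Condition on \(X\). Then
\[
\mathbb E\Bigl[\frac{A}{\pi_0(X)}\{Z(g)-m_{1}(g;X)\}\Bigm| X\Bigr]=\frac{\pi_0(X)}{\pi_0(X)}\{m_1-m_1\}=0,
\]
and the same holds for the control term. Hence \(\mathbb E\,\phi(O;g,\eta_0)=\Delta(g)\).

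\emph{Double robustness.} Take nuisances \(\eta=(\pi,m_0,m_1)\) with either \(\pi=\pi_0\) or \((m_0,m_1)=(m_{0,0},m_{1,0})\). Conditioning on \(X\) gives
\[
\mathbb E\,\phi(O;g,\eta)-\Delta(g)=\mathbb E\Bigl[\frac{\pi_0-\pi}{\pi}(m_{1,0}-m_1)\Bigr]-\mathbb E\Bigl[\frac{\pi-\pi_0}{1-\pi}(m_{0,0}-m_0)\Bigr],
\]
where all functions are evaluated at \(X\). This product-form bias vanishes under either correct specification.

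\emph{Neyman orthogonality.} Differentiate the map \(r\mapsto\mathbb E\,\phi(O;g,\eta_0+r(\eta-\eta_0))\) at \(r=0\). The derivative in the \(m_a\) directions cancels, because \(\mathbb E[A/\pi_0\mid X]=1\). The derivative in the \(\pi\) direction is proportional to \(\mathbb E[\{Z-m_{A,0}\}\cdots\mid X,A]=0\). Alternatively, orthogonality follows at once from the displayed bias being second order.

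\emph{Efficient influence function.} This is the step needing most care, and I would invoke \citet{hahn1998role} rather than rederive it. The nonparametric model for \(O\) is unrestricted apart from positivity; the causal assumptions place no testable restrictions on the observed law. The target \(P\mapsto\mathbb E_P\{m_1(g;X)-m_0(g;X)\}\) depends on \(P\) only through the law of \((X,A,Z(g))\), with \(Z(g)\) a bounded binary outcome. Hahn's efficient influence function for the ATE with outcome \(Z(g)\) is exactly \(\psi(O;g)\).

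To verify this directly, I would use a one-dimensional parametric submodel \(P_s\) with score \(S=S_X+S_{A\mid X}+S_{Y\mid A,X}\). The goal is to show
\[
\frac{d}{ds}\Delta_{P_s}(g)\Big|_{s=0}=\mathbb E\{\psi S\}.
\]
The \(X\)-part of the derivative yields \(m_1-m_0-\Delta\). The outcome-part yields the weighted residual terms. The \(A\mid X\) part contributes zero on both sides, since the target does not depend on \(\pi\) and the residual has mean zero given \((A,X)\). The efficient influence function is unique in a nonparametric model, so \(\psi\) is it.

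One subtlety must be noted. Conditioning on \(Y\) through \(Z(g)\) suffices, because the functional is insensitive to directions of the conditional law of \(Y\) that are orthogonal to \(Z(g)\). This is the only point where the metric-space nature of \(Y\) enters, and it requires only the measurability already established.
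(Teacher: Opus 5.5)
Your proposal is correct and follows essentially the same route as the paper: identification via iterated expectations, conditional exchangeability and consistency; a product-form bias obtained by conditioning on \(X\), which gives both double robustness and (as a second-order remainder) Neyman orthogonality; and the efficient influence function taken from \citet{hahn1998role}. Your pathwise-derivative verification of the efficient influence function is a correct elaboration of a step the paper only cites.
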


\hyperref[thm:ident]{Theorem~\ref*{thm:ident}} applies the standard argument for a binary outcome to \(Z^{(a)}(g)\) (\hyperref[app:proofs]{Appendix~\ref*{app:proofs}}). Two of its consequences are used later. First, with estimated nuisance functions, the bias of the score is a product of propensity and outcome-regression errors, which leads to the rate conditions in \hyperref[sec:est]{Section~\ref*{sec:est}}. Second, the variance of \(\psi(O;g)\) sets the scale on which coordinates are standardized and compared.

\section{Estimation and Inference}
\label{sec:est}

We estimate \(\Delta(g)\) by evaluating the doubly robust score over a finite grid. Point estimates and visual diagnostics may use the larger grid \(\mathcal G_n^{\mathrm{est}}\). Simultaneous confidence bands and formal max-statistic tests use only the inferential grid \(\mathcal G_n^{\mathrm{inf}}\). At a fixed coordinate the score is the standard AIPW score for a binary outcome; what follows calibrates many such scores jointly on a possibly growing grid.

\subsection{Finite-Grid Inference}
\label{subsec:finite_grid}

\paragraph{Conditional-CDF nuisance learning.}
For a fixed anchor \(\omega_k\), let \(D_k=d(\omega_k,Y)\). The nuisance target is the conditional CDF
\[
m_a(\omega_k,t;x)
=
\Pr(D_k\le t\mid A=a,X=x).
\]
Estimating this object directly preserves monotonicity in the radius coordinate \(t\). The theory is compatible with any conditional distribution learner satisfying the finite-grid rate and correlation conditions below. The experiments use a Distributional Random Forest (DRF) weighted empirical CDF (ECDF) learner~\citep{cevid2022drf} and, for the five-space table, an anchor-specific kernel-weighted ECDF learner; we compare the two in \hyperref[app:learner_robustness]{Appendix~\ref*{app:learner_robustness}}.

\paragraph{Implementation summary.}
The implementation first constructs anchors, radii, and trimming masks from deterministic rules or auxiliary information. It then fits nuisance functions by cross-fitting on the main sample, evaluates the AIPW scores on \(\mathcal G_n^{\mathrm{est}}\), and computes standard errors, simultaneous bands, and max-statistic cutoffs on \(\mathcal G_n^{\mathrm{inf}}\) using a Gaussian multiplier bootstrap applied to the centered scores. Full pseudocode appears in \hyperref[app:implementation]{Appendix~\ref*{app:implementation}}.

\paragraph{Regularity.}
The results below use the following high-level conditions, stated in full in \hyperref[app:regularity_proofs]{Appendix~\ref*{app:regularity_proofs}}.
\begin{itemize}[leftmargin=*, label={}]
\item \customlabel{assumption:A1}{(A1)} \textit{Causal identification}: consistency, no interference, conditional exchangeability, and positivity hold for the profile indicators.
\item \customlabel{assumption:A2}{(A2)} \textit{Design exogeneity}: anchors, radii, and trimming masks are measurable with respect to an auxiliary sigma-field \(\mathcal F_n^{\mathrm{aux}}\), independent of the main score-evaluation sample.
\item \customlabel{assumption:A3}{(A3)} \textit{Cross-fitted nuisance control}: nuisance learners are trained off-fold, propensities are clipped, and the fold-wise conditional-CDF and propensity errors satisfy finite-grid squared-rate and product-rate conditions.
\item \customlabel{assumption:A4}{(A4)} \textit{Grid size and nondegeneracy}: \(\mathcal G_n^{\mathrm{inf}}\) is nonempty with high probability, the oracle coordinate variances are bounded away from zero, and the candidate grid satisfies \(\log^7(|\mathcal G_n^{\mathrm{cand}}|\,n)=o(n)\).
\item \customlabel{assumption:A5}{(A5)} \textit{Correlation consistency}: the correlation matrix of the centered estimated scores approximates that of the standardized oracle scores at the rate given in \hyperref[app:regularity_proofs]{Appendix~\ref*{app:regularity_proofs}}.
\end{itemize}

With the main sample split into folds, let \(\hat\phi_i(g)=\phi(O_i;g,\hat\eta^{(-j)})\) for \(i\) in fold \(j\), where \(\hat\eta^{(-j)}\) is fitted without fold \(j\). For \(g\in\mathcal G_n^{\mathrm{inf}}\), let
\[
\hat\Delta(g)=n^{-1}\sum_{i=1}^n \hat\phi_i(g),
\qquad
\sigma^2(g)=\mathrm{Var}\{\psi(O;g)\mid\mathcal F_n^{\mathrm{aux}}\}.
\]
Let
\[
N_n^{\mathrm{cand}}=|\mathcal G_n^{\mathrm{cand}}|,
\qquad
p_n=|\mathcal G_n^{\mathrm{inf}}|.
\]
All Gaussian, bootstrap, band, and max-statistic statements are over the inferential grid of dimension \(p_n\), while logarithmic rate bounds use the conservative candidate-grid envelope \(N_n^{\mathrm{cand}}\). Let \(\hat\sigma^2(g)=n^{-1}\sum_{i=1}^n\{\hat\phi_i(g)-\hat\Delta(g)\}^2\) be the empirical variance of the cross-fitted scores, and let \(\hat c_{1-\alpha}\) be the conditional \((1-\alpha)\)-quantile, given the data, of
\[
\max_{g\in\mathcal G_n^{\mathrm{inf}}}\frac{|n^{-1/2}\sum_{i=1}^n\xi_i\{\hat\phi_i(g)-\hat\Delta(g)\}|}{\hat\sigma(g)},
\]
where the \(\xi_i\) are independent standard normal.

\hyperref[cor:primitive_grid]{Corollary~\ref*{cor:primitive_grid}} in \hyperref[app:regularity_proofs]{Appendix~\ref*{app:regularity_proofs}} shows that slightly stronger nuisance rates imply (A3) and (A5).

\begin{theorem}[Finite-grid simultaneous inference]\label{thm:finite_grid_inference}
Under (A1)--(A4), on the high-probability event where the inferential grid is nonempty and nondegenerate,
\[
\sup_{g\in\mathcal G_n^{\mathrm{inf}}}
\Bigl|\sqrt n\{\hat\Delta(g)-\Delta(g)\}-n^{-1/2}\sum_{i=1}^n\psi(O_i;g)\Bigr|
=
o_P\{(\log(N_n^{\mathrm{cand}}\vee2))^{-1/2}\}.
\]
Moreover, with \(T_n^\circ=\max_{g\in\mathcal G_n^{\mathrm{inf}}}\sqrt n|\hat\Delta(g)-\Delta(g)|/\sigma(g)\),
\[
\sup_{t\in\mathbb R}
\Bigl|\Pr\bigl(T_n^\circ\le t\mid\mathcal F_n^{\mathrm{aux}}\bigr)-\Pr\Bigl(\max_{g\in\mathcal G_n^{\mathrm{inf}}}|Z_n(g)|\le t\Bigm|\mathcal F_n^{\mathrm{aux}}\Bigr)\Bigr|
\to_P0,
\]
where, given \(\mathcal F_n^{\mathrm{aux}}\), \(Z_n\) is a centered Gaussian vector whose covariance is the correlation matrix of \(\{\psi(O;g)\}_{g\in\mathcal G_n^{\mathrm{inf}}}\).
\end{theorem}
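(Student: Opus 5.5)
The argument works conditionally on \(\mathcal F_n^{\mathrm{aux}}\) throughout. By (A2), the inferential grid is then a fixed finite set of size \(p_n\le N_n^{\mathrm{cand}}\), and the main sample is still i.i.d. There are two parts: a uniform linearization, and then a high-dimensional Gaussian approximation for the oracle max statistic.

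\emph{Step 1 (linearization).} For \(i\) in fold \(j\), I decompose
\[
\hat\phi_i(g)-\phi(O_i;g,\eta_0)=R_{1,i}(g)+R_{2,i}(g).
\]
The term \(R_{1,i}\) is the empirical-process part, namely the deviation minus its conditional mean given the training folds. The term \(R_{2,i}\) is that conditional mean. The doubly robust identity from \hyperref[thm:ident]{Theorem~\ref*{thm:ident}} gives the bias
\[
\mathbb E\{R_{2}(g)\mid \hat\eta^{(-j)}\}=\sum_{a}\pm\mathbb E\Bigl[\frac{\{\pi_0-\hat\pi\}\{m_{a,0}-\hat m_a\}}{\text{(clipped)}\ \hat\pi\text{ or }1-\hat\pi}\Bigr],
\]
which is bounded by \(\epsilon^{-1}\sup_g\|\hat\pi-\pi_0\|_2\|\hat m_a(g)-m_{a,0}(g)\|_2\). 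The finite-grid product-rate condition in (A3) makes this \(o_P(n^{-1/2}\log^{-1/2}N_n^{\mathrm{cand}})\) uniformly in \(g\).

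For the centered term \(R_{1}\), I condition on the off-fold nuisances. The summands are then independent, mean zero, and bounded by \(C/\epsilon\), since indicators, CDFs and clipped propensities are all bounded. A maximal inequality for bounded vectors (Bernstein plus a union bound over the at most \(N_n^{\mathrm{cand}}\) cells) gives
\[
\max_g |n^{-1/2}\textstyle\sum R_{1,i}| \lesssim \bar r_n\sqrt{\log N}+n^{-1/2}\log N.
\]
Here \(\bar r_n\) is the uniform \(L_2\) nuisance error. The squared-rate condition in (A3), that \(\bar r_n^2\log^2 N\to0\), together with \(\log^7(Nn)=o(n)\) from (A4), makes this \(o_P(\log^{-1/2}N)\). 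There are finitely many folds, so summing over them gives the first display.

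\emph{Step 2 (Gaussian approximation).} I write \(T_n^\circ=\max_g|n^{-1/2}\sum_i\psi(O_i;g)/\sigma(g)+\rho_n(g)|\), where \(\max_g|\rho_n(g)|=o_P(\log^{-1/2}N)\) because \(\sigma(g)\) is bounded below by (A4). The standardized oracle scores \(\psi/\sigma\) are i.i.d., mean zero and unit variance, with envelope bounded by \(C/(\epsilon\underline\sigma)\). Theorem 2.1 of \citet{chernozhukov2013gaussian}, or its bounded-envelope corollary, then applies to the max of the absolute values of the \(2p_n\)-vector \((\pm\psi/\sigma)\). Under the condition \(\log^7(p_nn)/n\to0\) from (A4), the Kolmogorov distance between \(\max_g|n^{-1/2}\sum\psi/\sigma|\) and \(\max_g|Z_n(g)|\) vanishes, conditionally on \(\mathcal F_n^{\mathrm{aux}}\).

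To pass to \(T_n^\circ\), I use the anti-concentration inequality for Gaussian maxima with unit variances (Nazarov's bound),
\[
\sup_t\Pr(|\max|Z_n|-t|\le\delta)\lesssim\delta\sqrt{\log p_n}.
\]
I take \(\delta=\delta_n\) with \(\delta_n\sqrt{\log N}\to0\) and \(\Pr(\max|\rho_n|>\delta_n)\to0\). The two errors combine through a standard sandwich argument, and the conditional-in-probability statement follows.

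\emph{Main obstacle.} The delicate step is Step 1. The rate must be uniform over a growing grid, and I must check that the \(\sqrt{\log N}\) inflation from the union bound is exactly absorbed by the stated rate conditions. I also have to handle cross-fitting properly, conditioning fold by fold so that the bias and centered terms decouple. All of this is done on the event that the grid is nonempty and nondegenerate, and that event has probability tending to one by (A4).
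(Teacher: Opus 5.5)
Your proposal is correct and follows essentially the same route as the paper's proof. You split the cross-fitting error fold by fold into the doubly robust product bias and a centered term controlled by Bernstein plus a union bound over the candidate grid, then apply the high-dimensional Gaussian approximation to the signed $2p_n$-vector of standardized oracle scores, and transfer it to $T_n^\circ$ via Gaussian anti-concentration and a sandwich argument, which is what the paper's perturbation-transfer lemma does. Two small slips: the bias denominator is the clipped fitted propensity, so the constant should be $\epsilon_{\rm clip}^{-1}$ rather than $\epsilon^{-1}$, and the anti-concentration factor should read $\sqrt{\log(2p_n)}$ (from the $2p_n$-dimensional signed vector) so that it does not vanish when $p_n=1$.
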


\begin{corollary}[Simultaneous bands and global max test]\label{cor:main_scb_global}
Under (A1)--(A5), on the same event,
\[
\Pr\Bigl\{|\hat\Delta(g)-\Delta(g)|\le\frac{\hat c_{1-\alpha}\hat\sigma(g)}{\sqrt n}\ \text{for all }g\in\mathcal G_n^{\mathrm{inf}}\Bigm|\mathcal F_n^{\mathrm{aux}}\Bigr\}
\to_P1-\alpha .
\]
If \(\Delta(g)=0\) for all \(g\in\mathcal G_n^{\mathrm{inf}}\), the test that rejects when \(\max_{g\in\mathcal G_n^{\mathrm{inf}}}\sqrt n|\hat\Delta(g)|/\hat\sigma(g)>\hat c_{1-\alpha}\) therefore has conditional size tending to \(\alpha\) in probability.
\end{corollary}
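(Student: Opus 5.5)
The plan is to derive the corollary from Theorem~\ref{thm:finite_grid_inference} in three steps. First, replace the oracle scale \(\sigma(g)\) by \(\hat\sigma(g)\) in the max statistic. Second, show that the multiplier bootstrap quantile \(\hat c_{1-\alpha}\) consistently estimates the \((1-\alpha)\)-quantile of \(\max_g|Z_n(g)|\). Third, combine the two through an anti-concentration argument. Throughout we work conditionally on \(\mathcal F_n^{\mathrm{aux}}\), on the high-probability event where the inferential grid is nonempty and nondegenerate. By (A2), this conditioning fixes the grid and leaves the main sample i.i.d.

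\textbf{Step 1: variance estimation.} We aim to show that
\[
\max_{g\in\mathcal G_n^{\mathrm{inf}}}|\hat\sigma(g)/\sigma(g)-1|=o_P\{(\log(N_n^{\mathrm{cand}}\vee2))^{-1}\}.
\]
Write \(\hat\phi_i-\hat\Delta\) as \(\psi_i\), plus the nuisance perturbation \(\hat\phi_i-\phi_i(\eta_0)\), minus the centering error \(\hat\Delta-\Delta\). The cross term and the squared perturbation are controlled by the fold-wise squared-rate conditions in (A3); clipping gives bounded weights. The term \(n^{-1}\sum_i\psi_i^2-\sigma^2\) is controlled by a maximal inequality for bounded variables over \(p_n\le N_n^{\mathrm{cand}}\) coordinates. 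The variance lower bound in (A4) turns this into a ratio bound. With that bound in hand, define
\[
\hat T_n=\max_g\sqrt n|\hat\Delta(g)-\Delta(g)|/\hat\sigma(g).
\]
Then \(|\hat T_n-T_n^\circ|\le T_n^\circ\max_g|\sigma/\hat\sigma-1|\). Since \(T_n^\circ=O_P(\sqrt{\log N_n^{\mathrm{cand}}})\) by the Gaussian approximation and the Gaussian maximal inequality, we get \(|\hat T_n-T_n^\circ|=o_P(\log^{-1/2}N_n^{\mathrm{cand}})\).

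\textbf{Step 2: bootstrap consistency.} Conditionally on the data, the bootstrap statistic is the maximum of a centered Gaussian vector. Its covariance is the empirical correlation matrix of the centered cross-fitted scores, because each coordinate is divided by \(\hat\sigma(g)\). By (A5), the entrywise max distance between this matrix and the correlation matrix of \(\{\psi(O;g)\}\) is \(o_P(\log^{-2}N_n^{\mathrm{cand}})\). The Gaussian comparison inequality of \citet{chernozhukov2013gaussian} then gives \(\sup_t\) Kolmogorov distance \(\to_P0\) between the bootstrap maximum and \(\max_g|Z_n(g)|\). This step needs unit diagonal variances, which standardization provides. It follows that, for any \(\nu_n\downarrow0\) slowly, with probability tending to one,
\[
c_{1-\alpha-\nu_n}\le\hat c_{1-\alpha}\le c_{1-\alpha+\nu_n},
\]
where \(c_\cdot\) denotes quantiles of \(\max|Z_n|\) given \(\mathcal F_n^{\mathrm{aux}}\).

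\textbf{Step 3: combination.} We have \(\Pr(\hat T_n\le\hat c_{1-\alpha}\mid\mathcal F_n^{\mathrm{aux}})\). By Step 1, we may replace \(\hat T_n\) by \(T_n^\circ\) at the cost of an \(\epsilon_n\log^{-1/2}\) perturbation of the threshold. The Gaussian anti-concentration (Nazarov) bound \(\Pr(|\max|Z_n|-t|\le\epsilon)\lesssim\epsilon\sqrt{\log p_n}\) makes this perturbation negligible. Theorem~\ref{thm:finite_grid_inference} then passes to \(\max|Z_n|\), and the quantile sandwich from Step 2 together with continuity of the Gaussian maximum yields the limit \(1-\alpha\). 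The event \(\hat T_n\le\hat c_{1-\alpha}\) is exactly the band event. Under the global null, \(\hat\Delta-\Delta=\hat\Delta\), so rejection is the complement of the band event and the size tends to \(\alpha\).

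\textbf{Main obstacle.} The hardest step is Step 1, specifically obtaining a logarithmically fast uniform rate for \(\hat\sigma/\sigma\) from the cross-fitted nuisance errors. If the rates in (A3) alone do not suffice, we would instead absorb the variance-ratio error into (A5)-type conditions, or invoke the primitive rates of Corollary~\ref{cor:primitive_grid}.
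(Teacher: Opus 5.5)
Your proposal is correct and follows essentially the paper's route: Lemma~\ref{lem:primitive_covariance} and \eqref{eq:correct_scale_rate} handle studentization, (A5) plus Gaussian comparison handles the multiplier bootstrap, and anti-concentration with a quantile bracketing finishes the argument. The Step~1 obstacle you flag does not arise, because the squared-rate condition in (A3) already gives \((r_{m,n}+r_{\pi,n})L_n=o_P(1)\); together with \(L_n^{3/2}/\sqrt n\to0\), this makes your \(O_P(\rho_n+\sqrt{L_n/n})\) variance error \(o_P(L_n^{-1})\), exactly as in the paper, so no appeal to Corollary~\ref{cor:primitive_grid} is needed.
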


\hyperref[thm:finite_grid_inference]{Theorem~\ref*{thm:finite_grid_inference}} and \hyperref[cor:main_scb_global]{Corollary~\ref*{cor:main_scb_global}} apply the standard argument for maxima of many orthogonal scores conditionally on \(\mathcal F_n^{\mathrm{aux}}\); by (A2), the inferential grid is then fixed (\hyperref[app:regularity_proofs]{Appendix~\ref*{app:regularity_proofs}}). Orthogonality and cross-fitting give the linearization under (A1) and (A3)~\citep{chernozhukov2018double}. Under (A4), the high-dimensional Gaussian approximation of \citet{chernozhukov2013gaussian} gives the law of the maximum; under (A5), Gaussian comparison and anti-concentration~\citep{chernozhukov2013gaussian,cck2015comparison} carry it to the multiplier bootstrap and its quantile.

In finite samples the inferential grid may be empty, or an estimated standard deviation on it may vanish. In that case the procedure returns the trivial interval \([-1,1]\) for every retained coordinate and does not reject; under (A3) and (A4) this happens with probability tending to zero.

\subsection{Design and Interpretation}
\label{subsec:design_interp}

\paragraph{Designing anchors and radii.}
The results of \hyperref[subsec:finite_grid]{Section~\ref*{subsec:finite_grid}} condition on the design: the anchors and their number, the radii, and any trimming rule are fixed before the main sample is used. Because they determine which ball probabilities are compared, choosing them is part of the scientific question. A scientific reference, such as a reference activity distribution, a connectivity template, or a prototype graph, can serve as an anchor directly; otherwise, a medoid-start MaxMin rule spreads anchors over the auxiliary objects using only their geometry. Radii can be scientifically meaningful tolerances or quantiles of auxiliary distances. For testing, the relevant signal at \(g\) is \(|\Delta(g)|/\sigma(g)\), where \(\sigma^2(g)\) is the variance of the efficient influence function; adding coordinates can raise the largest signal, while the critical value of the maximum can only increase (\hyperref[app:geometry_anchor_theory]{Appendix~\ref*{app:geometry_anchor_theory}}). \hyperref[subsec:design_rules]{Appendix~\ref*{subsec:design_rules}} gives detailed selection rules.

We recommend declaring one primary design and treating further independently fixed designs as sensitivity analyses: a rejection in the primary design is the main finding, and a rejection only in a sensitivity design is a sensitivity finding. Designs can legitimately disagree, because different coordinates ask different questions; a coordinate shared by two designs has the same population effect in both, although its reported significance can differ with the critical value. A joint claim across designs requires calibrating their union with a common critical value. We recommend reporting the global max test first and then, for each highlighted coordinate, the metric, the anchor or its nearest observed exemplar, the radius and its auxiliary percentile, and the estimate with its simultaneous interval.

\paragraph{Probability-scale conclusions.}
Let \([L(g),U(g)]\) denote a simultaneous band. On the event that it covers every coordinate, the following statements hold together; they are projections of one confidence set and need no further multiplicity correction. The sign of \(\Delta(g)\) is determined wherever the band excludes zero. For radii \(s<t\) at the same anchor, the \emph{shell effect} \(\Delta(\omega,t)-\Delta(\omega,s)\), the effect on the probability that \(d(\omega,Y)\) lies in \((s,t]\), is enclosed by \([L(\omega,t)-U(\omega,s),\,U(\omega,t)-L(\omega,s)]\). The largest absolute effect over a set of radii is at most the largest absolute band endpoint over that set; when this bound is below a tolerance \(\tau\), it establishes equivalence at \(\tau\), whereas failure to reject zero does not. On a finite grid these statements concern the retained coordinates (\hyperref[app:effect_outputs]{Appendix~\ref*{app:effect_outputs}}). A positive \(\Delta(\omega,t)\) means that treatment raises the potential-outcome probability of the ball, not that individual outcomes move toward \(\omega\), and a pattern of coordinates need not identify a unique mechanism (Appendices~\ref{app:effect_outputs} and~\ref{app:geometry_anchor_theory}).

\subsection{All Radii at Fixed Anchors}
\label{subsec:all_radii}

When the anchor set is small and fixed, the radius grid can be dropped. Let \(\omega_1,\ldots,\omega_K\) be anchors fixed by the auxiliary design. Write \(\theta_h\) for the arm-specific profiles \(F_{ak}(t)=\Pr\{d(\omega_k,Y(a))\le t\}\) and their contrasts \(\Delta_k(t)=F_{1k}(t)-F_{0k}(t)\), indexed jointly by \(h\in\mathcal H\) (arm or contrast, anchor, and radius \(t\ge0\)), and let \(\hat\theta_h\) be their cross-fitted AIPW estimators. Distance distributions may have atoms, and some coordinates may have zero efficient-score variance, as in \hyperref[ex:lin-separation]{Example~\ref*{ex:lin-separation}}.

The results hold under one of three training regimes, stated in full in \hyperref[ass:complete_learning]{Assumption~\ref*{ass:complete_learning}}: (i) the propensity score is known and the fitted conditional CDFs converge to a common, possibly incorrect, limit; (ii) the propensity score follows a correctly specified finite-dimensional model fitted by maximum likelihood on balanced cross-fitting folds, with the same stability requirement on the conditional CDFs; (iii) both nuisance functions are consistent and the product of their errors is \(o_P(n^{-1/2})\). Let \(u_h\) be the centered AIPW score at the limit of the fitted conditional CDFs. The influence function \(f_h\) is \(u_h\) in regimes (i) and (iii), and is efficient in regime (iii); in regime (ii) it is
\[
f_h=u_h-C_hI^{-1}S,
\qquad
C_h=\mathbb E(u_hS^{\mathsf T}),
\qquad
I=\mathbb E(SS^{\mathsf T}),
\]
where \(S\) is the score of the propensity model at the true parameter. The correction vanishes when the fitted conditional CDFs are consistent; other fold designs lead to different covariances (\hyperref[app:complete_radius]{Appendix~\ref*{app:complete_radius}}).

Inference uses calibration rows \(\hat f_i(h)\) that estimate \(f_h(O_i)\): the centered cross-fitted scores in regimes (i) and (iii), from which the estimated propensity-score projection is subtracted in regime (ii). Because some coordinates may have zero or vanishing variance, the bands use two fixed floors \(s_0,c_{\min}>0\). Let \(\hat s_h=\max\{(n^{-1}\sum_{i=1}^n\hat f_i(h)^2)^{1/2},s_0\}\), let \(\hat c_{1-\alpha}\) now denote the conditional \((1-\alpha)\)-quantile, given the data, of
\[
\sup_{h\in\mathcal H}\frac{|n^{-1/2}\sum_{i=1}^n\xi_i\hat f_i(h)|}{\hat s_h},
\]
where the \(\xi_i\) are independent standard normal, and let \(\hat c=\max(\hat c_{1-\alpha},c_{\min})\). The law is fixed, and probabilities are conditional on the auxiliary design.

\begin{theorem}[Simultaneous bands over all radii]\label{thm:all_radii}
Suppose \hyperref[ass:complete_learning]{Assumption~\ref*{ass:complete_learning}} holds and the calibration rows match the training regime. Then \(\sqrt n(\hat\theta-\theta)\rightsquigarrow G_f\) in \(\ell^\infty(\mathcal H)\), where \(G_f\) is a tight centered Gaussian process with covariance \(\mathbb E(f_hf_{h'})\), and
\[
\liminf_{n\to\infty}\Pr\Bigl\{\sup_{h\in\mathcal H}\frac{\sqrt n\,|\hat\theta_h-\theta_h|}{\hat s_h}\le\hat c\Bigr\}\ge1-\alpha .
\]
If \(G_f\not\equiv0\), this probability with \(\hat c_{1-\alpha}\) in place of \(\hat c\) converges to \(1-\alpha\); if \(G_f\equiv0\), the displayed probability converges to one.
\end{theorem}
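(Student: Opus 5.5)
The plan is to prove the weak convergence first, by reducing $\sqrt n(\hat\theta-\theta)$ to an empirical process indexed by a Donsker class plus a uniformly negligible remainder, and then to obtain the band statement from a multiplier-bootstrap argument with the floors $s_0,c_{\min}$ absorbing degenerate coordinates.

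\textbf{Linearization.} Work conditionally on $\mathcal F_n^{\mathrm{aux}}$, so that the anchors $\omega_1,\ldots,\omega_K$ are fixed. For fold $j$ and each $h$, split $\hat\theta_h-\theta_h$ into three pieces: the oracle empirical mean of $u_h$; the empirical-process term $(\mathbb P_{n,j}-P)\{\phi(\cdot;h,\hat\eta^{(-j)})-\phi(\cdot;h,\bar\eta)\}$, where $\bar\eta$ is the limit nuisance; and the drift term $P\{\phi(\cdot;h,\hat\eta^{(-j)})-\phi(\cdot;h,\bar\eta)\}$.

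The indexing class $\{z\mapsto \mathds 1\{d(\omega_k,y)\le t\}: t\ge0\}$ is a VC class of monotone indicators, for each of the finitely many anchors. The score $\phi$ is a fixed Lipschitz combination of these indicators, of the bounded monotone functions $t\mapsto m_a(\omega_k,t;x)$, and of the clipped propensity. Bounded monotone families in $t$ have finite bracketing entropy, so the score class is Donsker with a square-integrable envelope, using positivity and clipping. The two remainders are then handled as follows.

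\begin{itemize}
\item \emph{Empirical-process term.} Conditionally on the training folds, it is $o_P(n^{-1/2})$ uniformly in $h$. This follows from a maximal inequality for the difference class, together with $L_2$ convergence of $\hat\eta^{(-j)}$ to $\bar\eta$ uniformly in $t$, which the stability requirement of Assumption~\ref{ass:complete_learning} provides.
\item \emph{Drift term, regime (i).} With the propensity known, the drift vanishes identically.
\item \emph{Drift term, regime (iii).} The drift is the usual product $\int(\hat\pi-\pi)(\hat m_a-m_a)$, which is $o_P(n^{-1/2})$ uniformly in $t$ by assumption.
\item \emph{Drift term, regime (ii).} Take a first-order expansion in the propensity parameter, $\partial_\beta P\phi\,(\hat\beta^{(-j)}-\beta_0)$, and insert the MLE expansion $\hat\beta^{(-j)}-\beta_0=I^{-1}\mathbb P_{n,-j}S+o_P(n^{-1/2})$. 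Balanced folds let the off-fold averages recombine into the full-sample average. One identifies $\partial_\beta P\phi=-C_h$ via the generalized information equality, since $u_h$ depends on $\beta$ only through $\pi$. This produces the correction $-C_hI^{-1}S$.
\end{itemize}

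The conclusion is $\sqrt n(\hat\theta-\theta)=\mathbb G_n f_h+o_P(1)$ in $\ell^\infty(\mathcal H)$. Since $\{f_h\}$ is Donsker, being a Donsker class plus a finite-dimensional linear term, this gives the stated limit $G_f$.

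\textbf{Bootstrap and bands.} Show that the calibration rows satisfy $\sup_h n^{-1}\sum_i\{\hat f_i(h)-f_h(O_i)\}^2=o_P(1)$, by the same uniform nuisance convergence. Then conditional multiplier CLT (van der Vaart–Wellner, Thm.~2.9.6) gives $n^{-1/2}\sum\xi_i\hat f_i\rightsquigarrow G_f$ conditionally in probability. In addition, $\hat s_h\to s_h:=\max(\|f_h\|_2,s_0)$ uniformly, because the empirical second moments of the rows converge uniformly over a Glivenko–Cantelli class. By continuous mapping, $\sup_h|\cdot|/\hat s_h$ converges in both the original and the bootstrap worlds to $M=\sup_h|G_f(h)|/s_h$. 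Two cases then remain.

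\begin{itemize}
\item \emph{$G_f\not\equiv0$.} $M$ is the supremum of a nondegenerate tight Gaussian process with bounded weights $1/s_h$. Its law is continuous on $(0,\infty)$, by Tsirelson / the anti-concentration of Gaussian suprema, and places no atom at the $(1-\alpha)$-quantile. So $\hat c_{1-\alpha}\to_P c_{1-\alpha}(M)$, and the probability converges to $1-\alpha$. Replacing $\hat c_{1-\alpha}$ by $\hat c\ge\hat c_{1-\alpha}$ gives the $\liminf$ bound.
\item \emph{$G_f\equiv0$.} The statistic is $o_P(1)$ while $\hat c\ge c_{\min}>0$, so the probability tends to one.
\end{itemize}

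\textbf{Main obstacle.} I expect two steps to be hardest. The first is regime (ii), where the stated correction $C_hI^{-1}S$ must come out exactly under cross-fitting; the fold-wise MLEs must combine into a single full-sample score average, and this is why balanced folds are needed. The second is the uniformity in $t$ over $[0,\infty)$ in the presence of atoms. For the latter I would avoid continuity in $t$ entirely and rely on bracketing by monotonicity: brackets are built at quantile points of the distance laws, treating left limits separately so that atoms are included.
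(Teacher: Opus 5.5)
Your architecture matches the paper's. You use the same split into the oracle average of $u_h$, a fold-wise centered perturbation, and a drift, and the same regime-by-regime drift analysis. Your generalized-information-equality derivation of $\partial_\beta P\phi=-C_h$ is a valid substitute for the paper's direct computation, and your balanced-fold recombination is its $a_{n,\ell}=1$ argument. You also use monotone-path entropy for uniformity over $t\in[0,\infty)$ with atoms, and the floors $s_0,c_{\min}$ for the degenerate case. The linearization and process limit go through as you describe. In regime (ii) you should add that evaluating the $\beta$-derivative at $\hat m_j$ rather than $\bar m$ costs $O_P(r_n\|\hat\beta_j-\beta_0\|)=o_P(n^{-1/2})$, and that the Taylor remainder is uniform in $h$.

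The gap is the bootstrap transfer. From $\sup_h n^{-1}\sum_i\{\hat f_i(h)-f_h(O_i)\}^2=o_P(1)$ you pass directly to $n^{-1/2}\sum_i\xi_i\hat f_i\rightsquigarrow G_f$ via Theorem 2.9.6 of van der Vaart--Wellner. That theorem concerns a fixed Donsker class, so it covers only the oracle rows $f_h(O_i)$. For the estimated rows you need $\sup_h|n^{-1/2}\sum_i\xi_i\{\hat f_i(h)-f_h(O_i)\}|\to 0$ in conditional probability. Uniformly small conditional variances do not give this over an uncountable index set: for unrestricted row vectors with $n^{-1}\sum_iv_i^2\le\eta^2$, the Gaussian supremum can be of order $\eta\sqrt n$.

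The paper flags exactly this point and closes it by chaining under the empirical measure. Conditionally on the data, let $e_{j,h}$ be the difference between the frozen fold-$j$ score and the limiting score, and let $\eta_{n,j}=\sup_h(P_{n,j}e_{j,h}^2)^{1/2}$. Then
\[
E_\xi\sup_h\Bigl|n_j^{-1/2}\sum_{i\in\mathcal I_j}\xi_ie_{j,h}(O_i)\Bigr|\le\eta_{n,j}+C\int_0^{2\eta_{n,j}}\sqrt{\log(C/u)}\,du=o_P(1).
\]
This requires the monotone-path entropy bound to hold uniformly over probability measures, so that it applies to $P_{n,j}$. You state only bracketing under $P$, at quantile points of the population distance laws. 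The same uniform-entropy step is also what turns the population bound $\sup_hPe_{j,h}^2\to_P 0$ into the empirical bound $\sup_hP_{n,j}e_{j,h}^2\to_P 0$, which you attribute to ``the same uniform nuisance convergence.'' Your bracketing construction works for any measure, so the repair is available from your own tools, but as written the step does not follow.

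Two smaller points. In regime (ii) the rows also involve $\hat C_h$, $\hat I$ and $\tilde S_i$. Their effect is a uniformly $o_P(1)$ coefficient times a fixed-dimensional Gaussian vector, which is easy but must be checked. Also, concluding $\hat c_{1-\alpha}\to_P c_{1-\alpha}(M)$ needs the law of $M$ to be strictly increasing at its quantile, not merely atomless there; the paper proves strict increase by log-concavity. The coverage limit itself needs only continuity.
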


The theorem requires no coordinate to have positive variance; its proof is in \hyperref[app:complete_radius]{Appendix~\ref*{app:complete_radius}}. The two floors play different roles: \(s_0\) keeps the studentization stable where a variance vanishes, and \(c_{\min}\) keeps the cutoff from collapsing when the limiting process is degenerate. With \(s_0\) alone, the coverage can fall to \((1-\alpha)^2\) (\hyperref[app:complete_radius]{Appendix~\ref*{app:complete_radius}}).

A conditional empirical CDF weighted by a fixed positive kernel satisfies the stability condition of regimes (i) and (ii) even when its limit is not the true conditional CDF, and cell-wise empirical estimators on a finite covariate space satisfy regime (iii) (Propositions~\ref{prop:fixed_kernel_stability} and~\ref{prop:finite_strata}). Such fits are step functions, so the supremum over radii is computed exactly at finitely many jump points. The experiments in \hyperref[sec:experiments]{Section~\ref*{sec:experiments}} use the finite-grid procedure.

\section{Experiments}
\label{sec:experiments}

We organize the experiments around four questions:
\begin{enumerate}[label=(Q\arabic*), leftmargin=*]
\item Calibration: does the finite-grid max test hold its level on object spaces as different as distributions, covariance matrices, graphs, and trees?
\item Estimand separation: which changes does the profile test detect that mean-quantile and Fr\'echet-mean summaries miss, and which do those summaries detect better?
\item Confounding: does doubly robust adjustment keep the test at its level when covariates drive both treatment and outcome?
\item Real geometry: does the procedure stay calibrated and informative when the outcomes inherit the geometry of real connectomes and networks?
\end{enumerate}
\hyperref[subsec:calibration]{Section~\ref*{subsec:calibration}} addresses (Q1), \hyperref[subsec:estimand_separation]{Section~\ref*{subsec:estimand_separation}} addresses (Q2) and (Q3), and \hyperref[subsec:real_geometry]{Section~\ref*{subsec:real_geometry}} addresses (Q4).

\subsection{Finite-Grid Calibration and Convergence}
\label{subsec:calibration}

We first ask whether the finite-grid max test of \hyperref[subsec:finite_grid]{Section~\ref*{subsec:finite_grid}} is calibrated in finite samples once the object-space problem is reduced to retained coordinates. The diagnostic uses five spaces: Euclidean vectors, Wasserstein quantile outcomes, \(3\times3\) SPD matrices, graphs, and rooted trees compared through a tree dissimilarity (\hyperref[app:simulation_details]{Appendix~\ref*{app:simulation_details}}).

In all five spaces, \(X\sim N(0,I_3)\), treatment follows a bounded logistic propensity score, and the alternative applies a fixed object-space shift. Auxiliary design information fixes anchors, radii, and trimming masks before main-sample score evaluation, and the formal statistic is evaluated only on \(\mathcal G_n^{\mathrm{inf}}\):
\[
T_n^{\max}
=
\max_{g\in\mathcal G_n^{\mathrm{inf}}}
\frac{\sqrt n|\hat\Delta(g)|}{\hat\sigma(g)}.
\]

\hyperref[tab:main_finite_grid_calibration]{Table~\ref*{tab:main_finite_grid_calibration}} shows conservative or near-nominal null calibration across the five spaces. Medoid MaxMin spreads anchors using only the geometry of the auxiliary objects; Guarded Shell-Signal also uses pilot estimates of the standardized signal within distance shells (\hyperref[tab:anchor_strategy_catalog]{Table~\ref*{tab:anchor_strategy_catalog}}). Across the two designs, null rejection stays below \(0.05\) in every space, and power differs only for trees (\(0.900\) versus \(0.944\)); in the insurance benchmark below, power differs more between designs.

\begin{table}[t]
\centering
\caption{Five-space finite-grid calibration at \(n=3200\) (kernel-weighted ECDF learner, \(500\) repetitions per cell). Size, Power: rejection rates of the level-\(0.05\) max test; SCB: null coverage of the \(90\%\) simultaneous band on \(\mathcal G_n^{\mathrm{inf}}\).}
\label{tab:main_finite_grid_calibration}
\small
\setlength{\tabcolsep}{0pt}
\begin{tabular*}{0.8\textwidth}{@{\extracolsep{\fill}}lcccccc@{}}
\toprule
& \multicolumn{3}{c}{Medoid MaxMin} & \multicolumn{3}{c}{Guarded Shell-Signal} \\
\cmidrule(lr){2-4}\cmidrule(lr){5-7}
Space & Size & Power & SCB & Size & Power & SCB \\
\midrule
Euclidean & 0.028 & 1.000 & 0.936 & 0.040 & 1.000 & 0.910 \\
Wasserstein & 0.022 & 1.000 & 0.956 & 0.042 & 1.000 & 0.922 \\
SPD & 0.012 & 1.000 & 0.966 & 0.034 & 1.000 & 0.924 \\
Graph & 0.042 & 1.000 & 0.922 & 0.044 & 1.000 & 0.896 \\
Tree & 0.040 & 0.900 & 0.924 & 0.032 & 0.944 & 0.920 \\
\bottomrule
\end{tabular*}
\end{table}

We then increase the sample size in a separate bivariate Euclidean design under the DRF weighted ECDF learner at \(n=3200,6400,12800\). Under the null, the max test rejects at \(0.046\), \(0.052\), and \(0.056\), while simultaneous-band coverage remains near the nominal \(95\%\).

Under the controlled alternative, rejection is already saturated at \(1.000\), so the informative trend is not increasing power but decreasing estimation error and band width. Weighted \(L^2\) error falls from \(0.0109\) to \(0.0051\), sup-norm error from \(0.0317\) to \(0.0149\), and mean band width from \(0.0806\) to \(0.0378\). Errors are measured against a Monte Carlo approximation of the true profile from \(100{,}000\) oracle draws. \hyperref[app:simulation_details]{Appendix~\ref*{app:simulation_details}} gives this design, and Figures~\ref{fig:app_convergence_diagnostics} and~\ref{fig:app_null_calibration_by_n} plot these results.

The empirical nuisance diagnostics of \hyperref[app:learner_robustness]{Appendix~\ref*{app:learner_robustness}} decrease with \(n\): \(\hat r_m\), the largest out-of-fold root mean squared error of the conditional-CDF fits, falls from \(0.193\) to \(0.120\), and the squared-rate diagnostic from \(0.796\) to \(0.305\). The scaled remainder also decreases, from \(1.554\) to \(1.058\), but remains of order one at these sample sizes.

\subsection{Estimand Separation and Confounding Adjustment}
\label{subsec:estimand_separation}

We next ask whether the profile estimand records causal changes that neighboring summaries may miss. The separating experiments are built from two controlled alternatives: one changes horizontal quantile location, and one collapses an object law toward a central prototype while keeping the mean-quantile contrast weak. In these experiments and in the confounding experiments below, \(n\) is the total sample size; half of each sample is used for nuisance fitting and score evaluation, and the rest for anchor design and trimming.

In the Lin-style location-shift positive control, the mean-quantile proxy rejects with rate \(1.000\), while the DRF profile test rises from \(0.470\) at \(n=200\) to \(0.855\) at \(n=400\). In the Wasserstein collapse setting, which mirrors \hyperref[ex:lin-separation]{Example~\ref*{ex:lin-separation}}, the profile test rejects with rate \(1.000\) at both sample sizes, while the mean-quantile proxy rejects at rates of only \(0.040\)--\(0.045\) (\hyperref[fig:main_estimand_confounding_drf]{Figure~\ref*{fig:main_estimand_confounding_drf}}). In the graph and SPD mean-zero examples, which mirror \hyperref[ex:kurisu-separation]{Example~\ref*{ex:kurisu-separation}}, finite-sample profile rejection reaches \(0.980\) and \(0.990\) at \(n=400\), while the Fr\'echet-mean proxy remains near nominal (\hyperref[fig:app_related_work]{Figure~\ref*{fig:app_related_work}}). These results illustrate the estimand distinction built into the framework: profile effects target probability shifts over metric neighborhoods, not only horizontal quantile deformation or Fr\'echet-mean displacement.

The confounding experiments keep the profile null true while making covariates affect both treatment and object-valued outcomes. In a confounded Wasserstein null at \(n=400\), the doubly robust profile test rejects at \(0.030\), whereas a naive metric two-sample test, which targets the observed-group difference, rejects at \(0.674\). In the graph confounded null, the corresponding rates are \(0.040\) and \(0.236\). The gap reflects the difference between the causal null and the observational null: a metric comparison of the observed groups detects distributional differences, whereas the causal null requires adjustment for confounding.

\begin{figure}[t]
\centering
\includegraphics[width=0.86\textwidth]{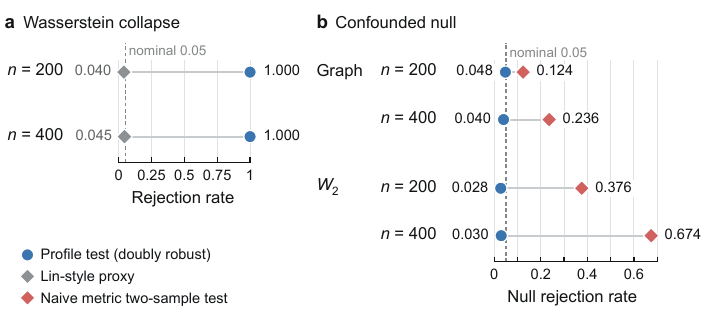}
\caption{Estimand separation and confounding adjustment under DRF. Panel (a) compares the profile test with the mean-quantile (Lin-style) proxy in the Wasserstein collapse setting, with \(200\) repetitions per sample size. Panel (b) compares the doubly robust profile test with a naive metric two-sample test under confounded nulls, with \(500\) repetitions per sample size; \(W_2\) is the Wasserstein space. Here \(n\) is the total sample size, and all tests have level \(0.05\).}
\label{fig:main_estimand_confounding_drf}
\end{figure}

\subsection{Real-Geometry Semi-Synthetic Benchmarks}
\label{subsec:real_geometry}

Finally, we ask whether the same finite-grid machinery can operate on real object geometry while keeping known truth. The ABIDE semi-synthetic experiment uses ABIDE PCP CPAC functional-connectivity derivatives with Craddock-200 ROI time series~\citep{dimartino2014autism,craddock2013neuro,craddock2012whole}, keeps real connectomes and covariates, represents outcomes as SPD connectivity objects, and injects controlled collapse or prototype-shift alternatives (\hyperref[app:realgeom_benchmarks]{Appendix~\ref*{app:realgeom_benchmarks}}).

At effect zero, rejection is conservative at \(0.025\). Under the collapse alternative, rejection rises from \(0.430\) at effect \(0.05\) to \(0.955\) at effect \(0.10\). Under the prototype-shift alternative, rejection rises from \(0.485\) at effect \(0.10\) to \(0.970\) at effect \(0.20\) (\hyperref[fig:main_abide_semisynth_drf]{Figure~\ref*{fig:main_abide_semisynth_drf}}).

In the connectome setting, a profile effect is a change in the probability that a brain connectivity object lies near a lower-dispersion or prototype-centered SPD neighborhood. This is a real-geometry illustration of the anchor--radius language. On this substrate we also examine how dividing a fixed pool between auxiliary and main samples affects power (\hyperref[subsec:allocation]{Appendix~\ref*{subsec:allocation}}).

\begin{figure}[t]
\centering
\includegraphics[width=0.82\textwidth]{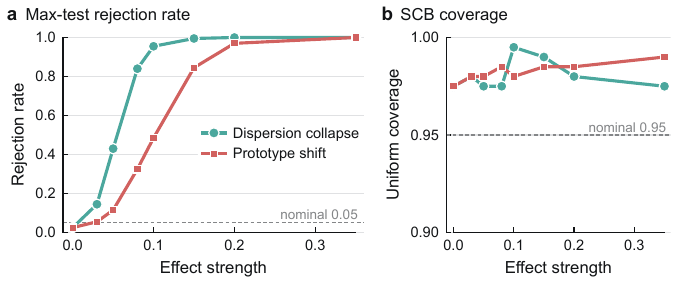}
\caption{ABIDE real-connectome semi-synthetic benchmark under DRF: (a) max-test rejection rate and (b) uniform coverage of the simultaneous band, against effect strength, with \(200\) repetitions per effect strength and level \(0.05\). Coverage is evaluated against the average profile effect over the main-sample units, computed from both potential outcomes.}
\label{fig:main_abide_semisynth_drf}
\end{figure}

The insurance benchmark gives a second real-geometry check using restricted processed data derived from the social-network insurance experiment of~\citet{cai2015social}: the real insurance network, covariates, and observed treatment structure serve as the substrate for graph-valued potential outcomes, compared by the Frobenius distance between adjacency matrices. Under a structural alternative that changes local motif and bridge organization rather than only edge mass, structural power is \(0.416\) for Medoid MaxMin, \(0.562\) for Guarded Shell-Signal, and \(0.586\) for Balanced LHS, a pilot-conditioned rule selected by Latin-hypercube screening.

The corresponding null sizes are conservative at \(0.004\), \(0.018\), and \(0.014\). In this setting, the causal effect has a network interpretation: treatment changes the chance that an ego-network outcome falls near anchor neighborhoods representing different local bridge and motif structures. The benchmark is consistent with the design considerations in \hyperref[app:geometry_anchor_theory]{Appendix~\ref*{app:geometry_anchor_theory}}: more informative pilot-conditioned designs can improve power, while inference remains tied to the finite inferential grid.

\hyperref[app:real_object_related_work]{Appendix~\ref*{app:real_object_related_work}} also reports descriptive profile analyses of observational ABIDE and NHANES data.

\section{Discussion and Limitations}
\label{sec:discussion}

We have developed a profile-based framework for causal inference with object-valued outcomes in separable metric spaces. Each anchor--radius coordinate is a binary potential-outcome contrast with standard identification and doubly robust estimation. What is specific to object-valued outcomes is the joint problem: calibrating the coordinates simultaneously, designing the grid on which they are compared, and reading the resulting bands as probability statements about balls and shells.

The guarantees cover finite grids fixed by a deterministic or auxiliary design and, for each fixed law, all radii at a fixed set of anchors. Inference over a continuum of anchors, uniformity over laws approaching degeneracy, and valid same-sample adaptive anchor search are open problems. The radius direction is tractable because, at a fixed anchor, the ball indicators are nested in the radius, which controls their complexity without conditions on the space (\hyperref[app:complete_radius]{Appendix~\ref*{app:complete_radius}}). Across anchors there is no such ordering: a band over a continuum of anchors would require entropy control of the class of metric balls, which can fail in infinite-dimensional spaces, conditional distance-CDF nuisances that are accurate uniformly in the anchor, and a variance floor over the whole domain. \hyperref[prop:grid-approx-anchor]{Proposition~\ref*{prop:grid-approx-anchor}} bounds how well a finite anchor net approximates the population map, but it does not control estimation error between net points. When auxiliary and analysis samples are drawn from one fixed pool, the pool must be divided between design and analysis; we examine the resulting tradeoff on the connectome substrate in \hyperref[subsec:allocation]{Appendix~\ref*{subsec:allocation}}. In discrete object spaces such as graphs and trees, trimming is part of the inferential target, and sharper theory for discrete outcome spaces is left for future work.

Finitely many profile coordinates do not determine the outcome law: an effect can lie between grid radii, and different laws can have identical complete profiles at finitely many anchors (\hyperref[app:effect_outputs]{Appendix~\ref*{app:effect_outputs}}). A test that does not reject is therefore not evidence that the potential-outcome laws are equal. With full anchor families, distance profiles separate laws under the strong-negative-type condition of \hyperref[sec:defs]{Section~\ref*{sec:defs}}; weaker and more local separation conditions are a natural next step.

The experiments show what each procedure responds to. When treatment shifts quantile location, the mean-quantile proxy is more powerful, rejecting at rate \(1.000\) against \(0.470\) for the profile test at \(n=200\); when treatment concentrates the outcome law around a prototype while leaving mean-level summaries unchanged or nearly so, profile tests detect effects that mean-quantile and Fr\'echet-mean proxies miss. Under confounded causal nulls, doubly robust adjustment keeps rejection near the nominal level, whereas a naive metric two-sample test, which targets the observational null, rejects frequently. On semi-synthetic outcomes built from real connectome and insurance-network geometry, the test is conservative under the null, and in the insurance benchmark its power depends on the anchor design. More broadly, distance profiles extend potential-outcome reasoning to heterogeneous random objects without imposing a common algebraic structure on the outcome space.

\bibliography{example_paper}
\bibliographystyle{unsrtnat}

\appendix

\section{Full Regularity Conditions and Proofs}
\label{app:regularity_proofs}

This appendix contains the full assumptions and proof details behind \hyperref[thm:finite_grid_inference]{Theorem~\ref*{thm:finite_grid_inference}} and \hyperref[cor:main_scb_global]{Corollary~\ref*{cor:main_scb_global}}. The main text states the regularity conditions as (A1)--(A5); here we give the corresponding formal assumptions, the auxiliary-conditional finite-grid theorem, and the proofs of identification, double robustness, Gaussian approximation, simultaneous bands, and max-statistic testing. Conditions (A1)--(A5) correspond to the formal assumptions as follows.
\begin{itemize}[leftmargin=*, label={}]
\item (A1): \hyperref[ass:id]{Assumption~\ref*{ass:id}};
\item (A2): Assumptions~\ref{ass:grid_det} and~\ref{ass:meas_design};
\item (A3): \hyperref[ass:uniform_remainder]{Assumption~\ref*{ass:uniform_remainder}};
\item (A4): the event \(\mathcal A_n\), the variance bound, and the dimension condition of \hyperref[ass:grid_det]{Assumption~\ref*{ass:grid_det}};
\item (A5): \hyperref[ass:gauss_boot]{Assumption~\ref*{ass:gauss_boot}}.
\end{itemize}

\subsection{Causal Identification Assumptions}
\paragraph{Setup and causal assumptions.}
The pointwise identification results below require only the standard potential-outcome assumptions adapted to the profile indicators.

\begin{assumption}[Identification]\label{ass:id}
\leavevmode
\begin{enumerate}[label=(\alph*), leftmargin=*]
    \item \textbf{SUTVA (consistency and no interference).}
    For each $a\in\{0,1\}$, $Y=Y(a)$ almost surely on the event $\{A=a\}$, and one unit's potential outcomes do not depend on other units' treatment assignments.

    \item \textbf{Conditional exchangeability.}
    For each $a\in\{0,1\}$,
    \[
    Y(a)\ind A\mid X.
    \]
    Consequently, for every fixed $g=(\omega,t)\in\mathcal{G}$,
    \[
    Z^{(a)}(g)\ind A\mid X.
    \]

    \item \textbf{Positivity.}
    There exists $\epsilon>0$ such that
    \[
    \epsilon\le\Prob(A=1\mid X)\le1-\epsilon
    \]
    almost surely.
\end{enumerate}
\end{assumption}

\begin{remark}[Pointwise identification does not require strong negative type]\label{rem:snt_not_ident}
\hyperref[ass:id]{Assumption~\ref*{ass:id}} is sufficient for identifying each fixed-anchor--radius functional $\Delta(\omega,t)$. Strong negative type, defined in \hyperref[def:snt]{Definition~\ref*{def:snt}}, plays a different role: it clarifies when a sufficiently rich full profile family can separate probability laws. We therefore use strong negative type only for global profile interpretation, not as a prerequisite for the pointwise identification and estimation developed below.
\end{remark}

\subsection{Asymptotic Theory on Inferential Grids}

Define the centered oracle score, equivalently the pointwise efficient influence function, at evaluation point $g \in \mathcal{G}_n^{\mathrm{inf}}$ as
\[
\psi(O; g) := \phi(O; g, \eta_0) - \Delta(g),
\]
and let
\[
\psi_{\mathcal{G}_n^{\mathrm{inf}}}(O)
:=
\bigl(\psi(O; g) : g \in \mathcal{G}_n^{\mathrm{inf}}\bigr)^\top
\]
be the corresponding score vector.

Let
\[
\sigma^2(g):=\Var\{\psi(O;g)\mid\mathcal F_n^{\mathrm{aux}}\},
\qquad g\in\mathcal G_n^{\mathrm{inf}},
\]
with deterministic grids covered as the special case in which the conditioning sigma-field is trivial.

\begin{assumption}[Inferential grid and nondegeneracy]\label{ass:grid_det}
The design-exogenous candidate grid $\mathcal{G}_n^{\mathrm{cand}}$ is \(\mathcal F_n^{\mathrm{aux}}\)-measurable and satisfies the dimension-growth condition
\[
\log^7\!\bigl(|\mathcal{G}_n^{\mathrm{cand}}|\,n\bigr)=o(n).
\]
Let $\mathcal{G}_n^{\mathrm{inf}}$ be the inferential grid from \hyperref[def:grid_defs]{Definition~\ref*{def:grid_defs}}. We assume there exists an event $\mathcal A_n$, measurable with respect to \(\mathcal F_n^{\mathrm{aux}}\) and satisfying $\Pr(\mathcal A_n)\to1$, such that on $\mathcal A_n$,
\[
1\le |\mathcal G_n^{\mathrm{inf}}|,
\qquad
\inf_{g\in\mathcal G_n^{\mathrm{inf}}}\sigma(g)\ge c_\sigma>0.
\]
\end{assumption}

\begin{assumption}[Measurability and design exogeneity]\label{ass:meas_design}
The outcome space \((\Omega,d)\) is separable and the random objects
\(Y,Y(0),Y(1)\) are Borel-measurable. The auxiliary sigma-field
\(\mathcal F_n^{\mathrm{aux}}\), generated by the design/pilot/reference
information used to construct anchors, radii, and trimming masks,
is independent of the main analysis sample. The candidate and
inferential grids are \(\mathcal F_n^{\mathrm{aux}}\)-measurable and
admit measurable finite enumerations. All maxima over these
grids are understood through these enumerations. Conditionally on
\(\mathcal F_n^{\mathrm{aux}}\), the inferential grid is fixed and
finite, and the main sample remains i.i.d. When the auxiliary information
is sample-based, a pilot random element may be used to generate
\(\mathcal F_n^{\mathrm{aux}}\).
\end{assumption}

\begin{remark}[Variance lower bound]\label{rem:variance_lower}
Since the residual \(Z(g)-m_A(g;X)\) has conditional mean zero given
\((A,X)\), the law of total variance gives
\[
\begin{aligned}
\sigma^2(g)
={}&
\Var\{m_1(g;X)-m_0(g;X)\}\\
&+
\E\left[
\frac{m_1(g;X)\{1-m_1(g;X)\}}{\pi(X)}
+
\frac{m_0(g;X)\{1-m_0(g;X)\}}{1-\pi(X)}
\right].
\end{aligned}
\]
If, uniformly over retained \(g\), the conditional ball probability
\(m_a(g;X)\) of one arm lies in \([\kappa,1-\kappa]\) on a set of covariate
values of probability at least \(q>0\), then
\(\sigma^2(g)\ge q\kappa(1-\kappa)\). The pilot trimming of
\hyperref[sec:defs]{Section~\ref*{sec:defs}} controls only marginal ball probabilities and does
not by itself ensure this condition.
\end{remark}

\begin{assumption}[Uniform nuisance rates]\label{ass:uniform_remainder}
The cross-fitting partition is deterministic or generated independently of the observations. The number of folds \(J\) is fixed, and the fold sizes satisfy \(\min_{j\le J}|\mathcal I_j|\ge c_J n\) for some constant \(c_J>0\). For every validation fold \(j\), the nuisance estimators \(\hat m_a^{(-j)}\) and \(\hat\pi^{(-j)}\) are trained without observations in \(\mathcal I_j\). The fitted propensity scores used in score evaluation are clipped at a fixed level \(\epsilon_{\rm clip}\in(0,\epsilon)\), where \(\epsilon\) is the positivity margin in \hyperref[ass:id]{Assumption~\ref*{ass:id}}, so that
\[
\epsilon_{\rm clip}\le \hat\pi^{(-j)}(X)\le 1-\epsilon_{\rm clip}
\quad\text{almost surely},
\]
and the fitted conditional profile regressions satisfy
\(0\le \hat m_a^{(-j)}(g;X)\le 1\). Define the fold-wise rates
\[
r_{m,n}
:=
\max_{j\le J}\max_{a\in\{0,1\}}\max_{g\in\mathcal G_n^{\mathrm{cand}}}
\|\hat m_a^{(-j)}(g;\cdot)-m_a(g;\cdot)\|_{P,2},
\qquad
r_{\pi,n}
:=
\max_{j\le J}\|\hat\pi^{(-j)}-\pi\|_{P,2},
\]
where \(\hat\pi^{(-j)}\) denotes the clipped fitted propensity actually used in score evaluation.
Let \(L_n=\log(N_n^{\mathrm{cand}}\vee2)\). We assume that these random rates converge uniformly over the candidate grid in probability:
\[
(r_{m,n}^2 + r_{\pi,n}^2)L_n^2
=o_P(1),
\qquad
\sqrt{n}\, r_{m,n}r_{\pi,n}\sqrt{L_n}
=o_P(1).
\]
\end{assumption}

\begin{remark}[Uniform remainder control]\label{rem:remainder_control}
\hyperref[ass:uniform_remainder]{Assumption~\ref*{ass:uniform_remainder}} adapts standard sample-splitting guarantees \citep{chernozhukov2018double} to a growing candidate grid. As established in \hyperref[app:proofs_est]{Appendix~\ref*{app:proofs_est}}, these rate conditions make the orthogonality bias and the empirical-process fluctuation uniformly negligible: \(\sup_{g\in\mathcal G_n^{\mathrm{cand}}}|\sqrt n\{\hat\Delta(g)-\Delta(g)\}-n^{-1/2}\sum_{i=1}^n\psi(O_i;g)|=o_P(L_n^{-1/2})\). Cross-fitting separates nuisance training from score evaluation; the linearization also uses Neyman orthogonality, estimated-overlap clipping, and the finite-grid product-rate conditions above. For instance, if the estimators achieve polynomial rates \(r_{m,n}=O_P(n^{-a})\) and \(r_{\pi,n}=O_P(n^{-b})\) with \(a+b>1/2\), this assumption holds provided
\[
L_n
=
o\!\left(n^{\min\{a,b,\,2(a+b-1/2)\}}\right).
\]
\end{remark}

For the feasible studentized process, center the cross-fitted scores at
their pooled mean \(\hat\Delta(g)=n^{-1}\sum_{i=1}^n\hat\phi_i(g)\) and define
\[
\hat\sigma^2(g)
:=
\frac1n\sum_{i=1}^n
\{\hat\phi_i(g)-\hat\Delta(g)\}^2,
\qquad g\in\mathcal G_n^{\mathrm{inf}}.
\]

For \(g,g'\in\mathcal G_n^{\mathrm{inf}}\), let \(\Gamma_n\) be the auxiliary-conditional oracle correlation matrix with entries
\[
\Gamma_n(g,g')
=
\E\left[
\frac{\psi(O;g)}{\sigma(g)}
\frac{\psi(O;g')}{\sigma(g')}
\;\middle|\;
\mathcal F_n^{\mathrm{aux}}
\right],
\]
let \(\hat\Gamma_n\) be the empirical correlation matrix of the centered estimated scores, with entries
\[
\hat\Gamma_n(g,g')
=
\frac1n
\sum_{i=1}^n
\frac{\hat\phi_i(g)-\hat\Delta(g)}{\hat\sigma(g)}
\frac{\hat\phi_i(g')-\hat\Delta(g')}{\hat\sigma(g')},
\]
and write \(\|M\|_{\max}=\max_{g,g'}|M(g,g')|\) for a matrix indexed by \(\mathcal G_n^{\mathrm{inf}}\).

\begin{assumption}[Correlation consistency]\label{ass:gauss_boot}
On the event \(\mathcal A_n\) of \hyperref[ass:grid_det]{Assumption~\ref*{ass:grid_det}},
\(\|\hat\Gamma_n-\Gamma_n\|_{\max}L_n^2=o_P(1)\).
\end{assumption}

The remaining regularity used below follows from the preceding assumptions. On \(\mathcal A_n\) and conditionally on \(\mathcal F_n^{\mathrm{aux}}\), the standardized oracle scores \(\psi(O_i;g)/\sigma(g)\) are i.i.d.\ across \(i\), centered, and bounded by a constant \(C_\psi\) that depends only on the positivity margin and on \(c_\sigma\) (Assumptions~\ref{ass:id}, \ref{ass:grid_det}, and~\ref{ass:meas_design}); the relative studentization error satisfies \(\max_g|\hat\sigma(g)/\sigma(g)-1|L_n=o_P(1)\) by \eqref{eq:correct_scale_rate}; and, because \(Z_{\mathcal G_n^{\mathrm{inf}}}\sim N(0,\Gamma_n)\) has unit variances, \(V_n=\max_{g\in\mathcal G_n^{\mathrm{inf}}}|Z_{\mathcal G_n^{\mathrm{inf}}}(g)|\) satisfies
\begin{equation}\label{eq:anticoncentration}
\sup_{t\in\mathbb R}\Pr\{|V_n-t|\le\varepsilon\mid\mathcal F_n^{\mathrm{aux}}\}\le C_{\rm ac}\,\varepsilon\sqrt{L_n},\qquad\varepsilon>0,
\end{equation}
for an absolute constant \(C_{\rm ac}\) \citep[Theorem~3 and Comment~5]{cck2015comparison}. \hyperref[cor:primitive_grid]{Corollary~\ref*{cor:primitive_grid}} gives nuisance-rate conditions under which \hyperref[ass:gauss_boot]{Assumption~\ref*{ass:gauss_boot}} holds.

\begin{theorem}[Auxiliary-conditional Gaussian approximation]\label{thm:clt}
Under Assumptions~\ref{ass:id}, \ref{ass:grid_det}, \ref{ass:meas_design}, and~\ref{ass:uniform_remainder}, let
\(p_n=|\mathcal G_n^{\mathrm{inf}}|\). On the event
\(\mathcal A_n\) in \hyperref[ass:grid_det]{Assumption~\ref*{ass:grid_det}}, \(1\le p_n\le N_n^{\mathrm{cand}}\).
All conditional statements below are made on \(\mathcal A_n\), and the
displayed conditional errors converge to zero in probability under the
joint law. The cross-fitted estimator on the inferential grid satisfies
\[
\sup_{g\in\mathcal G_n^{\mathrm{inf}}}
\left|
\sqrt n\bigl(\hat\Delta(g)-\Delta(g)\bigr)
-
\frac{1}{\sqrt n}\sum_{i=1}^n\psi(O_i;g)
\right|
=
o_P\!\bigl(L_n^{-1/2}\bigr).
\]
Furthermore, for the standardized maximum
\[
T_n^{\max,\circ}
:=
\max_{g\in\mathcal G_n^{\mathrm{inf}}}
\frac{\sqrt n\,|\hat\Delta(g)-\Delta(g)|}{\sigma(g)},
\]
there exists a centered Gaussian vector
\[
Z_{\mathcal G_n^{\mathrm{inf}}}
\sim
N(0,\Gamma_n)
\]
such that
\[
\sup_{t\in\mathbb R}
\left|
\Pr\!\left(T_n^{\max,\circ}\le t \mid \mathcal F_n^{\mathrm{aux}}\right)
-
\Pr\!\left(
\max_{g\in\mathcal G_n^{\mathrm{inf}}}
|Z_{\mathcal G_n^{\mathrm{inf}}}(g)|
\le t
\;\middle|\;
\mathcal F_n^{\mathrm{aux}}
\right)
\right|
\to_P 0.
\]
\end{theorem}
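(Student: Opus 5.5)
The plan has two parts, both carried out conditionally on \(\mathcal F_n^{\mathrm{aux}}\) and on the event \(\mathcal A_n\). By \hyperref[ass:meas_design]{Assumption~\ref*{ass:meas_design}}, the inferential grid is then a fixed finite set of size \(p_n\le N_n^{\mathrm{cand}}\), and the main sample is i.i.d. Part one is a uniform linearization over the candidate grid. Part two applies the high-dimensional CLT of \citet{chernozhukov2013gaussian} to the standardized oracle scores and transfers it to \(T_n^{\max,\circ}\) by anti-concentration.

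\textbf{Linearization.} For \(i\) in fold \(j\), I decompose \(\hat\phi_i(g)-\phi(O_i;g,\eta_0)\). Write \(\delta_a=\hat m_a^{(-j)}-m_a\) and \(\delta_\pi=\hat\pi^{(-j)}-\pi\). Algebra gives a bias term \(B_j(g)=P[\phi(\cdot;g,\hat\eta^{(-j)})-\phi(\cdot;g,\eta_0)]\) and a centered fluctuation term \(\mathbb G_{n,j}[\phi(\cdot;g,\hat\eta^{(-j)})-\phi(\cdot;g,\eta_0)]\).

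For the bias, Neyman orthogonality (\hyperref[thm:ident]{Theorem~\ref*{thm:ident}}) reduces it to
\[
B_j(g)=\E\Bigl[\tfrac{\pi-\hat\pi}{\hat\pi}\,\delta_1-\tfrac{\hat\pi-\pi}{1-\hat\pi}\,\delta_0\Bigr].
\]
Clipping and Cauchy--Schwarz then bound it by \(C\epsilon_{\rm clip}^{-1}r_{\pi,n}r_{m,n}\). Hence \(\sqrt n\max_g|B_j(g)|=o_P(L_n^{-1/2})\) by the product-rate condition.

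For the fluctuation, I condition on the training folds. The summands are then i.i.d., centered, and bounded by a constant \(C\), since \(0\le\hat m\le1\) and the propensities are clipped. Their \(L^2\) norm is at most \(C(r_{m,n}+r_{\pi,n})\). A Bernstein inequality plus a union bound over \(N_n^{\mathrm{cand}}\) cells gives
\[
\max_g|\mathbb G_{n,j}(\cdot)|\lesssim (r_{m,n}+r_{\pi,n})\sqrt{L_n}+L_n/\sqrt n .
\]
Multiplying by \(\sqrt{L_n}\) yields \((r_{m,n}+r_{\pi,n})L_n+L_n^{3/2}/\sqrt n=o_P(1)\), using the squared-rate condition and \(\log^7(N_n^{\mathrm{cand}}n)=o(n)\). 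Summing over the fixed number \(J\) of folds gives the first display, since \(\sup\) over \(\mathcal G_n^{\mathrm{inf}}\) is at most \(\sup\) over \(\mathcal G_n^{\mathrm{cand}}\).

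\textbf{Gaussian approximation.} Set \(X_i(g)=\psi(O_i;g)/\sigma(g)\), including both signs so that the max of absolute values becomes a max over \(2p_n\) coordinates. On \(\mathcal A_n\) these are i.i.d., centered, have unit variance, and are bounded by \(C_\psi\). The max-CLT of \citet[Cor.~2.1]{chernozhukov2013gaussian}, whose condition \(\log^7(p_nn)/n\to0\) is implied by (A4), gives
\[
\sup_t\Bigl|\Pr\bigl(\max_g|n^{-1/2}\textstyle\sum_i X_i(g)|\le t\mid\mathcal F_n^{\mathrm{aux}}\bigr)-\Pr\bigl(V_n\le t\mid\mathcal F_n^{\mathrm{aux}}\bigr)\Bigr|\le Cn^{-c}.
\]
Here \(V_n=\max_g|Z(g)|\) with \(Z\sim N(0,\Gamma_n)\), and the bound is uniform given \(\mathcal F_n^{\mathrm{aux}}\) because the constants depend only on \(C_\psi\).

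Next I replace the oracle max by \(T_n^{\max,\circ}\). The two differ by at most \(R_n/c_\sigma\), where \(R_n\) is the linearization remainder, so \(R_n\sqrt{L_n}=o_P(1)\). Choose \(\varepsilon_n\to0\) with \(\Pr(R_n/c_\sigma>\varepsilon_n)\to0\) and \(\varepsilon_n\sqrt{L_n}\to0\). The anti-concentration bound \eqref{eq:anticoncentration} then turns the \(\varepsilon_n\)-perturbation into a Kolmogorov error \(C_{\rm ac}\varepsilon_n\sqrt{L_n}\) plus \(\Pr(R_n>c_\sigma\varepsilon_n\mid\mathcal F_n^{\mathrm{aux}})\). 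The latter tends to zero in probability by Markov's inequality applied to its unconditional expectation.

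\textbf{Main obstacle.} I expect the delicate step to be the fluctuation bound. It must hold uniformly over a growing grid while the nuisance estimates are random, and only in-probability rates are given. I would handle this by working on events where \(r_{m,n},r_{\pi,n}\le\) deterministic sequences satisfying the rate conditions, and applying the conditional Bernstein bound there. The conditional-to-joint bookkeeping, including \(\mathcal A_n\)-measurability, is routine once this is in place.
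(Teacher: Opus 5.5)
Your proposal is correct and follows the paper's proof essentially step for step. It uses the same fold-wise split into a product-form bias (bounded via clipping and Cauchy--Schwarz) and a conditional Bernstein-plus-union-bound fluctuation over the candidate grid. It then feeds the same signed $2p_n$-dimensional augmentation into the Chernozhukov--Chetverikov--Kato maximal CLT, and finishes with the same anti-concentration transfer, taking $\varepsilon_n=b_n/\sqrt{L_n}$ and controlling the conditional probability through its expectation.

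Two slips are cosmetic. First, your bias display has the overall sign reversed: it should be $\E[\delta_1(\hat\pi-\pi)/\hat\pi+\delta_0(\hat\pi-\pi)/(1-\hat\pi)]$, which does not affect the bound. Second, the growth condition $\log^7(p_nn)=o(n)$ in (A4) gives a Kolmogorov error of order $o(1)$, not the polynomial $Cn^{-c}$ you state, which would need a polynomial-rate version of the dimension condition; $o(1)$ is all the theorem requires.
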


In the notation of \hyperref[thm:finite_grid_inference]{Theorem~\ref*{thm:finite_grid_inference}}, \(T_n^\circ=T_n^{\max,\circ}\) and \(Z_n=Z_{\mathcal G_n^{\mathrm{inf}}}\).

Simultaneous inference is calibrated by a Gaussian multiplier bootstrap applied to the centered cross-fitted scores.

\begin{corollary}[Simultaneous confidence bands via multiplier bootstrap]\label{cor:scb}
Under the assumptions and event convention of \hyperref[thm:clt]{Theorem~\ref*{thm:clt}} and \hyperref[ass:gauss_boot]{Assumption~\ref*{ass:gauss_boot}}, let $\{\xi_i\}_{i=1}^n$ be i.i.d.\ $\mathcal N(0,1)$ multipliers generated independently of the data, and define
\[
\hat Z_n^\star(g)
:=
\frac{1}{\sqrt n}
\sum_{i=1}^n
\xi_i\,
\frac{\hat\phi_i(g)-\hat\Delta(g)}{\hat\sigma(g)},
\qquad
g\in\mathcal G_n^{\mathrm{inf}}.
\]
Let
\[
\hat c_{1-\alpha}^{\mathrm{SCB}}
:=
\inf\left\{
c:
\Pr_\xi\!\left(
\max_{g\in\mathcal G_n^{\mathrm{inf}}}
|\hat Z_n^\star(g)|
\le c
\mid O_1^n,\mathcal F_n^{\mathrm{aux}}
\right)
\ge 1-\alpha
\right\}.
\]
Then the simultaneous confidence band
\[
\mathcal C_{n,1-\alpha}(g)
:=
\left[
\hat\Delta(g)\pm
\frac{\hat c_{1-\alpha}^{\mathrm{SCB}}\hat\sigma(g)}{\sqrt n}
\right],
\qquad
g\in\mathcal G_n^{\mathrm{inf}},
\]
satisfies
\[
\Pr\!\left(
\Delta(g)\in\mathcal C_{n,1-\alpha}(g)
\ \text{for all }g\in\mathcal G_n^{\mathrm{inf}}
\mid \mathcal F_n^{\mathrm{aux}}
\right)
\to_P 1-\alpha .
\]
The corresponding unconditional coverage statement follows by averaging over the auxiliary information.
\end{corollary}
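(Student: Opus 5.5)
Throughout, we work conditionally on \(\mathcal F_n^{\mathrm{aux}}\) and on the event \(\mathcal A_n\). By Assumption~\ref{ass:meas_design}, the inferential grid is then a fixed finite index set of size \(p_n\le N_n^{\mathrm{cand}}\), and the main sample is i.i.d. The proof compares three distributions: the law of the feasible statistic \(\hat T_n=\max_g\sqrt n|\hat\Delta(g)-\Delta(g)|/\hat\sigma(g)\); the law of \(V_n=\max_g|Z_{\mathcal G_n^{\mathrm{inf}}}(g)|\) with \(Z\sim N(0,\Gamma_n)\); and the conditional law of the bootstrap maximum \(\max_g|\hat Z_n^\star(g)|\) given the data. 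Coverage of \(\mathcal C_{n,1-\alpha}\) is exactly the event \(\{\hat T_n\le\hat c^{\mathrm{SCB}}_{1-\alpha}\}\). So it suffices to show two things:
\[
\sup_t|\Pr(\hat T_n\le t\mid\mathcal F_n^{\mathrm{aux}})-\Pr(V_n\le t\mid\mathcal F_n^{\mathrm{aux}})|\to_P0,
\]
and that \(\hat c^{\mathrm{SCB}}_{1-\alpha}\) is close to the \((1-\alpha)\)-quantile of \(V_n\), up to quantile levels \(1-\alpha\pm\nu_n\) with \(\nu_n\to0\).

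\emph{Step 1: from oracle to feasible studentization.} Theorem~\ref{thm:clt} already gives the Kolmogorov approximation for \(T_n^{\max,\circ}\), which uses the true \(\sigma(g)\). We write
\[
|\hat T_n-T_n^{\max,\circ}|\le T_n^{\max,\circ}\max_g|\sigma(g)/\hat\sigma(g)-1|.
\]
Here \(T_n^{\max,\circ}=O_P(\sqrt{L_n})\), by the Gaussian approximation together with the standard bound \(\E V_n\lesssim\sqrt{L_n}\). The relative studentization error is \(o_P(L_n^{-1})\) by \eqref{eq:correct_scale_rate}. Hence \(|\hat T_n-T_n^{\max,\circ}|=o_P(L_n^{-1/2})\). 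Choosing \(\varepsilon_n=o(L_n^{-1/2})\) that still dominates this error, the anti-concentration bound \eqref{eq:anticoncentration} shows that the perturbation changes the CDF by at most \(C_{\rm ac}\varepsilon_n\sqrt{L_n}+o_P(1)=o_P(1)\) uniformly in \(t\). This gives the first display.

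\emph{Step 2: bootstrap consistency.} Conditionally on the data, \(\hat Z_n^\star\) is exactly Gaussian, with covariance \(\hat\Gamma_n\) (the empirical correlation of the centered scores). The Gaussian comparison inequality of \citet{cck2015comparison} bounds the Kolmogorov distance between \(\max|N(0,\hat\Gamma_n)|\) and \(\max|N(0,\Gamma_n)|\). Because \(\Gamma_n\) has unit diagonal, the bound is of order \(\{\|\hat\Gamma_n-\Gamma_n\|_{\max}\}^{1/3}L_n^{2/3}\), up to logarithmic factors. By Assumption~\ref{ass:gauss_boot}, \(\|\hat\Gamma_n-\Gamma_n\|_{\max}L_n^2=o_P(1)\), so this bound is \(o_P(1)\). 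Call the resulting distance \(\nu_n\), so that \(\nu_n=o_P(1)\). It follows that, with probability tending to one, \(\hat c^{\mathrm{SCB}}_{1-\alpha}\) lies between the \((1-\alpha-\nu_n)\)- and \((1-\alpha+\nu_n)\)-quantiles of \(V_n\).

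\emph{Step 3: combining.} On the event of Step 2, coverage is sandwiched:
\[
\Pr(\hat T_n\le c_{V}(1-\alpha-\nu_n))\le\Pr(\hat T_n\le \hat c^{\mathrm{SCB}}_{1-\alpha})\le\Pr(\hat T_n\le c_{V}(1-\alpha+\nu_n)),
\]
where \(c_V(\beta)\) denotes the \(\beta\)-quantile of \(V_n\). Step 1 replaces \(\hat T_n\) by \(V_n\) at an \(o_P(1)\) cost. Since \(V_n\) has a continuous distribution (by anti-concentration), \(\Pr(V_n\le c_V(\beta))=\beta\), and both bounds tend to \(1-\alpha\) in probability. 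One technical point: \(\nu_n\) is random, so we run the sandwich on events \(\{\nu_n\le\delta_n\}\) with a deterministic sequence \(\delta_n\to0\), chosen so that these events have probability tending to one. Finally, averaging the bounded conditional probabilities over \(\mathcal F_n^{\mathrm{aux}}\) and using \(\Pr(\mathcal A_n)\to1\) with dominated convergence gives the unconditional statement.

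\emph{Main obstacle.} The delicate step is the rate bookkeeping in Step 2. The comparison bound must be applied to \(\hat\Gamma_n\), which is the correlation matrix of the \emph{cross-fitted, recentered} scores rather than of the oracle scores, and it is needed uniformly over a growing grid. Assumption~\ref{ass:gauss_boot} is tailored to make this work. In Step 1 we also have to make sure that the \(O_P(\sqrt{L_n})\) size of the maximum is absorbed by the \(L_n\)-rate of the studentization error; this is exactly what the \(L_n\) factor in \eqref{eq:correct_scale_rate} provides.
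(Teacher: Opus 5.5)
Your proposal is correct and follows essentially the same route as the paper. It transfers the oracle Gaussian approximation to the feasible studentized maximum via \eqref{eq:correct_scale_rate}, the $O_P(\sqrt{L_n})$ size of the maximum, and anti-concentration; it gets bootstrap consistency from Gaussian comparison of $\hat\Gamma_n$ with $\Gamma_n$ under Assumption~\ref{ass:gauss_boot}; and it concludes by bracketing $\hat c^{\mathrm{SCB}}_{1-\alpha}$ between quantiles of $V_n$ on an event defined by a deterministic sequence. The differences are cosmetic and do not affect the argument: the paper makes the signed augmentation $(Z,-Z)$ explicit for absolute maxima, and it brackets at levels $1-\alpha\pm2\eta_n$ rather than $1-\alpha\pm\nu_n$.
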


\subsection{Global Omnibus Testing via a Max-Statistic}\label{subsec:global_test}

To align with the high-dimensional Gaussian approximation used in \hyperref[sec:est]{Section~\ref*{sec:est}}, we formulate the formal global test with a supremum-type statistic on the inferential grid.

We test the grid-indexed null hypothesis $H_0^{\max}:\ \Delta(g)=0$ for all $g\in\mathcal G_n^{\mathrm{inf}}$. The corresponding omnibus max-statistic is:
\[
T_n^{\max} := \max_{g\in\mathcal G_n^{\mathrm{inf}}} \frac{\sqrt n\,|\hat\Delta(g)|}{\hat\sigma(g)}.
\]

\begin{theorem}[Auxiliary-conditional max-statistic global test]\label{thm:global_test}
Under the assumptions and event convention of \hyperref[cor:scb]{Corollary~\ref*{cor:scb}}, and under the null hypothesis $H_0^{\max}$, we have, conditionally on the auxiliary sigma-field,
\[
\sup_{t\in\mathbb R}
\left|
\Pr\!\left(T_n^{\max}\le t\mid\mathcal F_n^{\mathrm{aux}}\right)
-
\Pr\!\left(\max_{g\in\mathcal G_n^{\mathrm{inf}}}|Z_{\mathcal G_n^{\mathrm{inf}}}(g)|\le t\mid\mathcal F_n^{\mathrm{aux}}\right)
\right|
\to_P 0.
\]
Let \(\hat c_{1-\alpha}^{\max}:=\hat c_{1-\alpha}^{\mathrm{SCB}}\). Then the max test $\varphi_n^{\max}:=\mathds{1}\{T_n^{\max}>\hat c_{1-\alpha}^{\max}\}$ satisfies
\[
\Pr\bigl(\varphi_n^{\max}=1\mid\mathcal F_n^{\mathrm{aux}}\bigr)\to_P\alpha.
\]
\end{theorem}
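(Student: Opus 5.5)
Under $H_0^{\max}$ we have $\Delta(g)=0$ on $\mathcal G_n^{\mathrm{inf}}$. Hence $T_n^{\max}=\max_g\sqrt n|\hat\Delta(g)-\Delta(g)|/\hat\sigma(g)$, which is the oracle statistic $T_n^{\max,\circ}$ of \hyperref[thm:clt]{Theorem~\ref*{thm:clt}} with $\sigma(g)$ replaced by $\hat\sigma(g)$. The plan has two parts. First, transfer the Gaussian approximation of \hyperref[thm:clt]{Theorem~\ref*{thm:clt}} from the oracle scale to the estimated scale. Second, combine it with the bootstrap-quantile consistency established in the proof of \hyperref[cor:scb]{Corollary~\ref*{cor:scb}}. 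All statements are made on $\mathcal A_n$ and conditionally on $\mathcal F_n^{\mathrm{aux}}$. By \hyperref[ass:meas_design]{Assumption~\ref*{ass:meas_design}}, the grid is then fixed and finite, and the main sample is i.i.d.

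\textbf{Step 1 (studentization).} Write $T_n^{\max}=\max_g W_n(g)\,\sigma(g)/\hat\sigma(g)$, where $W_n(g)=\sqrt n|\hat\Delta(g)|/\sigma(g)$. Then
\[
|T_n^{\max}-T_n^{\max,\circ}|\le T_n^{\max,\circ}\max_g|\sigma(g)/\hat\sigma(g)-1|.
\]
Two facts control this bound. The linearization in \hyperref[thm:clt]{Theorem~\ref*{thm:clt}} together with a maximal inequality for the bounded standardized scores (bound $C_\psi$) gives $T_n^{\max,\circ}=O_P(\sqrt{L_n})$. The relative scale error satisfies $\max_g|\hat\sigma/\sigma-1|L_n=o_P(1)$ by \eqref{eq:correct_scale_rate}. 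Therefore $|T_n^{\max}-T_n^{\max,\circ}|=o_P(L_n^{-1/2})$.

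\textbf{Step 2 (Kolmogorov distance).} Choose $\varepsilon_n$ with $\varepsilon_n\sqrt{L_n}\to0$ and $\Pr(|T_n^{\max}-T_n^{\max,\circ}|>\varepsilon_n\mid\mathcal F_n^{\mathrm{aux}})\to_P0$. The conditional version follows from the unconditional $o_P$ statement by Markov's inequality applied to the conditional probability. Sandwiching gives
\[
\Pr(T_n^{\max,\circ}\le t-\varepsilon_n)-o_P(1)\le\Pr(T_n^{\max}\le t)\le\Pr(T_n^{\max,\circ}\le t+\varepsilon_n)+o_P(1),
\]
uniformly in $t$. Next, replace $T_n^{\max,\circ}$ by $V_n=\max_g|Z_{\mathcal G_n^{\mathrm{inf}}}(g)|$ using \hyperref[thm:clt]{Theorem~\ref*{thm:clt}}. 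The anti-concentration bound \eqref{eq:anticoncentration} then absorbs the $\pm\varepsilon_n$ shift at cost $C_{\rm ac}\varepsilon_n\sqrt{L_n}=o(1)$. This gives the first display of the theorem.

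\textbf{Step 3 (size).} The proof of \hyperref[cor:scb]{Corollary~\ref*{cor:scb}} uses \hyperref[ass:gauss_boot]{Assumption~\ref*{ass:gauss_boot}} and the Gaussian comparison of \citet{cck2015comparison}, since $\|\hat\Gamma_n-\Gamma_n\|_{\max}L_n^2=o_P(1)$. It shows that the conditional law of $\max_g|\hat Z_n^\star(g)|$ is close in Kolmogorov distance to that of $V_n$. Anti-concentration then yields that $\hat c_{1-\alpha}^{\max}$ lies between the $(1-\alpha\mp\nu_n)$-quantiles of $V_n$ with probability tending to one, for some $\nu_n\to0$. Since $V_n$ has a continuous law (its variances are unit), $\Pr(V_n\le c_{1-\alpha\pm\nu_n})=1-\alpha\pm\nu_n$. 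Combining this with Step 2 gives $\Pr(T_n^{\max}\le\hat c_{1-\alpha}^{\max}\mid\mathcal F_n^{\mathrm{aux}})\to_P1-\alpha$, so $\Pr(\varphi_n^{\max}=1\mid\mathcal F_n^{\mathrm{aux}})\to_P\alpha$.

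Alternatively, Steps 1–3 follow directly from \hyperref[cor:scb]{Corollary~\ref*{cor:scb}}. Under $H_0^{\max}$, the event $\{\varphi_n^{\max}=0\}$ coincides exactly with the coverage event $\{0=\Delta(g)\in\mathcal C_{n,1-\alpha}(g)\ \forall g\}$, which gives the size claim immediately. I would present this shortcut for the size statement and keep Steps 1–2 for the distributional display.

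The main obstacle is Step 2's conditional-versus-joint bookkeeping. Theorem~\ref*{thm:clt} and the scale rate are stated as $o_P$ under the joint law, while the target is a conditional Kolmogorov distance converging in probability. This requires care in converting unconditional $o_P$ remainders into conditional probabilities that are small in probability. It also requires checking that the anti-concentration constant is uniform over the random grid on $\mathcal A_n$; this holds because $C_{\rm ac}$ is absolute and $p_n\le N_n^{\mathrm{cand}}$.
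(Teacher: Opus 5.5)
Your proposal is correct and matches the paper's proof. The paper also transfers the Gaussian approximation of Theorem~\ref{thm:clt} to the feasible studentized maximum using \eqref{eq:correct_scale_rate}, the $O_P(\sqrt{L_n})$ bound on the oracle maximum, and anti-concentration (Lemma~\ref{lem:max_perturbation_transfer}), calibrates the multiplier cutoff by a quantile-bracketing argument, and obtains the size claim exactly as in your shortcut: under $H_0^{\max}$, $T_n^{\max}$ equals the feasible studentized maximum of the estimation error, so non-rejection is the simultaneous coverage event of Corollary~\ref{cor:scb}.
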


\begin{remark}[Familywise error]\label{rem:fwer}
Reject the coordinate null \(\Delta(g)=0\) when \(0\notin\mathcal C_{n,1-\alpha}(g)\). Because \(\hat c_{1-\alpha}^{\max}=\hat c_{1-\alpha}^{\mathrm{SCB}}\), the max test rejects exactly when some coordinate is rejected. On the event that \(\mathcal C_{n,1-\alpha}\) covers every \(\Delta(g)\), no coordinate with \(\Delta(g)=0\) is rejected. Hence, for every law satisfying the assumptions and every configuration of true coordinate nulls, the probability of rejecting at least one true null is at most the probability that the band misses some \(\Delta(g)\), which tends to \(\alpha\) by \hyperref[cor:scb]{Corollary~\ref*{cor:scb}}. The coordinatewise rule therefore controls the familywise error rate in the strong sense, asymptotically.
\end{remark}

\subsection{Proofs of Identification and Double Robustness}\label{app:proofs}

Fix $(\omega,t)\in \mathcal{W}\times\mathcal{T}$. Define the observed profile indicator
\[
Z:=Z(\omega,t)=\mathds{1}\{d(\omega,Y)\le t\},
\]
and the potential profile indicators
\[
Z^{(a)}:=Z^{(a)}(\omega,t)=\mathds{1}\{d(\omega,Y(a))\le t\},\qquad a\in\{0,1\}.
\]
Let the nuisance functions be
\[
\pi(x):=\Prob(A=1\mid X=x),\qquad
m_a(x):=\E[Z\mid A=a,X=x].
\]
Our target estimand is the profile causal effect
\[
\Delta(\omega,t):=\E[Z^{(1)}]-\E[Z^{(0)}].
\]

\subsubsection{Identification (Outcome Regression Form)}

\begin{proposition}[Identification]\label{prop:id}
Under \hyperref[ass:id]{Assumption~\ref*{ass:id}} (SUTVA, conditional exchangeability, positivity),
for each $a\in\{0,1\}$,
\[
\E[Z^{(a)}]=\E\big[m_a(X)\big],
\]
and hence
\[
\Delta(\omega,t)=\E\big[m_1(X)-m_0(X)\big].
\]
\end{proposition}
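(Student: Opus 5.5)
The argument is the standard g-formula computation applied to the binary profile indicator, carried out one arm at a time. Fix \(a\in\{0,1\}\). I would start from the tower property, \(\E[Z^{(a)}]=\E\{\E[Z^{(a)}\mid X]\}\). Conditional exchangeability in \hyperref[ass:id]{Assumption~\ref*{ass:id}}(b) gives \(Z^{(a)}\ind A\mid X\). This holds because \(Z^{(a)}=\mathds 1\{d(\omega,Y(a))\le t\}\) is a measurable function of \(Y(a)\): the map \(y\mapsto d(\omega,y)\) is continuous, so the ball \(\{y:d(\omega,y)\le t\}\) is closed and hence Borel. It follows that \(\E[Z^{(a)}\mid X]=\E[Z^{(a)}\mid A=a,X]\) almost surely.

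To make this conditional expectation well defined, I need \(\Pr(A=a\mid X)>0\) almost surely, which is exactly what positivity (c) guarantees, since \(\Pr(A=a\mid X)\ge\epsilon\). To avoid conditioning on null events, I would write the conditional expectation explicitly as
\[
\E[Z^{(a)}\mathds 1\{A=a\}\mid X]\,\big/\,\Pr(A=a\mid X).
\]

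Next, consistency (a) gives \(Y=Y(a)\) almost surely on \(\{A=a\}\). Hence \(Z^{(a)}\mathds 1\{A=a\}=Z\,\mathds 1\{A=a\}\) almost surely, and so \(\E[Z^{(a)}\mid A=a,X]=\E[Z\mid A=a,X]=m_a(X)\). Taking outer expectations yields \(\E[Z^{(a)}]=\E[m_a(X)]\).

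The second display then follows by linearity, subtracting the \(a=0\) case from the \(a=1\) case. Every term is bounded by one, so all expectations are finite.

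The only step needing care is measure-theoretic. It consists of checking that \(Z^{(a)}\) is a random variable, which uses the Borel measurability of \(Y(a)\) and the closedness of the ball, and of justifying the conditioning step on \(\{A=a\}\) via positivity. Both checks are routine, so I expect no genuine obstacle.
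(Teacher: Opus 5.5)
Your proof is correct and follows the paper's argument exactly: iterated expectations, then conditional exchangeability to insert $A=a$, then consistency to replace $Z^{(a)}$ by $Z$ on $\{A=a\}$, and finally linearity for the contrast. Your measurability and positivity remarks (closed balls are Borel, and the conditional expectation is written as a ratio whose denominator is bounded below by $\epsilon$) make explicit details the paper leaves implicit, without changing the route.
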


\begin{proof}
By the law of iterated expectations,
\[
\E[Z^{(a)}]=\E\!\left[\E\!\left(Z^{(a)}\mid X\right)\right].
\]
By conditional exchangeability, $Z^{(a)}\ind A\mid X$, so
\[
\E\!\left(Z^{(a)}\mid X\right)=\E\!\left(Z^{(a)}\mid A=a,X\right).
\]
By consistency, on the event $\{A=a\}$ we have $Y=Y(a)$ and thus
$Z(\omega,t)=Z^{(a)}(\omega,t)$, implying
\[
\E\!\left(Z^{(a)}\mid A=a,X\right)=\E\!\left(Z\mid A=a,X\right)=m_a(X).
\]
Combining the displays yields $\E[Z^{(a)}]=\E[m_a(X)]$.
Taking the difference between $a=1$ and $a=0$ gives the result for $\Delta(\omega,t)$.
\end{proof}

\subsubsection{Efficient Influence Function}

Define the (observed-data) functional
\[
\Psi_a(P):=\E\big[m_a(X)\big],\qquad a\in\{0,1\},
\]
so that under \hyperref[prop:id]{Proposition~\ref*{prop:id}}, $\Psi_a(P)=\E[Z^{(a)}]$.
In the nonparametric model for $P(A,X,Y)$, the centered EIF for $\Psi_a(P)$ is
\[
\psi_a(O)
=
\frac{\mathds{1}(A=a)}{\pi_a(X)}\{Z-m_a(X)\}+m_a(X)-\Psi_a(P),
\]
where $\pi_1(X)=\pi(X)$ and $\pi_0(X)=1-\pi(X)$.
Consequently, the centered EIF for $\Delta(\omega,t)=\Psi_1(P)-\Psi_0(P)$ is
\[
\psi(O)=\psi_1(O)-\psi_0(O).
\]
This is the standard EIF for ATE-type functionals; see, e.g., \citep{hahn1998role}.

\subsubsection{Double Robustness}

For $a\in\{0,1\}$, define the uncentered AIPW score with generic nuisances
$\bar\pi,\bar m_a$:
\[
\phi_a(O;\bar m_a,\bar\pi)
=
\frac{\mathds{1}(A=a)}{\bar\pi_a(X)}\{Z-\bar m_a(X)\}+\bar m_a(X).
\]

\begin{proposition}[Double robustness]\label{prop:dr}
Assume $\bar\pi_a(X)$ is bounded away from $0$ almost surely.
Then for each $a\in\{0,1\}$,
\[
\E\big[\phi_a(O;\bar m_a,\bar\pi)\big]-\Psi_a(P)
=
\E\left[
\frac{\{m_a(X)-\bar m_a(X)\}\{\pi_a(X)-\bar\pi_a(X)\}}{\bar\pi_a(X)}
\right].
\]
In particular, $\E[\phi_a(O;\bar m_a,\bar\pi)]=\Psi_a(P)$ if either
(i) $\bar m_a=m_a$ a.s.\ or (ii) $\bar\pi=\pi$ a.s.
Consequently, under \hyperref[ass:id]{Assumption~\ref*{ass:id}}, $\E[\phi_1-\phi_0]=\Delta(\omega,t)$ if $\bar\pi=\pi$ a.s., or if $\bar m_0=m_0$ and $\bar m_1=m_1$ a.s.
\end{proposition}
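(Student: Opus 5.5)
The plan is a direct computation of the bias by iterated expectations, conditioning first on \((A,X)\) and then on \(X\). Fix \(a\in\{0,1\}\). Write the uncentered score as
\[
\phi_a(O;\bar m_a,\bar\pi)=\frac{\mathds{1}(A=a)}{\bar\pi_a(X)}\{Z-\bar m_a(X)\}+\bar m_a(X).
\]
Because \(\bar\pi_a\) is bounded away from zero and \(Z,\bar m_a\in[0,1]\), every term is integrable, so expectations can be split freely.

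\textbf{Step 1: condition on \((A,X)\).} Since \(\bar m_a\) and \(\bar\pi_a\) are fixed functions of \(X\), the tower property gives
\[
\E\bigl[\mathds{1}(A=a)\{Z-\bar m_a(X)\}/\bar\pi_a(X)\bigr]=\E\bigl[\mathds{1}(A=a)\{m_a(X)-\bar m_a(X)\}/\bar\pi_a(X)\bigr].
\]
This uses \(\E[Z\mid A=a,X]=m_a(X)\), which holds by definition of \(m_a\).

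\textbf{Step 2: condition on \(X\).} Use \(\E[\mathds{1}(A=a)\mid X]=\pi_a(X)\). This turns the expectation of the score into
\[
\E\bigl[\pi_a(X)\{m_a(X)-\bar m_a(X)\}/\bar\pi_a(X)+\bar m_a(X)\bigr].
\]

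\textbf{Step 3: subtract \(\Psi_a(P)\).} Subtract \(\Psi_a(P)=\E[m_a(X)]\) and write \(\bar m_a-m_a=-\{m_a-\bar m_a\}\cdot\bar\pi_a/\bar\pi_a\). Collecting terms gives
\[
\E\Bigl[\{m_a(X)-\bar m_a(X)\}\frac{\pi_a(X)-\bar\pi_a(X)}{\bar\pi_a(X)}\Bigr],
\]
which is the claimed product form.

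\textbf{Step 4: the "in particular" clauses.} If \(\bar m_a=m_a\) almost surely, the first factor vanishes. If \(\bar\pi=\pi\) almost surely, then \(\bar\pi_a=\pi_a\) for both arms, so the second factor vanishes. Either way the bias is zero.

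\textbf{Step 5: the contrast.} For \(\E[\phi_1-\phi_0]=\Delta(\omega,t)\), take the difference of the two arm-wise identities. When \(\bar\pi=\pi\), both arm biases vanish. When \(\bar m_0=m_0\) and \(\bar m_1=m_1\), both biases vanish as well. This yields \(\Psi_1(P)-\Psi_0(P)\), and Proposition~\ref{prop:id} identifies this with \(\Delta(\omega,t)\) under Assumption~\ref{ass:id}.

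There is no real obstacle. The only point needing care is integrability together with the fact that \(\bar\pi_a\) need not equal \(\pi_a\). The lower bound on \(\bar\pi_a\) ensures the ratio \(\pi_a/\bar\pi_a\) is bounded, which justifies the conditioning steps.
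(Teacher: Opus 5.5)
Your proposal is correct and follows the paper's proof: iterated expectations (the paper conditions on $X$ and evaluates $\E[Z-\bar m_a(X)\mid A=a,X]=m_a(X)-\bar m_a(X)$, which is your Steps 1–2) give $\E[\phi_a\mid X]-m_a(X)=\{m_a(X)-\bar m_a(X)\}\{\pi_a(X)-\bar\pi_a(X)\}/\bar\pi_a(X)$, and averaging over $X$ gives the product bias and the double-robustness consequences. One small point: your integrability remark assumes $\bar m_a\in[0,1]$, which the statement does not impose; only $\E|\bar m_a(X)|<\infty$ is needed together with the lower bound on $\bar\pi_a$, and the fitted profile regressions used elsewhere in the paper satisfy this.
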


\begin{proof}
We prove the case $a=1$; the case $a=0$ is analogous.
Condition on $X$ and apply iterated expectation:
\begin{align*}
\E\big[\phi_1(O;\bar m_1,\bar\pi)\mid X\big]
&=
\E\left[\frac{A}{\bar\pi(X)}(Z-\bar m_1(X))+\bar m_1(X)\ \Big|\ X\right]\\
&=
\frac{\E[A\mid X]}{\bar\pi(X)}
\E\big[Z-\bar m_1(X)\mid A=1,X\big]+\bar m_1(X)\\
&=
\frac{\pi(X)}{\bar\pi(X)}\{m_1(X)-\bar m_1(X)\}+\bar m_1(X).
\end{align*}
Subtracting $m_1(X)$ gives
\[
\E[\phi_1\mid X]-m_1(X)
=
\frac{\{m_1(X)-\bar m_1(X)\}\{\pi(X)-\bar\pi(X)\}}{\bar\pi(X)}.
\]
Taking expectations over $X$ yields the bias formula.
Since $\Psi_1(P)=\E[m_1(X)]$, if $\bar m_1=m_1$ a.s.\ or $\bar\pi=\pi$ a.s., the bias is zero.
\end{proof}

The product-bias display also gives Neyman orthogonality. To see this,
fix an arm \(a\) and perturb the nuisance functions along
\(\bar m_a=m_a+s h_m\) and \(\bar\pi_a=\pi_a+s h_\pi\). The bias becomes
\[
\E\left[
\frac{s^2 h_m(X)h_\pi(X)}{\pi_a(X)+s h_\pi(X)}
\right].
\]
Its derivative at \(s=0\) is zero. The contrast score inherits
the same first-order insensitivity because it is the difference of the
two arm-specific scores.

\subsection{Proofs for Estimation and Inference}\label{app:proofs_est}

We write $\psi_{\mathcal G_n^{\mathrm{cand}}}(O) = \bigl(\psi(O; g):g\in\mathcal G_n^{\mathrm{cand}}\bigr)$ for the centered oracle score vector on the design-exogenous candidate grid, and $\psi_{\mathcal G_n^{\mathrm{inf}}}(O)$ for its restriction to the inferential grid.
\subsubsection{Design Conditioning and Finite-Grid Reduction}

\begin{lemma}[Auxiliary conditioning reduces inference to a fixed finite vector]\label{lem:pilot_fixed_vector}
Under Assumptions~\ref{ass:id} and~\ref{ass:meas_design}, conditionally on
\(\mathcal F_n^{\mathrm{aux}}\), all inference statements over
\(\mathcal G_n^{\mathrm{inf}}\) are statements about the finite vector
\[
    \{\psi(O_i;g):g\in\mathcal G_n^{\mathrm{inf}},\ i=1,\ldots,n\}.
\]
For each fixed \(g\), the coordinate has conditional mean zero and
variance \(\sigma^2(g)\).
\end{lemma}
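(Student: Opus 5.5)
The plan is to make precise what ``conditionally on \(\mathcal F_n^{\mathrm{aux}}\)'' means, and then read off the two moment claims from \hyperref[thm:ident]{Theorem~\ref*{thm:ident}}.

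\textbf{Step 1: freezing the grid.} By \hyperref[ass:meas_design]{Assumption~\ref*{ass:meas_design}}, the inferential grid is \(\mathcal F_n^{\mathrm{aux}}\)-measurable and has a measurable finite enumeration \(g_1,\ldots,g_{p_n}\). The main sample \(O_1,\ldots,O_n\) is i.i.d.\ and independent of \(\mathcal F_n^{\mathrm{aux}}\). I would take a regular conditional distribution given \(\mathcal F_n^{\mathrm{aux}}\). By the freezing (substitution) lemma for independent sigma-fields, for any jointly measurable \(H\),
\[
\E\{H(O_1^n,\mathcal G_n^{\mathrm{inf}})\mid\mathcal F_n^{\mathrm{aux}}\}=h(\mathcal G_n^{\mathrm{inf}}),
\qquad h(G)=\E\{H(O_1^n,G)\},
\]
where \(G\) ranges over deterministic finite grids. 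Under the conditional law, \(\mathcal G_n^{\mathrm{inf}}\) is therefore a fixed finite set and the \(O_i\) remain i.i.d.\ with their original law.

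\textbf{Step 2: reduction to the oracle score array.} Every quantity in the bands and tests is a measurable function of \((O_1^n,\mathcal G_n^{\mathrm{inf}})\) and the multipliers. These quantities are \(\hat\Delta(g)\), \(\sigma(g)\), \(\Gamma_n\), the maxima and the oracle linear term, and maxima are measurable through the enumeration. After Step 1, the oracle part of each statement is a statement about the finite array \(\{\psi(O_i;g):g\in\mathcal G_n^{\mathrm{inf}},\,i\le n\}\).

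To make the measurability rigorous, I need \((\omega,t,y)\mapsto \mathds{1}\{d(\omega,y)\le t\}\) to be jointly Borel. This holds because \(d\) is continuous and \(\Omega\) is separable, so \(\mathcal B(\Omega\times\Omega)=\mathcal B(\Omega)\otimes\mathcal B(\Omega)\). I also need \(m_a(g;x)\) and \(\pi\) to be jointly measurable versions. This is the only delicate point, and the step I expect to be the main obstacle. I would handle it by taking a regular conditional distribution of \(Y\) given \((A,X)\), which exists because \(\Omega\) is Polish or at least separable metrizable. I would then define \(m_a(g;x)\) as the integral of the jointly measurable indicator against it, which is jointly measurable in \((g,x)\).

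\textbf{Step 3: the two moments.} For fixed deterministic \(g\), \hyperref[thm:ident]{Theorem~\ref*{thm:ident}} gives \(\E\{\phi(O;g,\eta_0)\}=\Delta(g)\), so \(\E\psi(O;g)=0\). Applying the freezing lemma, the conditional mean of \(\psi(O_i;g)\) at \(g=g_k\) is \(0\). The score is bounded: \(|\psi|\le 2+2/\epsilon\) by positivity. Its conditional variance is therefore finite and equals \(\sigma^2(g)\) by definition; Remark~\ref{rem:variance_lower} gives the explicit formula. This completes the lemma.
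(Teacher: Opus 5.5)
Your proof is correct and follows the paper's argument: conditioning on \(\mathcal F_n^{\mathrm{aux}}\), together with measurability of the grid and independence of the main sample, makes \(\mathcal G_n^{\mathrm{inf}}\) a fixed finite set while \(O_1,\ldots,O_n\) stay i.i.d., so the conditional mean zero comes from Theorem~\ref{thm:ident} and the variance from the definition of \(\sigma^2(g)\); your measurability work in Steps 1--2 goes beyond what the paper writes. One correction there: regular conditional distributions need not exist on a merely separable metric space, so instead take the regular conditional law of \(Y\) given \((A,X)\) on the Polish completion \((\bar\Omega,\bar d)\) of \(\Omega\), and integrate the closed set \(\{y:\bar d(\omega,y)\le t\}\) against it to obtain the jointly measurable version of \(m_a(g;x)\) that you need.
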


\begin{proof}
Because the grid is measurable with respect to
\(\mathcal F_n^{\mathrm{aux}}\), conditioning on the auxiliary sigma-field
turns \(\mathcal G_n^{\mathrm{inf}}\) into a nonrandom finite set. The
auxiliary sigma-field is independent of the main analysis sample, so the
conditional law of \(O_1,\ldots,O_n\) remains the product law of the
main-sample observations. Write the realized grid as
\(g_1,\ldots,g_{p_n}\). For each realized coordinate \(g_k\),
\[
    \E\{\psi(O;g_k)\mid \mathcal F_n^{\mathrm{aux}}\}
    =
    \E\{\phi(O;g_k,\eta_0)-\Delta(g_k)\}=0,
\]
where the last equality follows from \hyperref[thm:ident]{Theorem~\ref*{thm:ident}}. The variance statement is the definition of
\(\sigma^2(g_k)\). Since the grid is finite after conditioning, these
coordinatewise facts define a finite-dimensional score vector.
\end{proof}

\subsubsection{\texorpdfstring{Uniform Remainder under \hyperref[ass:uniform_remainder]{Assumption~\ref*{ass:uniform_remainder}}}{Uniform Remainder}}\label{app:uniform_remainder_proof}

\begin{lemma}[Conditional finite-class maximal inequality]\label{lem:conditional_finite_maximal}
Let \(\mathcal I\) be a validation fold with \(m=|\mathcal I|\), and
let \(\mathcal F_n=\{f_g:g\in\mathcal H_n\}\) be a finite class with
\(|\mathcal H_n|=N\). Condition on a sigma-field \(\mathcal S\) such
that the observations in \(\mathcal I\) are conditionally i.i.d. and
independent of \(\mathcal S\), while the functions \(f_g\) are
\(\mathcal S\)-measurable. Suppose, conditionally on \(\mathcal S\),
\[
    \E\{f_g(O)\mid\mathcal S\}=0,\qquad
    \|f_g\|_{P,2}\le \rho,\qquad
    \|f_g\|_\infty\le B
\]
for all \(g\in\mathcal H_n\), where \(B<\infty\) is fixed. Then
\[
    \sup_{g\in\mathcal H_n}
    \left|
    m^{-1}\sum_{i\in\mathcal I} f_g(O_i)
    \right|
    =
    O_P\left(
    \rho\sqrt{\frac{\log(N\vee2)}{m}}
    +
    \frac{\log(N\vee2)}{m}
    \right),
\]
where the probability is under the joint law after integrating over
\(\mathcal S\).
\end{lemma}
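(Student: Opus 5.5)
The plan is to prove the bound conditionally on \(\mathcal S\) using Bernstein's inequality plus a union bound over the \(N\) functions, and then pass to the joint law by integrating out \(\mathcal S\). Conditionally on \(\mathcal S\), the functions \(f_g\) are fixed and the observations in \(\mathcal I\) are i.i.d. and independent of \(\mathcal S\). So for each fixed \(g\) the summands \(f_g(O_i)\), \(i\in\mathcal I\), are i.i.d. with mean zero, variance at most \(\rho^2\), and absolute value at most \(B\).

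The first step is the single-function tail bound. Bernstein's inequality gives, for every \(x>0\),
\[
\Pr\Bigl(\Bigl|m^{-1}\textstyle\sum_{i\in\mathcal I}f_g(O_i)\Bigr|>\rho\sqrt{2x/m}+\tfrac{2Bx}{3m}\Bigm|\mathcal S\Bigr)\le 2e^{-x}.
\]
The second step is the union bound over the \(N\) functions with \(x=\log(N\vee2)+u\). It yields, conditionally on \(\mathcal S\), that the supremum exceeds
\[
\rho\sqrt{2\{\log(N\vee2)+u\}/m}+2B\{\log(N\vee2)+u\}/(3m)
\]
with probability at most \(2Ne^{-\log(N\vee2)-u}\le2e^{-u}\). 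Since \(\log(N\vee2)\ge\log2\), for \(u\ge0\) we have \(\log(N\vee2)+u\le(1+u/\log 2)\log(N\vee2)\). Choosing \(u\) large then gives a constant \(C_\delta\), depending only on \(\delta\) and \(B\), such that
\[
\Pr\Bigl(\sup_g|\cdot|>C_\delta\bigl(\rho\sqrt{\log(N\vee2)/m}+\log(N\vee2)/m\bigr)\Bigm|\mathcal S\Bigr)\le\delta .
\]

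The last step integrates this conditional bound over \(\mathcal S\), which gives the same unconditional bound \(\delta\). That is exactly the stated \(O_P\) claim under the joint law.

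The step needing the most care is the meaning of \(\rho\). The bound is uniform in \(\mathcal S\) only if \(\rho\) bounds the conditional \(L^2\) norms, and \(\rho\) may itself be \(\mathcal S\)-measurable and random, as with the nuisance-error rates later. The conditional inequality holds with the random \(\rho\) plugged in, and \(C_\delta\) does not depend on \(\mathcal S\). The integrated probability statement therefore remains valid with \(\rho\) random, so no extra argument is needed.
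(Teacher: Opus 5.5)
Your proposal is correct and follows the paper's proof: conditional Bernstein for each fixed \(g\), a union bound with \(x=\log(N\vee2)+u\) giving conditional tail probability \(2e^{-u}\), and integration over \(\mathcal S\) to pass to the joint law. Your explicit constants, and your remark that \(\rho\) may be \(\mathcal S\)-measurable (as when the lemma is applied with \(\rho=C\rho_{n,j}\)), make explicit what the paper leaves implicit.
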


\begin{proof}
Fix \(g\) and condition on \(\mathcal S\). Bernstein's inequality for
bounded, conditionally independent, mean-zero summands gives, for every
\(x>0\),
\[
\Pr\left(
\left|
m^{-1}\sum_{i\in\mathcal I}f_g(O_i)
\right|
>
C\left\{\rho\sqrt{\frac{x}{m}}+\frac{Bx}{m}\right\}
\ \middle|\ \mathcal S
\right)
\le
2e^{-x},
\]
for a universal constant \(C\). Taking \(x=\log(N\vee2)+u\) and applying
the union bound over the \(N\) coordinates yields
\[
\Pr\left(
\sup_{g\in\mathcal H_n}
\left|
m^{-1}\sum_{i\in\mathcal I}f_g(O_i)
\right|
>
C'\left\{\rho\sqrt{\frac{\log(N\vee2)+u}{m}}
\!+\!
\frac{B\{\log(N\vee2)+u\}}{m}\right\}
\ \middle|\ \mathcal S
\right)
\le
2e^{-u}.
\]
Since \(B\) is fixed, this conditional tail bound implies the asserted
order in conditional probability, and therefore also in joint
probability.
\end{proof}

\begin{lemma}[Finite-grid orthogonal remainder]\label{lem:finite_grid_orthogonal_remainder}
Assume sample splitting with a fixed number of folds. For each validation
fold, nuisance estimators are trained without using observations in that
fold. Suppose the overlap bound in \hyperref[ass:id]{Assumption~\ref*{ass:id}} holds, the
indicator outcomes \(Z(g)\) are bounded, and the fitted propensities and
profile regressions are bounded as in \hyperref[ass:uniform_remainder]{Assumption~\ref*{ass:uniform_remainder}}.
Suppose further that the uniform nuisance rates in \hyperref[ass:uniform_remainder]{Assumption~\ref*{ass:uniform_remainder}} hold over
\(\mathcal G_n^{\mathrm{cand}}\), and that the dimension-growth condition 
in \hyperref[ass:grid_det]{Assumption~\ref*{ass:grid_det}} holds. Then
\[
\sup_{g\in\mathcal G_n^{\mathrm{cand}}}
\left|
\sqrt n\{\hat\Delta(g)-\Delta(g)\}
-
n^{-1/2}\sum_{i=1}^n\psi(O_i;g)
\right|
=
o_P(L_n^{-1/2}).
\]
\end{lemma}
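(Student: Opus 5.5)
We decompose fold by fold. For validation fold \(\mathcal I_j\) with nuisance \(\hat\eta^{(-j)}\), write
\[
\frac{1}{\sqrt n}\sum_{i\in\mathcal I_j}\{\hat\phi_i(g)-\phi(O_i;g,\eta_0)\}
=\frac{|\mathcal I_j|}{\sqrt n}\,\bigl(R_{1,j}(g)+R_{2,j}(g)\bigr).
\]
Here \(R_{1,j}(g)=(\mathbb P_{\mathcal I_j}-P)\{\phi(\cdot;g,\hat\eta^{(-j)})-\phi(\cdot;g,\eta_0)\}\) is the centered empirical-process term. The bias term is \(R_{2,j}(g)=P\{\phi(\cdot;g,\hat\eta^{(-j)})-\phi(\cdot;g,\eta_0)\}\), with the expectation taken over a fresh observation given the training data. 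Summing over \(j\le J\), and using \(\sum_i\phi(O_i;g,\eta_0)-n\Delta(g)=\sum_i\psi(O_i;g)\), the quantity inside the supremum equals \(\sum_j(|\mathcal I_j|/\sqrt n)(R_{1,j}+R_{2,j})\). Since \(J\) is fixed, it suffices to show that \(\sqrt n\sup_g|R_{1,j}(g)|\) and \(\sqrt n\sup_g|R_{2,j}(g)|\) are both \(o_P(L_n^{-1/2})\) for each \(j\), with suprema over \(\mathcal G_n^{\mathrm{cand}}\).

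\textbf{Bias term.} Apply \hyperref[prop:dr]{Proposition~\ref*{prop:dr}} conditionally on the training data, with \(\bar\pi=\hat\pi^{(-j)}\) and \(\bar m_a=\hat m_a^{(-j)}\). The clipping gives \(\bar\pi_a\ge\epsilon_{\rm clip}\), and the proposition yields, for each arm,
\[
R_{2,j}(g)=\E\left[\frac{\{m_a-\hat m_a\}\{\pi_a-\hat\pi_a\}}{\hat\pi_a}\right].
\]
By Cauchy--Schwarz this is at most \(\epsilon_{\rm clip}^{-1}r_{m,n}r_{\pi,n}\) uniformly over \(g\in\mathcal G_n^{\mathrm{cand}}\). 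Hence
\[
\sqrt n\sup_g|R_{2,j}|\lesssim\sqrt n\,r_{m,n}r_{\pi,n},
\]
which is \(o_P(L_n^{-1/2})\) by the product-rate condition in \hyperref[ass:uniform_remainder]{Assumption~\ref*{ass:uniform_remainder}}.

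\textbf{Empirical-process term.} Condition on \(\mathcal S\), generated by \(\mathcal F_n^{\mathrm{aux}}\), the fold partition and the off-fold data. Given \(\mathcal S\), the grid \(\mathcal G_n^{\mathrm{cand}}\) and the functions
\[
f_g=\phi(\cdot;g,\hat\eta^{(-j)})-\phi(\cdot;g,\eta_0)-R_{2,j}(g)
\]
are fixed, while the fold observations are i.i.d. and independent of \(\mathcal S\), by \hyperref[ass:meas_design]{Assumption~\ref*{ass:meas_design}} and cross-fitting.

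Next we bound \(f_g\). Clipping, \(0\le\hat m\le1\), positivity and \(Z\in\{0,1\}\) give \(\|f_g\|_\infty\le B\) with \(B\) depending only on \(\epsilon_{\rm clip}\). Expanding the score difference gives three kinds of terms: \((\hat m_a-m_a)(1-\mathds 1(A=a)/\hat\pi_a)\), \(\mathds 1(A=a)(Z-m_a)(1/\hat\pi_a-1/\pi_a)\), and cross terms. These yield
\[
\|f_g\|_{P,2}\lesssim r_{m,n}+r_{\pi,n}=:\rho_n \quad\text{uniformly in } g.
\]
\hyperref[lem:conditional_finite_maximal]{Lemma~\ref*{lem:conditional_finite_maximal}}, with \(N=N_n^{\mathrm{cand}}\) and \(m\ge c_Jn\), then gives
\[
\sqrt n\sup_g|R_{1,j}|=O_P\bigl(\rho_n\sqrt{L_n}+L_n/\sqrt n\bigr).
\]
Multiplying by \(\sqrt{L_n}\), we need \(\rho_nL_n\to_P0\) and \(L_n^{3/2}/\sqrt n\to0\). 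The first follows from \((r_{m,n}^2+r_{\pi,n}^2)L_n^2=o_P(1)\). The second follows from \(\log^7(N_n^{\mathrm{cand}}n)=o(n)\) in \hyperref[ass:grid_det]{Assumption~\ref*{ass:grid_det}}.

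\textbf{Main obstacle.} The lemma is stated with a deterministic \(\rho\), but here \(\rho_n\) is random. We handle this by intersecting with the \(\mathcal S\)-measurable event \(\{\rho_n\le\delta_nL_n^{-1}\}\), with \(\delta_n\to0\) slowly; this event has probability tending to one. On it, the conditional Bernstein/union-bound tail of the lemma applies with \(\rho=\delta_nL_n^{-1}\). Integrating the conditional bound over \(\mathcal S\) then converts the conditional statement into a joint \(o_P\) statement.
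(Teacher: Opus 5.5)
Your proposal is correct and follows the paper's proof essentially step for step. It uses the same fold-wise split into the conditional bias \(Pe_{jg}\), bounded via the doubly robust product formula and Cauchy--Schwarz by \(Cr_{m,n}r_{\pi,n}\), and the centered term \((P_{n,j}-P)e_{jg}\), controlled by Lemma~\ref{lem:conditional_finite_maximal} with the squared-rate and dimension-growth conditions. One small repair is needed in your last step: \(\rho_n=r_{m,n}+r_{\pi,n}\) maximizes over all folds, and the other folds' nuisances are trained on \(\mathcal I_j\), so \(\{\rho_n\le\delta_nL_n^{-1}\}\) is not \(\mathcal S\)-measurable. Use instead the fold-\(j\) error \(\rho_{n,j}=\max_{a,g}\|\hat m_a^{(-j)}(g;\cdot)-m_a(g;\cdot)\|_{P,2}+\|\hat\pi^{(-j)}-\pi\|_{P,2}\le\rho_n\), as the paper does, and include any internal learner randomization in \(\mathcal S\); the rest of your argument then goes through unchanged.
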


\begin{proof}
Fix a fold \(j\) and let \(\mathcal S_j\) be the sigma-field generated by
\(\mathcal F_n^{\mathrm{aux}}\), the fold partition, the training
observations of fold \(j\), and any internal randomization of the
learners. Given \(\mathcal S_j\), the grid and the fitted nuisance tuple
\(\hat\eta^{(-j)}\) are fixed, and the validation observations in
\(\mathcal I_j\) are i.i.d.\ and independent of them. Write
\[
    e_{jg}(O)=\phi(O;g,\hat\eta^{(-j)})-\phi(O;g,\eta_0),
\]
and let \(P\) integrate over a new observation \(O\) with \(\mathcal S_j\)
held fixed, and let \(P_{n,j}\) be the empirical measure of fold \(j\). The fold contribution \(P_{n,j}e_{jg}\) splits into the
conditional bias \(B_{n,j}(g)=Pe_{jg}\) and the centered
empirical-process term \(E_{n,j}(g)=(P_{n,j}-P)e_{jg}\).

For the bias term, the AIPW algebra gives, uniformly in \(g\),
\[
\begin{aligned}
B_{n,j}(g)
&=
P\left[
\frac{\{\hat\pi(X)-\pi(X)\}\{\hat m_1(g;X)-m_1(g;X)\}}{\hat\pi(X)}
\right]\\
&\quad +
P\left[
\frac{\{\hat\pi(X)-\pi(X)\}\{\hat m_0(g;X)-m_0(g;X)\}}{1-\hat\pi(X)}
\right].
\end{aligned}
\]
True overlap and estimated clipping bound the
denominators away from zero, so
Cauchy--Schwarz yields
\[
    \sup_g |B_{n,j}(g)| \le C r_{\pi,n}r_{m,n}.
\]
Multiplying by \(\sqrt n\) and by
\(\sqrt{L_n}\) gives a vanishing bound by
\hyperref[ass:uniform_remainder]{Assumption~\ref*{ass:uniform_remainder}}.

For the centered empirical-process term, let
\(\rho_{n,j}=\max_{a,g}\|\hat m_a^{(-j)}(g;\cdot)-m_a(g;\cdot)\|_{P,2}+\|\hat\pi^{(-j)}-\pi\|_{P,2}\)
be the error of the nuisance fit used on fold \(j\); it is fixed given \(\mathcal S_j\) and satisfies \(\rho_{n,j}\le r_{m,n}+r_{\pi,n}\). The centered score
perturbation \(e_{jg}-Pe_{jg}\) is, conditionally on \(\mathcal S_j\), a bounded mean-zero
validation-fold function whose \(L_2(P)\) norm is bounded by
\(C\rho_{n,j}\), uniformly over the grid. The boundedness and
the Lipschitz \(L_2(P)\) bound follow from the clipped propensity
denominators, the bounded profile regressions, and the fact that
\(Z(g)\in[0,1]\). Applying
\hyperref[lem:conditional_finite_maximal]{Lemma~\ref*{lem:conditional_finite_maximal}} with
\(\rho=C\rho_{n,j}\) and using
\(|\mathcal I_j|\ge c_J n\) gives
\[
    \sup_g |E_{n,j}(g)|
    =
    O_P\!\left\{
    (r_{m,n}+r_{\pi,n})
    \sqrt{\frac{L_n}{n}}
    +
    \frac{L_n}{n}
    \right\}.
\]
The squared-rate logarithmic condition in
\hyperref[ass:uniform_remainder]{Assumption~\ref*{ass:uniform_remainder}} implies that the first term,
after multiplication by \(\sqrt n\), is
\(o_P(L_n^{-1/2})\). The dimension
condition in \hyperref[ass:grid_det]{Assumption~\ref*{ass:grid_det}} implies
\(L_n^{3/2}/\sqrt n=o(1)\), so the
second term has the same negligible scale. Hence
\(\sup_g\sqrt n |E_{n,j}(g)|
=o_P(L_n^{-1/2})\). The number of
folds is fixed, so taking the maximum over folds preserves the same
order. Combining the bias and empirical-process bounds proves the
claimed uniform linearization.
\end{proof}

\subsubsection{Studentization and Covariance Control}\label{app:primitive_covariance}

When the grid grows, the feasible studentized maximum stays close to its
oracle counterpart provided the relative error of the estimated standard
deviations is small compared with the reciprocal of the logarithmic
dimension. The boundedness of the AIPW score delivers this rate
directly.

\begin{lemma}[Estimated-score covariance on a finite grid]
\label{lem:primitive_covariance}
Suppose Assumptions~\ref{ass:id}, \ref{ass:grid_det}, \ref{ass:meas_design},
and~\ref{ass:uniform_remainder} hold, and let \(\rho_n=r_{m,n}+r_{\pi,n}\) and
\(L_n=\log(N_n^{\mathrm{cand}}\vee2)\). Let \(\hat\Sigma_n\) be the
empirical covariance matrix of the centered cross-fitted scores on
\(\mathcal G_n^{\mathrm{inf}}\), and let \(\Sigma_n\) be the conditional
covariance matrix of the oracle scores. Then
\[
\|\hat\Sigma_n-\Sigma_n\|_{\max}
=O_P\{\rho_n+\sqrt{L_n/n}+L_n/n\}.
\]
If, in addition, \(\inf_{g\in\mathcal G_n^{\mathrm{inf}}}\sigma(g)\ge
c_\sigma>0\), the same order bounds
\(\max_g|\hat\sigma(g)/\sigma(g)-1|\) and
\(\|\hat\Gamma_n-\Gamma_n\|_{\max}\).
\end{lemma}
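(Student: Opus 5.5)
The plan is to compare the empirical second moments entry by entry, working fold by fold and conditionally on \(\mathcal S_j\) as in the proof of \hyperref[lem:finite_grid_orthogonal_remainder]{Lemma~\ref*{lem:finite_grid_orthogonal_remainder}}. Write \(\hat\phi_i(g)=\phi_i^0(g)+e_i(g)\), where \(\phi_i^0(g)=\phi(O_i;g,\eta_0)\) and \(e_i(g)=e_{jg}(O_i)\) for \(i\in\mathcal I_j\). Then \(\hat\phi_i(g)-\hat\Delta(g)=\psi(O_i;g)+\{e_i(g)-\bar e(g)\}-\bar\psi(g)\), with bars denoting sample means. Expanding \(\hat\Sigma_n(g,g')\) yields four kinds of terms:
\begin{itemize}
\item the oracle term \(n^{-1}\sum_i\psi(O_i;g)\psi(O_i;g')\);
\item cross terms \(n^{-1}\sum_i\psi(O_i;g)\{e_i(g')-\bar e(g')\}\);
\item the quadratic perturbation term \(n^{-1}\sum_i\{e_i(g)-\bar e(g)\}\{e_i(g')-\bar e(g')\}\);
\item the centering corrections \(\bar\psi(g)\bar\psi(g')\) and their mixed analogues.
\end{itemize}
All scores are uniformly bounded by a constant depending only on \(\epsilon_{\rm clip}\) and \(\epsilon\), because \(Z(g)\in[0,1]\), the regressions lie in \([0,1]\), and the propensities are clipped. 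This boundedness is used throughout.

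\textbf{Step 1 (oracle term).} Conditionally on \(\mathcal F_n^{\mathrm{aux}}\), the products \(\psi(O;g)\psi(O;g')-\Sigma_n(g,g')\) form a finite, centered, bounded class indexed by at most \((N_n^{\mathrm{cand}})^2\) pairs. \hyperref[lem:conditional_finite_maximal]{Lemma~\ref*{lem:conditional_finite_maximal}}, with \(\rho\) of order one and \(\log(N^2)\le 2L_n\), gives a deviation of order \(O_P(\sqrt{L_n/n}+L_n/n)\). The same lemma bounds \(\max_g|\bar\psi(g)|\) by \(O_P(\sqrt{L_n/n})\), so the centering products are \(O_P(L_n/n)\).

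\textbf{Step 2 (perturbation terms).} By Cauchy--Schwarz in the empirical measure, the cross and quadratic terms are bounded by
\[
C\max_g\Bigl(n^{-1}\sum_i e_i(g)^2\Bigr)^{1/2}+\max_g\Bigl(n^{-1}\sum_i e_i(g)^2\Bigr).
\]
Bounding this quantity is the main work. On each fold, write \(P_{n,j}e_{jg}^2=Pe_{jg}^2+(P_{n,j}-P)e_{jg}^2\).
\begin{itemize}
\item The Lipschitz bound from the proof of \hyperref[lem:finite_grid_orthogonal_remainder]{Lemma~\ref*{lem:finite_grid_orthogonal_remainder}} gives \(Pe_{jg}^2\le C\rho_n^2\).
\item Since \(e_{jg}^2\) is bounded and \(\mathcal S_j\)-measurable, \hyperref[lem:conditional_finite_maximal]{Lemma~\ref*{lem:conditional_finite_maximal}} gives \(\sup_g|(P_{n,j}-P)e_{jg}^2|=O_P(\rho_n\sqrt{L_n/n}+L_n/n)\).
\end{itemize}
Hence \(\max_g n^{-1}\sum_i e_i(g)^2=O_P(\rho_n^2+L_n/n)\), and its square root is \(O_P(\rho_n+\sqrt{L_n/n})\). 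The mean \(\bar e(g)\) is at most this root-mean-square, so it is absorbed. One could sharpen the cross term to \(O_P(\rho_n\sqrt{L_n/n})\) by applying the maximal inequality conditionally on \(\mathcal S_j\), but the crude Cauchy--Schwarz bound already gives the stated order \(\rho_n\). Summing over the fixed number of folds preserves the order.

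\textbf{Step 3 (standard deviations and correlations).} Take \(g=g'\) in the bound above to get \(\max_g|\hat\sigma^2(g)-\sigma^2(g)|\) of the same order \(\delta_n\). With \(\sigma(g)\ge c_\sigma\), the identity \(\hat\sigma/\sigma-1=(\hat\sigma^2-\sigma^2)/\{\sigma(\hat\sigma+\sigma)\}\) gives \(\max_g|\hat\sigma(g)/\sigma(g)-1|\le c_\sigma^{-2}\delta_n\). For correlations, write
\[
\hat\Gamma_n-\Gamma_n=\frac{\hat\Sigma_n-\Sigma_n}{\hat\sigma\hat\sigma'}+\Sigma_n\Bigl(\frac{1}{\hat\sigma\hat\sigma'}-\frac{1}{\sigma\sigma'}\Bigr).
\]
The ratio bound shows \(\hat\sigma\ge c_\sigma/2\) with probability tending to one, because \(\delta_n=o_P(1)\) under \hyperref[ass:uniform_remainder]{Assumption~\ref*{ass:uniform_remainder}} and the dimension condition. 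Since \(\Sigma_n\) is bounded, both pieces are \(O_P(\delta_n)\) uniformly.

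The main obstacle I expect is Step 2. It requires verifying that the squared perturbations \(e_{jg}^2\) admit the conditional finite-class maximal inequality with the correct \(\rho_n\)-scale. It also requires handling the normalization carefully, since the centering uses the pooled mean rather than fold means.
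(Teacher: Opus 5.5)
Your proposal is correct and matches the paper's proof. Like the paper, you obtain \(\max_g P_{n,j}e_{jg}^2=O_P(\rho_n^2+L_n/n)\) from Bernstein plus a union bound, here through the conditional finite-class maximal inequality. Cauchy--Schwarz with the bounded oracle scores then controls the centered cross products, a union bound over at most \((N_n^{\mathrm{cand}})^2\) bounded oracle products handles the oracle covariance, and Lipschitz continuity of the square root and of division by standard deviations bounded below gives the claims for \(\hat\sigma/\sigma\) and \(\hat\Gamma_n\). Your explicit expansion around the pooled means \(\bar\psi\) and \(\bar e\) simply spells out the paper's step about ``changes in all empirical first and second moments.''
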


\begin{proof}
For fold \(j\), let \(\mathcal S_j\) and \(e_{jg}\) be as in the proof of
\hyperref[lem:finite_grid_orthogonal_remainder]{Lemma~\ref*{lem:finite_grid_orthogonal_remainder}}. Clipping and
bounded regressions give a fixed envelope, and
\(\max_g Pe_{jg}^2\le C\rho_{n,j}^2\), where \(\rho_{n,j}\le\rho_n\) is
the error of the nuisance fit used on fold \(j\). Conditionally on \(\mathcal S_j\), Bernstein's inequality applied to
\(e_{jg}^2\), whose variance is at most \(CPe_{jg}^2\), together with a
union bound over the grid, gives
\[
\max_g P_{n,j}e_{jg}^2
=O_P\{\rho_n^2+\rho_n\sqrt{L_n/n}+L_n/n\}
=O_P(\rho_n^2+L_n/n).
\]
Since the number of folds is fixed, the pooled empirical \(L^2\) distance
between estimated and oracle scores is \(O_P(\rho_n+\sqrt{L_n/n})\)
uniformly in \(g\). By the Cauchy--Schwarz inequality and the boundedness
of the oracle score, the same bound holds for the changes in all
empirical first and second moments, and hence in the centered cross
products. For the oracle scores, a union bound over at most
\((N_n^{\mathrm{cand}})^2\) bounded products gives an empirical covariance
error of order \(O_P(\sqrt{L_n/n}+L_n/n)\). Combining the two bounds
proves the first claim. When the oracle standard deviations are bounded
below, the square root and the division by standard deviations are
uniformly Lipschitz on the relevant range, which gives the remaining
claims.
\end{proof}

\hyperref[ass:uniform_remainder]{Assumption~\ref*{ass:uniform_remainder}} implies \(\rho_nL_n=o_P(1)\), and
the dimension condition in \hyperref[ass:grid_det]{Assumption~\ref*{ass:grid_det}} implies
\(L_n^{3/2}/\sqrt n\to0\). \hyperref[lem:primitive_covariance]{Lemma~\ref*{lem:primitive_covariance}}
therefore yields
\begin{equation}\label{eq:correct_scale_rate}
\max_{g\in\mathcal G_n^{\mathrm{inf}}}
\left|\frac{\hat\sigma(g)}{\sigma(g)}-1\right|L_n=o_P(1),
\end{equation}
which is the studentization bound noted after \hyperref[ass:gauss_boot]{Assumption~\ref*{ass:gauss_boot}}. Because the
oracle-studentized maximum is \(O_P(\sqrt{L_n})\), replacing
\(\sigma(g)\) by \(\hat\sigma(g)\) changes it by \(o_P(L_n^{-1/2})\), the
scale controlled by Gaussian anti-concentration.

\begin{corollary}[Explicit sufficient rates]\label{cor:primitive_grid}
Suppose Assumptions~\ref{ass:id}, \ref{ass:grid_det}, and
\ref{ass:meas_design} hold, the cross-fitting partition, clipped
propensities, and bounded regressions are as in
\hyperref[ass:uniform_remainder]{Assumption~\ref*{ass:uniform_remainder}}, and
\[
(r_{m,n}+r_{\pi,n})L_n^2=o_P(1),
\qquad
\sqrt n\,r_{m,n}r_{\pi,n}\sqrt{L_n}=o_P(1).
\]
Then Assumptions~\ref{ass:uniform_remainder} and~\ref{ass:gauss_boot}
hold. Consequently, the conclusions of \hyperref[thm:clt]{Theorem~\ref*{thm:clt}},
\hyperref[cor:scb]{Corollary~\ref*{cor:scb}}, and \hyperref[thm:global_test]{Theorem~\ref*{thm:global_test}} hold under
these rate conditions.
\end{corollary}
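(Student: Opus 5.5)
The two claims are handled separately: first we verify Assumption~\ref{ass:uniform_remainder}, then Assumption~\ref{ass:gauss_boot}. Once both hold, Theorem~\ref{thm:clt}, Corollary~\ref{cor:scb}, and Theorem~\ref{thm:global_test} apply as stated, so nothing further is needed for the ``consequently'' part.

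\emph{Assumption~\ref{ass:uniform_remainder}.} The structural requirements (partition, clipping, bounded regressions) are assumed directly, so only the two rate conditions need checking. The product-rate condition $\sqrt n\,r_{m,n}r_{\pi,n}\sqrt{L_n}=o_P(1)$ is assumed verbatim. For the squared-rate condition, write $\rho_n=r_{m,n}+r_{\pi,n}$. Then $(r_{m,n}^2+r_{\pi,n}^2)L_n^2\le\rho_n^2L_n^2=(\rho_nL_n)^2$. The hypothesis $\rho_nL_n^2=o_P(1)$, together with $L_n\ge\log2$, gives $\rho_nL_n=o_P(1)$, hence $(\rho_nL_n)^2=o_P(1)$.

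\emph{Assumption~\ref{ass:gauss_boot}.} We work on $\mathcal A_n$, where $\inf_g\sigma(g)\ge c_\sigma$. Lemma~\ref{lem:primitive_covariance}, whose hypotheses now hold, gives
\[
\|\hat\Gamma_n-\Gamma_n\|_{\max}=O_P\{\rho_n+\sqrt{L_n/n}+L_n/n\}.
\]
Multiplying by $L_n^2$ leaves three terms.
\begin{itemize}[leftmargin=*]
\item The first is $\rho_nL_n^2=o_P(1)$ by hypothesis. This is exactly why the corollary uses the first power of $\rho_n$ against $L_n^2$, rather than the squared rate of Assumption~\ref{ass:uniform_remainder}.
\item The second is $L_n^{5/2}n^{-1/2}$.
\item The third is $L_n^3/n$.
\end{itemize}
For the last two, note that $\log^7(N_n^{\mathrm{cand}}n)=o(n)$ and $L_n\lesssim\log(N_n^{\mathrm{cand}}n)$ give $L_n^7=o(n)$. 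Hence $L_n^{5/2}/\sqrt n=(L_n^5/n)^{1/2}\to0$ and $L_n^3/n\to0$.

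Measurability causes no trouble. The statement in Assumption~\ref{ass:gauss_boot} is made on $\mathcal A_n$, and Lemma~\ref{lem:primitive_covariance} already conditions on $\mathcal F_n^{\mathrm{aux}}$ fold by fold, with $O_P$ taken under the joint law.

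\emph{Main obstacle.} The only delicate step is checking that Lemma~\ref{lem:primitive_covariance} delivers a \emph{linear} rate $\rho_n$ for the correlation error, so that it can be traded against $L_n^2$. This holds because the perturbation of the empirical covariances is bounded by Cauchy--Schwarz, using the pooled empirical $L^2$ distance $O_P(\rho_n+\sqrt{L_n/n})$. The passage from covariance to correlation is Lipschitz on $\{\sigma\ge c_\sigma\}$. That passage also needs $\hat\sigma$ to stay bounded below with probability tending to one, which follows from the same bound on $\max_g|\hat\sigma/\sigma-1|$.

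Everything else is bookkeeping of logarithmic factors.
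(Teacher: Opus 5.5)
Your proof is correct and follows the paper's argument essentially line for line. You deduce the squared-rate part of Assumption~\ref{ass:uniform_remainder} from $\rho_nL_n^2=o_P(1)$, take the product-rate condition verbatim, and obtain Assumption~\ref{ass:gauss_boot} by multiplying the bound of Lemma~\ref{lem:primitive_covariance} by $L_n^2$ and using $\log^7(N_n^{\mathrm{cand}}n)=o(n)$ to eliminate $\sqrt{L_n^5/n}$ and $L_n^3/n$; your observation that the linear rate in $\rho_n$ is what forces the hypothesis $\rho_nL_n^2=o_P(1)$ correctly explains why the corollary is stated this way.
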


\begin{proof}
The first rate condition implies
\((r_{m,n}^2+r_{\pi,n}^2)L_n^2=o_P(1)\), so
\hyperref[ass:uniform_remainder]{Assumption~\ref*{ass:uniform_remainder}} holds.
\hyperref[lem:primitive_covariance]{Lemma~\ref*{lem:primitive_covariance}} gives
\(\|\hat\Gamma_n-\Gamma_n\|_{\max}L_n^2
=O_P\{\rho_nL_n^2+\sqrt{L_n^5/n}+L_n^3/n\}=o_P(1)\), which is
\hyperref[ass:gauss_boot]{Assumption~\ref*{ass:gauss_boot}}.
\end{proof}

\subsubsection{\texorpdfstring{Proof of \hyperref[thm:clt]{Theorem~\ref*{thm:clt}} (High-Dimensional Gaussian Approximation)}{Proof of the Gaussian Approximation}}

\begin{lemma}[Perturbation transfer for Gaussian maxima]\label{lem:max_perturbation_transfer}
Let \(S_n\), \(T_n\), and \(V_n\) be real-valued statistics such that
\[
    \sup_t |\Pr(S_n\le t\mid\mathcal F_n^{\mathrm{aux}})
    -\Pr(V_n\le t\mid\mathcal F_n^{\mathrm{aux}})|\to_P0
\]
and assume there exist \(\mathcal F_n^{\mathrm{aux}}\)-measurable \(\delta_n\ge0\) with
\(\delta_n\sqrt{L_n}\to0\) such that
\[
    \Pr(|T_n-S_n|>\delta_n\mid\mathcal F_n^{\mathrm{aux}})\to_P0.
\]
If \(V_n\) satisfies the
anti-concentration bound \eqref{eq:anticoncentration}, then
\[
    \sup_t |\Pr(T_n\le t\mid\mathcal F_n^{\mathrm{aux}})
    -\Pr(V_n\le t\mid\mathcal F_n^{\mathrm{aux}})|\to_P0.
\]
\end{lemma}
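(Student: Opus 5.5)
The plan is a sandwich argument on the event \(E_n=\{|T_n-S_n|\le\delta_n\}\), using the anti-concentration bound \eqref{eq:anticoncentration} to absorb the \(\delta_n\) shift. Fix \(t\in\mathbb R\) and work conditionally on \(\mathcal F_n^{\mathrm{aux}}\). Because \(\delta_n\) is \(\mathcal F_n^{\mathrm{aux}}\)-measurable, it acts as a constant in all conditional probabilities. Write \(\epsilon_n=\Pr(E_n^c\mid\mathcal F_n^{\mathrm{aux}})\), which tends to zero in probability by hypothesis.

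\textbf{Upper bound.} On \(E_n\), the event \(T_n\le t\) implies \(S_n\le t+\delta_n\). Hence
\[
\Pr(T_n\le t\mid\mathcal F_n^{\mathrm{aux}})\le \Pr(S_n\le t+\delta_n\mid\mathcal F_n^{\mathrm{aux}})+\epsilon_n .
\]
Let \(\rho_n=\sup_s|\Pr(S_n\le s\mid\mathcal F_n^{\mathrm{aux}})-\Pr(V_n\le s\mid\mathcal F_n^{\mathrm{aux}})|\). Replacing the law of \(S_n\) by that of \(V_n\) at the point \(s=t+\delta_n\) costs at most \(\rho_n\). This gives
\[
\Pr(T_n\le t\mid\mathcal F_n^{\mathrm{aux}})\le \Pr(V_n\le t\mid\mathcal F_n^{\mathrm{aux}})+\Pr(t<V_n\le t+\delta_n\mid\mathcal F_n^{\mathrm{aux}})+\rho_n+\epsilon_n .
\]

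\textbf{Lower bound.} Symmetrically, \(S_n\le t-\delta_n\) together with \(E_n\) implies \(T_n\le t\). Hence
\[
\Pr(T_n\le t\mid\mathcal F_n^{\mathrm{aux}})\ge \Pr(S_n\le t-\delta_n\mid\mathcal F_n^{\mathrm{aux}})-\epsilon_n .
\]
This is at least \(\Pr(V_n\le t\mid\mathcal F_n^{\mathrm{aux}})-\Pr(t-\delta_n<V_n\le t\mid\mathcal F_n^{\mathrm{aux}})-\rho_n-\epsilon_n\).

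\textbf{Combining.} Both strip probabilities are bounded by \(\sup_t\Pr(|V_n-t|\le\delta_n\mid\mathcal F_n^{\mathrm{aux}})\). By \eqref{eq:anticoncentration} with \(\varepsilon=\delta_n\), this is at most \(C_{\rm ac}\delta_n\sqrt{L_n}\). If \(\delta_n=0\), the strips are empty and the bound is trivial. None of the bounds depends on \(t\), so
\[
\sup_t|\Pr(T_n\le t\mid\mathcal F_n^{\mathrm{aux}})-\Pr(V_n\le t\mid\mathcal F_n^{\mathrm{aux}})|\le \rho_n+\epsilon_n+C_{\rm ac}\delta_n\sqrt{L_n}.
\]
Here \(\rho_n\to_P0\) and \(\epsilon_n\to_P0\) by hypothesis. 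The last term tends to zero because \(\delta_n\sqrt{L_n}\to0\); if this holds only in probability, the conclusion is unchanged.

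The only delicate point is measurability. The anti-concentration bound must be applied at a data-dependent width \(\delta_n\), which is legitimate only because \(\delta_n\) is \(\mathcal F_n^{\mathrm{aux}}\)-measurable and \eqref{eq:anticoncentration} holds for every \(\varepsilon>0\) conditionally. I expect this to be the only place needing care. Everything else is elementary inclusion of events.
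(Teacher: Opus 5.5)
Your proposal is correct and matches the paper's proof: you use the same two event inclusions with the \(\mathcal F_n^{\mathrm{aux}}\)-measurable shift \(\delta_n\), transfer from \(S_n\) to \(V_n\) through the Kolmogorov distance, and absorb the shift with the anti-concentration bound \eqref{eq:anticoncentration}. Your explicit uniform bound \(\rho_n+\epsilon_n+C_{\rm ac}\delta_n\sqrt{L_n}\) is simply a more detailed version of the paper's \(2C_{\rm ac}\delta_n\sqrt{L_n}\) strip bound.
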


\begin{proof}
Let \(\delta_n\) be as in the statement; it is fixed given
\(\mathcal F_n^{\mathrm{aux}}\). For every \(t\),
\(\{T_n\le t\}\subseteq \{S_n\le t+\delta_n\}\cup
\{|T_n-S_n|>\delta_n\}\) gives
\[
\Pr(T_n\le t\mid\mathcal F_n^{\mathrm{aux}})
\le
\Pr(S_n\le t+\delta_n\mid\mathcal F_n^{\mathrm{aux}})
+
\Pr(|T_n-S_n|>\delta_n\mid\mathcal F_n^{\mathrm{aux}}).
\]
The analogous inclusion
\(\{S_n\le t-\delta_n\}\subseteq\{T_n\le t\}\cup
\{|T_n-S_n|>\delta_n\}\) gives the lower bound. The Gaussian
approximation for \(S_n\) and the anti-concentration bound for \(V_n\)
control
\(\Pr(t-\delta_n<V_n\le t+\delta_n\mid\mathcal F_n^{\mathrm{aux}})\)
by \(2C_{\rm ac}\delta_n\sqrt{L_n}=o(1)\). Taking the
supremum over \(t\) proves the claim.
\end{proof}

\begin{proof}[Proof of Theorem~\ref{thm:clt}]
Condition on the auxiliary sigma-field and restrict attention to the
high-probability event in \hyperref[ass:grid_det]{Assumption~\ref*{ass:grid_det}}. By
\hyperref[lem:pilot_fixed_vector]{Lemma~\ref*{lem:pilot_fixed_vector}}, the inferential grid is a fixed
finite index set and the main-sample score vectors are conditionally
i.i.d. By \hyperref[lem:finite_grid_orthogonal_remainder]{Lemma~\ref*{lem:finite_grid_orthogonal_remainder}},
\[
\sup_{g\in\mathcal G_n^{\mathrm{inf}}}
\left|
\sqrt n\{\hat\Delta(g)-\Delta(g)\}
-
n^{-1/2}\sum_{i=1}^n\psi(O_i;g)
\right|
=o_P(L_n^{-1/2}).
\]
This proves the uniform linearization part of the theorem.

Define the oracle standardized maximum
\[
S_n^{\max}
=
\max_{g\in\mathcal G_n^{\mathrm{inf}}}
\left|
n^{-1/2}\sum_{i=1}^n\frac{\psi(O_i;g)}{\sigma(g)}
\right|.
\]
Enumerate the inferential grid as \(g_1,\ldots,g_{p_n}\). To apply a
coordinatewise maximum theorem to the absolute maximum, define the
signed augmented vector
\[
X_i^{\pm}
=
\left(
\frac{\psi(O_i;g_1)}{\sigma(g_1)},\ldots,
\frac{\psi(O_i;g_{p_n})}{\sigma(g_{p_n})},
-
\frac{\psi(O_i;g_1)}{\sigma(g_1)},\ldots,
-
\frac{\psi(O_i;g_{p_n})}{\sigma(g_{p_n})}
\right).
\]
Then \(S_n^{\max}\) is the coordinatewise maximum of
\(n^{-1/2}\sum_i X_i^{\pm}\). The augmented summands are conditionally
independent, centered, have unit marginal variances, are bounded by
\(C_\psi\), and have
dimension \(2p_n\le 2N_n^{\mathrm{cand}}\). The stated growth condition implies
\(\log^7\{(2N_n^{\mathrm{cand}}\vee2)n\}=o(n)\). The high-dimensional Gaussian
approximation theorem for maxima of sums of independent random vectors
\citep[Theorem 2.2]{chernozhukov2013gaussian} therefore gives
\[
\sup_{t\in\mathbb R}
\left|
\Pr(S_n^{\max}\le t\mid\mathcal F_n^{\mathrm{aux}})
-
\Pr(V_n\le t\mid\mathcal F_n^{\mathrm{aux}})
\right|
\to_P0,
\]
where
\[
V_n=
\max_{g\in\mathcal G_n^{\mathrm{inf}}}
|Z_{\mathcal G_n^{\mathrm{inf}}}(g)|.
\]
The estimator maximum \(T_n^{\max,\circ}\) differs from \(S_n^{\max}\)
by at most the linearization remainder divided by
\(\inf_g\sigma(g)\), which is still
\(o_P(L_n^{-1/2})\). More explicitly, let
\(R_n=|T_n^{\max,\circ}-S_n^{\max}|\). Since \(R_n\sqrt{L_n}\to_P0\), there exists a
deterministic sequence \(b_n\downarrow0\) such that
\(\Pr\{R_n>b_n/\sqrt{L_n}\}\to0\). Set
\(\delta_n=b_n/\sqrt{L_n}\). Then
\(\delta_n\sqrt{L_n}\to0\), and
\[
\Pr\{R_n>\delta_n\mid\mathcal F_n^{\mathrm{aux}}\}\to_P0
\]
because the expectation of this conditional probability is
\(\Pr(R_n>\delta_n)\to0\). Applying
\hyperref[lem:max_perturbation_transfer]{Lemma~\ref*{lem:max_perturbation_transfer}} proves the Gaussian
approximation for \(T_n^{\max,\circ}\).
\end{proof}

\subsubsection{\texorpdfstring{Proof of \hyperref[cor:scb]{Corollary~\ref*{cor:scb}} and \hyperref[thm:global_test]{Theorem~\ref*{thm:global_test}} (Cross-Fitted Multiplier Bootstrap)}{Proof of the Band and Test Results}}

\begin{lemma}[Multiplier max-bootstrap validity]\label{lem:multiplier_max_bootstrap}
Under Assumptions~\ref{ass:grid_det}, \ref{ass:uniform_remainder}, and
\ref{ass:gauss_boot},
\[
    \sup_{t\in\mathbb R}
    \left|
    \Pr_\xi\!\left(
    \max_{g\in\mathcal G_n^{\mathrm{inf}}}|\hat Z_n^\star(g)|
    \le t
    \mid O_1^n,\mathcal F_n^{\mathrm{aux}}
    \right)
    -
    \Pr\!\left(V_n\le t\mid\mathcal F_n^{\mathrm{aux}}\right)
    \right|
    \to_P0,
\]
where
\[
V_n=
\max_{g\in\mathcal G_n^{\mathrm{inf}}}
|Z_{\mathcal G_n^{\mathrm{inf}}}(g)|
\]
and \(\Pr_\xi\) denotes probability over the multiplier variables
conditionally on the observed data and the auxiliary sigma-field.
\end{lemma}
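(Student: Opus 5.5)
The plan is the standard Gaussian comparison argument for the multiplier bootstrap. Conditionally on the data and \(\mathcal F_n^{\mathrm{aux}}\), the vector \((\hat Z_n^\star(g))_{g\in\mathcal G_n^{\mathrm{inf}}}\) is exactly centered Gaussian, because the \(\xi_i\) are i.i.d.\ standard normal and independent of everything else. Its covariance is
\[
\frac1n\sum_{i=1}^n\frac{\{\hat\phi_i(g)-\hat\Delta(g)\}\{\hat\phi_i(g')-\hat\Delta(g')\}}{\hat\sigma(g)\hat\sigma(g')}=\hat\Gamma_n(g,g'),
\]
which is the empirical correlation matrix of the centered cross-fitted scores. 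It has unit diagonal wherever \(\hat\sigma(g)>0\), and on \(\mathcal A_n\) this holds with probability tending to one, by \eqref{eq:correct_scale_rate} together with \(\sigma(g)\ge c_\sigma\). Thus \(\max_g|\hat Z_n^\star(g)|\) has the conditional law of \(\max_g|\hat W(g)|\) with \(\hat W\sim N(0,\hat\Gamma_n)\). The target \(V_n\) is the maximum of \(|Z|\) with \(Z\sim N(0,\Gamma_n)\). The lemma therefore reduces to comparing the laws of maxima of two Gaussian vectors of dimension \(p_n\) whose covariances differ in max-norm by \(\Delta_n:=\|\hat\Gamma_n-\Gamma_n\|_{\max}\).

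First, as in the proof of Theorem~\ref{thm:clt}, I pass to the signed augmented vectors \((\hat W,-\hat W)\) and \((Z,-Z)\) of dimension \(2p_n\le 2N_n^{\mathrm{cand}}\). This turns absolute maxima into coordinatewise maxima, and the max-norm covariance discrepancy of the augmented vectors is still \(\Delta_n\). Second, I invoke the Gaussian comparison inequality of \citet[Theorem~2]{cck2015comparison}. For centered Gaussian vectors in \(\mathbb R^{p}\) whose target has unit variances, it gives
\[
\sup_t\bigl|\Pr(\max_k \hat W_k\le t)-\Pr(\max_k Z_k\le t)\bigr|\le C\,\Delta_n^{1/3}\{1\vee\log(p/\Delta_n)\}^{2/3}
\]
with \(C\) depending only on the variance bounds. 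Here the dimension is \(2p_n\), so \(\log(2p_n)\lesssim L_n\).

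Third, I plug in Assumption~\ref{ass:gauss_boot}, which gives \(\Delta_nL_n^2=o_P(1)\) on \(\mathcal A_n\). I choose a deterministic \(b_n\downarrow0\) with \(\Pr(\Delta_n>b_nL_n^{-2})\to0\). On the complementary event, the bound is at most
\[
C(b_nL_n^{-2})^{1/3}\{\log(p_n L_n^2/b_n)\}^{2/3}\lesssim b_n^{1/3}L_n^{-2/3}\{L_n+\log(1/b_n)\}^{2/3}.
\]
This vanishes if \(b_n\) is taken to decay slowly, for example \(b_n\ge L_n^{-1}\) eventually, or by replacing \(b_n\) with \(b_n\vee L_n^{-1}\). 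The function \(x\mapsto x^{1/3}\log(p/x)^{2/3}\) is increasing for small \(x\), so monotonicity lets me substitute the upper bound for \(\Delta_n\). This yields the displayed convergence in probability, since the supremum is \(\mathcal F\)-measurable given the data and is bounded by a quantity tending to zero on an event of probability tending to one.

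The main obstacle I expect is the degenerate-studentization issue. Where \(\hat\sigma(g)\) is small or zero, \(\hat\Gamma_n\) may fail to have unit diagonal, and the comparison constant needs diagonal bounds. I would handle this by restricting to the event \(\{\max_g|\hat\sigma(g)/\sigma(g)-1|\le 1/2\}\cap\mathcal A_n\). Its probability tends to one by \eqref{eq:correct_scale_rate}, and on it \(\hat\Gamma_n\) has exactly unit diagonal. Off this event the conditional bootstrap law is irrelevant in probability. A secondary point is confirming that Assumption~\ref{ass:gauss_boot}'s \(L_n^2\) rate is more than enough for the cube-root comparison rate, which is the computation carried out in the third step.
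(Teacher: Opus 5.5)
Your proposal is correct and follows the paper's proof. Both arguments use the conditional Gaussianity of $\hat Z_n^\star$ with covariance exactly $\hat\Gamma_n$, the signed augmentation to dimension $2p_n$, and the Gaussian comparison bound $C\delta_n^{1/3}\{1\vee\log(2p_n/\delta_n)\}^{2/3}$ together with $\delta_nL_n^2=o_P(1)$; the paper cites this bound as \citet[Lemma~3.1]{chernozhukov2013gaussian}. Your caveat that $b_n$ must decay slowly is unnecessary: your bound equals $b_n^{1/3}\{1+\log(1/b_n)/L_n\}^{2/3}$ up to constants, which vanishes for any $b_n\downarrow0$ because $L_n\ge\log2$, and the replacement $b_n\vee L_n^{-1}$ would actually fail if $L_n$ stayed bounded.
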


\begin{proof}
Condition on the observed data and the auxiliary sigma-field. The bootstrap statistic is a Gaussian vector
because it is a linear combination of independent standard normal
multipliers. Its conditional covariance matrix is exactly the empirical
correlation matrix \(\hat\Gamma_n\) of the centered estimated
standardized scores. \hyperref[ass:gauss_boot]{Assumption~\ref*{ass:gauss_boot}} gives
\[
    \delta_n=\|\hat\Gamma_n-\Gamma_n\|_{\max},\qquad
    \delta_nL_n^2=o_P(1),
\]
where \(\Gamma_n\) is the oracle correlation matrix. For the absolute
maximum, apply the same comparison to the signed augmented Gaussian
vectors \((Z,-Z)\) and \((\hat Z^\star,-\hat Z^\star)\) of dimension
\(2p_n\); their covariance matrices also differ by \(\delta_n\) in max
norm. For unit-variance Gaussian vectors, Gaussian comparison for maxima
\citep[Lemma 3.1]{chernozhukov2013gaussian} bounds the Kolmogorov
distance between the two maxima by
\(C\delta_n^{1/3}\{1\vee\log(2p_n/\delta_n)\}^{2/3}\), read as zero when
\(\delta_n=0\), which tends to zero in probability because
\(\delta_nL_n^2=o_P(1)\). This gives the
displayed conditional Kolmogorov convergence. The conversion from this
distributional comparison to bootstrap critical-value consistency is
recorded in \hyperref[lem:bootstrap_quantile_transfer]{Lemma~\ref*{lem:bootstrap_quantile_transfer}}.
\end{proof}

\begin{lemma}[Conditional bootstrap quantile transfer]\label{lem:bootstrap_quantile_transfer}
Let
\[
F_n(t)=\Pr(V_n\le t\mid\mathcal F_n^{\mathrm{aux}})
\]
and let \(\hat F_n(t)\) be a right-continuous random conditional
distribution function, measurable with respect to the main sample and
\(\mathcal F_n^{\mathrm{aux}}\), such
that
\[
    \sup_{t\in\mathbb R}|\hat F_n(t)-F_n(t)|\to_P0.
\]
Assume that \(F_n\) satisfies, on the same high-probability auxiliary event,
\[
    \sup_{t\in\mathbb R}\{F_n(t+\varepsilon)-F_n(t-\varepsilon)\}
    \le C_{\rm ac}\varepsilon\sqrt{L_n}
    \qquad\text{for all }\varepsilon>0.
\]
For fixed \(\alpha\in(0,1)\), define the generalized inverse quantile
\[
    \hat c_{1-\alpha}
    =
    \inf\{t:\hat F_n(t)\ge 1-\alpha\}.
\]
Then
\[
    F_n(\hat c_{1-\alpha})=1-\alpha+o_P(1).
\]
\end{lemma}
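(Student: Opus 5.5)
The plan is to sandwich $F_n(\hat c_{1-\alpha})$ between $1-\alpha-o_P(1)$ and $1-\alpha+o_P(1)$. Write $\delta_n=\sup_t|\hat F_n(t)-F_n(t)|$, so that $\delta_n\to_P0$. All statements are made on the high-probability auxiliary event where the anti-concentration bound holds; its complement contributes $o_P(1)$ and is discarded. A deterministic fact will be used twice: by anti-concentration, $F_n$ has no atoms larger than $2C_{\rm ac}\varepsilon\sqrt{L_n}$ for every $\varepsilon>0$. Letting $\varepsilon\downarrow0$ with $n$ fixed, $F_n$ is continuous.

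\textbf{Lower bound.} Since $\hat F_n$ is right-continuous, the infimum defining $\hat c_{1-\alpha}$ is attained, so $\hat F_n(\hat c_{1-\alpha})\ge1-\alpha$. Uniform closeness then gives $F_n(\hat c_{1-\alpha})\ge\hat F_n(\hat c_{1-\alpha})-\delta_n\ge1-\alpha-\delta_n$. The case $\hat c_{1-\alpha}=-\infty$ cannot occur with non-negligible probability, because $F_n(t)\to0$ as $t\to-\infty$ and $\hat F_n$ is uniformly close to $F_n$. The case $+\infty$ is excluded in the same way using $F_n(t)\to1$ as $t\to+\infty$.

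\textbf{Upper bound.} For every $t<\hat c_{1-\alpha}$ we have $\hat F_n(t)<1-\alpha$, hence $F_n(t)<1-\alpha+\delta_n$. Letting $t\uparrow\hat c_{1-\alpha}$ and using the continuity of $F_n$ yields $F_n(\hat c_{1-\alpha})=F_n(\hat c_{1-\alpha}-)\le1-\alpha+\delta_n$.

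\textbf{Combining.} The two bounds give $|F_n(\hat c_{1-\alpha})-(1-\alpha)|\le\delta_n=o_P(1)$, which is the claim.

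The main obstacle is the upper bound, specifically the passage from left limits to the value of $F_n$ at $\hat c_{1-\alpha}$. Here the anti-concentration hypothesis is needed, and it is needed only in its pointwise, fixed-$n$ form (no atoms), not in the quantitative $\sqrt{L_n}$ form. I would double-check that the stated bound, which holds for all $\varepsilon>0$, indeed forces continuity of $F_n$ for each fixed $n$. If only a quantitative bound were available, I would instead compare $F_n(\hat c_{1-\alpha})$ with $F_n(\hat c_{1-\alpha}-\varepsilon_n)$ for a choice $\varepsilon_n=o(L_n^{-1/2})$, which costs an additional $C_{\rm ac}\varepsilon_n\sqrt{L_n}=o(1)$.
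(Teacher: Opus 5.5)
Your proof is correct and is the paper's sandwich argument: the lower bound via right-continuity and uniform closeness is identical, and for the upper bound the paper compares with $F_n(\hat c_{1-\alpha}-\varepsilon_n)$ for $\varepsilon_n\sqrt{L_n}\to0$ and pays $C_{\rm ac}\varepsilon_n\sqrt{L_n}$ (exactly your fallback), whereas your left-limit step uses only the continuity of $F_n$, which the bound does force for each fixed $n$ because $L_n<\infty$. Your route therefore uses anti-concentration only qualitatively and gives the slightly cleaner bound $|F_n(\hat c_{1-\alpha})-(1-\alpha)|\le\delta_n$ on the high-probability event.
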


\begin{proof}
Let \(a_n=\sup_t|\hat F_n(t)-F_n(t)|\). By assumption \(a_n=o_P(1)\).
By the generalized-inverse definition of \(\hat c_{1-\alpha}\) and the
right-continuity of \(\hat F_n\),
\[
    F_n(\hat c_{1-\alpha})
    \ge
    \hat F_n(\hat c_{1-\alpha})-a_n
    \ge
    1-\alpha-a_n.
\]
For the upper bound, fix any \(\varepsilon_n\downarrow0\) such that
\(\varepsilon_n\sqrt{L_n}\to0\). By the definition of the
left quantile,
\(\hat F_n(\hat c_{1-\alpha}-\varepsilon_n)\le 1-\alpha\). Hence
\[
F_n(\hat c_{1-\alpha}-\varepsilon_n)
\le
1-\alpha+a_n.
\]
The anti-concentration bound yields
\[
F_n(\hat c_{1-\alpha})
\le
F_n(\hat c_{1-\alpha}-\varepsilon_n)
+
C_{\rm ac}\varepsilon_n\sqrt{L_n}
\le
1-\alpha+a_n+o(1).
\]
Combining the lower and upper bounds proves the claim.
\end{proof}

\begin{proof}[Proof of Corollary~\ref{cor:scb} and Theorem~\ref{thm:global_test}]
By \hyperref[thm:clt]{Theorem~\ref*{thm:clt}}, the oracle-studentized maximum of the
estimation error is approximated by \(V_n\). Equation~\eqref{eq:correct_scale_rate} gives
\[
\max_{g\in\mathcal G_n^{\mathrm{inf}}}
\sqrt n|\hat\Delta(g)-\Delta(g)|
\left|
\frac{1}{\hat\sigma(g)}-\frac{1}{\sigma(g)}
\right|
=o_P(L_n^{-1/2}),
\]
because the oracle-studentized maximum is \(O_P(\sqrt{L_n})\).
\hyperref[lem:max_perturbation_transfer]{Lemma~\ref*{lem:max_perturbation_transfer}} therefore gives the same
Gaussian approximation for the feasible studentized maximum
\[
Q_n
:=
\max_{g\in\mathcal G_n^{\mathrm{inf}}}
\frac{\sqrt n|\hat\Delta(g)-\Delta(g)|}{\hat\sigma(g)}.
\]
\hyperref[lem:multiplier_max_bootstrap]{Lemma~\ref*{lem:multiplier_max_bootstrap}} gives consistency of the
conditional multiplier distribution
\[
\hat F_n(t)=
\Pr_\xi\!\left(
\max_{g\in\mathcal G_n^{\mathrm{inf}}}|\hat Z_n^\star(g)|\le t
\mid O_1^n,\mathcal F_n^{\mathrm{aux}}
\right),
\]
and \hyperref[lem:bootstrap_quantile_transfer]{Lemma~\ref*{lem:bootstrap_quantile_transfer}} calibrates its quantile
against \(F_n(t)=\Pr(V_n\le t\mid\mathcal F_n^{\mathrm{aux}})\).

Because \(Q_n\) and the cutoff \(\hat c_{1-\alpha}^{\mathrm{SCB}}\) are
computed from the same sample, we conclude with a bracketing argument. Choose a deterministic
\(\eta_n\downarrow0\) such that the event
\(\mathcal E_n=\{\|\hat F_n-F_n\|_\infty\le\eta_n\}\) satisfies
\(\Pr(\mathcal E_n^c\mid\mathcal F_n^{\mathrm{aux}})=o_P(1)\), and set
\[
c_\pm(\eta_n)=F_n^{-1}(1-\alpha\pm2\eta_n).
\]
Because \(F_n\) is continuous, generalized-inverse monotonicity gives
\(c_-(\eta_n)\le\hat c_{1-\alpha}^{\mathrm{SCB}}\le c_+(\eta_n)\) on
\(\mathcal E_n\). Hence
\[
\begin{aligned}
\Pr\{Q_n\le c_-(\eta_n)\mid\mathcal F_n^{\mathrm{aux}}\}
-\Pr(\mathcal E_n^c\mid\mathcal F_n^{\mathrm{aux}})
&\le
\Pr\{Q_n\le\hat c_{1-\alpha}^{\mathrm{SCB}}
\mid\mathcal F_n^{\mathrm{aux}}\}\\
&\le
\Pr\{Q_n\le c_+(\eta_n)\mid\mathcal F_n^{\mathrm{aux}}\}
+\Pr(\mathcal E_n^c\mid\mathcal F_n^{\mathrm{aux}}).
\end{aligned}
\]
The Gaussian approximation for \(Q_n\), the continuity of \(F_n\), and
\(\eta_n\to0\) imply that both outer bounds converge in probability to
\(1-\alpha\), so
\[
\Pr\left\{
Q_n
\le \hat c_{1-\alpha}^{\mathrm{SCB}}
\ \middle|\ \mathcal F_n^{\mathrm{aux}}
\right\}
\to_P 1-\alpha.
\]
This event is exactly the simultaneous coverage event for
\(\mathcal C_{n,1-\alpha}(g)\) over the inferential grid. Averaging over
the auxiliary information gives the unconditional statement.

Under \(H_0^{\max}\), \(\Delta(g)=0\) for all cells of the inferential grid, so the
global test statistic \(T_n^{\max}\) is the same feasible studentized
maximum with \(\Delta(g)\) replaced by zero; hence the coverage statement gives
\(\Pr(\varphi_n^{\max}=1\mid\mathcal F_n^{\mathrm{aux}})=1-\Pr(Q_n\le\hat c_{1-\alpha}^{\mathrm{SCB}}\mid\mathcal F_n^{\mathrm{aux}})\to_P\alpha\). \hyperref[rem:fwer]{Remark~\ref*{rem:fwer}} follows from the coverage statement, because every false coordinate rejection is a coverage failure.
\end{proof}

\section{Inference over All Radii at Fixed Anchors}
\label{app:complete_radius}

This appendix proves \hyperref[thm:all_radii]{Theorem~\ref*{thm:all_radii}} and the supporting results summarized in \hyperref[subsec:all_radii]{Section~\ref*{subsec:all_radii}}. The law \(P\), the number of anchors \(K\), and the number of folds \(J\) are fixed. An independent auxiliary sample may choose the anchors once; all statements hold conditionally on that choice, provided the stated conditions hold for almost every realization. If the design is redrawn from a fixed law for each analysis, the unconditional coverage statements follow by integrating the conditional ones. Uniformity over design laws that change with \(n\) is not considered.

\subsection{Scores and Training Regimes}

Write \(D_k=d(\omega_k,Y)\), \(Z_{kt}=\mathds{1}\{D_k\le t\}\), \(F_{ak}(t)=\Pr\{d(\omega_k,Y(a))\le t\}\), and \(\Delta_k=F_{1k}-F_{0k}\); distances are finite almost surely. For an arm-specific conditional CDF \(m_a(k,t;x)\) and a propensity score \(\pi\), with \(\pi_1=\pi\) and \(\pi_0=1-\pi\), define
\[
\phi_a(O;k,t;m,\pi)
=
m_a(k,t;X)
+
\frac{\mathds{1}(A=a)}{\pi_a(X)}\{Z_{kt}-m_a(k,t;X)\},
\]
and \(\phi_\Delta=\phi_1-\phi_0\). Arm profiles and contrasts are indexed jointly by \(h\in\mathcal H=\{0,1,\Delta\}\times\{1,\ldots,K\}\times[0,\infty)\), and \(\theta_h\) denotes the corresponding \(F_{ak}(t)\) or \(\Delta_k(t)\). All statements also hold for the restriction to contrasts. Folds are indexed as in \hyperref[app:regularity_proofs]{Appendix~\ref*{app:regularity_proofs}}: \(\mathcal I_j\) is validation fold \(j\), \(n_j=|\mathcal I_j|\), \(P_{n,j}\) is its empirical measure, and a subscript \(j\) on a fitted nuisance, as in \(\hat m_{a,j}\), plays the role of the superscript \((-j)\) there.

\begin{assumption}[Sampling and learning over all radii]
\label{ass:complete_learning}
The observations are i.i.d., \hyperref[ass:id]{Assumption~\ref*{ass:id}} holds, and \((X,A,D_1,\ldots,D_K)\) takes values in a standard Borel space. The fold partition is independent of the data, \(J\ge2\) is fixed, and \(n_j/n\to\lambda_j>0\). For each validation fold \(j\), the entire fitting procedure, including tuning and algorithmic randomness, uses only the training complement and independent auxiliary information. All fitted propensities lie in \([\epsilon_*,1-\epsilon_*]\) for a fixed \(\epsilon_*>0\). Each fitted regression \(\hat m_{a,j}(k,t;x)\) is jointly measurable and, as a function of \(t\), a proper right-continuous CDF. There is a nonrandom proper conditional CDF \(\bar m_a(k,t;x)\), common to all folds, such that
\[
r_n
=
\max_{j,a,k}\sup_{t\ge0}
\|\hat m_{a,j}(k,t;\cdot)-\bar m_a(k,t;\cdot)\|_{P_X,2}
\to_P0.
\]
One of the following regimes holds.
\begin{enumerate}[label=(\roman*),leftmargin=*]
\item \emph{Known propensity.} The score uses the true \(\pi\); the limit \(\bar m\) may differ from the true conditional CDF \(m\).
\item \emph{Correct parametric propensity.} The folds are asymptotically balanced, \(\lambda_j=1/J\). A correctly specified model \(\pi_\beta\) of fixed dimension has an interior true parameter \(\beta_0\). In a fixed neighborhood of \(\beta_0\), overlap holds uniformly, \(\pi_\beta\) has bounded continuous first and second derivatives, and \(q_\beta=\partial_\beta\log\{\pi_\beta/(1-\pi_\beta)\}\) and its first derivative are bounded. With \(S_\beta=\{A-\pi_\beta(X)\}q_\beta(X)\), the information matrix \(I=P S_{\beta_0}S_{\beta_0}^{\mathsf T}\) is positive definite. Each off-fold estimate \(\hat\beta_j\) is consistent and satisfies \(P_{n,-j}S_{\hat\beta_j}=o_P(n^{-1/2})\), where \(P_{n,-j}\) is the empirical measure of the training complement of fold \(j\), and clipping is asymptotically inactive. The limit \(\bar m\) may again differ from \(m\).
\item \emph{Two consistent nuisances.} Here \(\bar m=m\), the propensity estimates may be nonparametric, and
\[
r_{\pi,n}=\max_j\|\hat\pi_j-\pi\|_{P_X,2}=o_P(1),
\qquad
\sqrt n\,r_nr_{\pi,n}=o_P(1).
\]
\end{enumerate}
\end{assumption}

The conditions concern population risks rather than training diagnostics. Monotonicity in \(t\) does not by itself imply consistency for the true conditional CDF; Propositions~\ref{prop:fixed_kernel_stability} and~\ref{prop:finite_strata} give learners that satisfy the stability requirement and regime~(iii), respectively.

Set \(u_h=\phi_h(\bar m,\pi)-\theta_h\). In regime~(ii) put
\[
S=S_{\beta_0},\qquad C_h=P(u_hS^{\mathsf T}),\qquad v_h=u_h-C_hI^{-1}S.
\]
The limiting influence function \(f_h\) is \(u_h\) in regime~(i), \(v_h\) in regime~(ii), and the efficient influence function \(\psi_h\) in regime~(iii).

\subsection{Atoms and the Complexity of the Radius Path}

\begin{lemma}[Entropy of a monotone path]\label{lem:monotone_path}
If \(0\le b_t(o)\le1\) is nondecreasing in a real index \(t\) for every \(o\), then for every probability measure \(Q\),
\[
N_{[\,]}\bigl(\eta,\{b_t:t\ge0\},L^2(Q)\bigr)\le2+2\eta^{-2},
\qquad 0<\eta<1.
\]
Finite sums of such paths with fixed bounded measurable coefficients have polynomial uniform covering numbers, with constants that do not depend on the particular paths.
\end{lemma}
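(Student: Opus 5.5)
The bracketing bound will come from a direct construction along the monotone path, in the spirit of the classical bracketing argument for empirical distribution functions. The second claim will then follow from the fact that monotone uniformly bounded families form VC-subgraph-type classes, together with permanence under finite sums.

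\textbf{Step 1 (reduce to a scalar monotone function).} Fix \(Q\) and set \(G(t)=\int b_t\,dQ\). This function is nondecreasing in \(t\) and takes values in \([0,1]\). Extend the path to the endpoints by \(b_{\infty}=\lim_{t\to\infty}b_t\) and \(b_{0^-}\), which exist pointwise by monotonicity, and adjoin left limits \(b_{t-}\) wherever \(G\) jumps. For \(s<t\), monotonicity and \(0\le b\le1\) give
\[
\|b_t-b_s\|_{L^2(Q)}^2\le\int(b_t-b_s)\,dQ=G(t)-G(s).
\]
This is the key inequality, because it converts \(L^2\) distances between path elements into increments of a single bounded monotone function.

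\textbf{Step 2 (brackets).} Choose points \(0=t_0<t_1<\dots<t_M\) so that each gap satisfies \(G(t_i-)-G(t_{i-1})\le\eta^2\). A greedy choice of the next point, taken as the first \(t\) with \(G(t)\ge G(t_{i-1})+\eta^2\), does this and uses at most \(\lceil\eta^{-2}\rceil\) points. If \(G\) has an atom larger than \(\eta^2\), it can be absorbed into a chosen point.

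For \(t\in[t_{i-1},t_i)\), use the bracket \([b_{t_{i-1}},b_{t_i-}]\), where \(b_{t_i-}\) is the pointwise left limit. Its \(L^2\) size is at most \(\{G(t_i-)-G(t_{i-1})\}^{1/2}\le\eta\). Each chosen point \(t_i\) itself gets the degenerate bracket \([b_{t_i},b_{t_i}]\), and one final bracket \([b_{t_M},b_\infty]\) covers the tail. Counting, the interval brackets and the degenerate brackets give at most \(2\lceil\eta^{-2}\rceil\le 2+2\eta^{-2}\), which is the stated bound.

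The main obstacle is right-continuity and atoms. Because \(G\) may jump, the greedy step can overshoot, and the radius path is not assumed right-continuous in \(t\). This is why both endpoints and the left limits are included as separate brackets, and it is the source of the factor \(2\). The left limit \(b_{t-}\) is measurable as a monotone pointwise limit along a countable sequence, so the brackets are legitimate.

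\textbf{Step 3 (uniform covering for sums).} A single monotone path \(\{b_t\}\) with values in \([0,1]\) has uniform covering numbers \(\sup_Q N(\eta,\cdot,L^2(Q))\le C\eta^{-2}\) by Step 1, since covering is bounded by bracketing and Step 2 applies for every \(Q\). Multiplying by a fixed bounded measurable coefficient \(c\) with \(|c|\le B\) rescales distances by at most \(B\): for every \(Q\), \(\|c(b_t-b_s)\|_{Q,2}\le B\|b_t-b_s\|_{Q,2}\). Hence the same polynomial bound holds with \(\eta/B\) in place of \(\eta\), for every \(Q\).

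For a sum of \(L\) such terms, cover each summand class at level \(\eta/L\) under the same \(Q\) and take sums of the cover centers. This yields covering number at most \(\prod_\ell C(LB_\ell/\eta)^{2}\), which is polynomial in \(1/\eta\). The constants depend only on \(L\) and the \(B_\ell\), and not on the paths, because Step 2 used nothing about \(b\) beyond monotonicity and the range \([0,1]\).

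This is exactly the form needed for the AIPW scores \(\phi_a\) in \(t\). Each score is \(\bar m_a(k,t;X)\) plus the coefficient \(\mathds 1(A=a)/\pi_a(X)\), which is bounded by \(1/\epsilon_*\), times \(Z_{kt}-\bar m_a(k,t;X)\). All three ingredients are monotone paths in \(t\).
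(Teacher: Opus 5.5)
Your Step~2 fails in exactly the case you single out as the main obstacle. The lemma allows paths that are not right-continuous in $t$. Your brackets use only $b_s$ and left limits $b_{u-}$, never right limits $b_{t+}$, so they cannot absorb a jump that occurs just to the right of an index.

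Take $b_t\equiv\mathds{1}\{t>1/2\}$ (constant in $o$), any $Q$, and $\eta<1$, so that $G(t)=\mathds{1}\{t>1/2\}$. There is no first $t$ with $G(t)\ge\eta^2$. The infimum is $1/2$, where $G(1/2)=0$, and the next threshold $G(1/2)+\eta^2$ returns $1/2$ again, so the recursion stalls. More decisively, no finite family of your brackets works. Let $s^*$ be the smallest chosen point above $1/2$ (or $s^*=\infty$ if there is none). Points $t\in(1/2,s^*)$ are covered only by an interval or tail bracket whose left index is at most $1/2$, so its endpoints are $0$ and $1$ and its size is $1$. In general, at a right jump $\tau$ every such bracket has $L^2(Q)$-size at least $G(\tau+)-G(\tau)$.

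Left limits and point brackets only address left jumps. Those are already harmless in your scheme, because each $t_i$ is the left endpoint of the next half-open interval; there the point brackets are redundant and are not the source of a factor~$2$. If $t\mapsto b_t$ is right-continuous, as it is for $Z_{kt}$ and for the proper right-continuous CDFs to which the paper applies the lemma, then $G$ is right-continuous by dominated convergence. The infimum is then attained, $G(t_i)\ge G(t_{i-1})+\eta^2$, and your argument is correct, even with $M+1\le1+\eta^{-2}$ brackets. It does not, however, prove the lemma as stated.

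To repair your route, use right limits as lower endpoints. Set $t_0=0$ and $t_i=\inf\{t>t_{i-1}:G(t)>G(t_{i-1}+)+\eta^2\}$. Take point brackets $[b_{t_i},b_{t_i}]$, together with $[b_{t_{i-1}+},b_{t_i-}]$ on $(t_{i-1},t_i)$ and $[b_{t_M+},b_\infty]$ on $(t_M,\infty)$. Then $t_i>t_{i-1}$, and each open-interval bracket has $Q(u-l)\le\eta^2$, hence $L^2(Q)$-size at most $\eta$. Also $G(t_i+)-G(t_{i-1}+)\ge\eta^2$ forces $M\le\eta^{-2}$, so there are $2M+2\le2+2\eta^{-2}$ brackets; this is where a factor~$2$ genuinely arises.

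The paper sidesteps one-sided limits by partitioning the range of $G$ instead of the index set. It cuts $[0,1]$ into intervals of length $\eta^2/2$. For each nonempty preimage (an interval of indices), it uses the pointwise infimum and supremum of $b_t$ over that preimage as bracket endpoints, measurable through countable monotone sequences. Monotone convergence gives $Q(u-l)\le\eta^2/2$, a jump of either kind merely skips range intervals, and at most $2/\eta^2+1\le2+2\eta^{-2}$ brackets are needed.

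Your Step~3 is fine once Step~2 is repaired. Scaling by bounded coefficients and taking product covers yields the polynomial uniform covering numbers the second claim asks for. The paper instead keeps brackets through signed coefficients via $[c^+l-c^-u,\,c^+u-c^-l]$.
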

\begin{proof}
Partition the range of \(t\mapsto Qb_t\) into intervals of length \(\eta^2/2\). For each nonempty interval, take the pointwise infimum \(l\) and supremum \(u\) of \(b_t\) over the indices mapped into it. Ordered index sets admit countable cofinal and coinitial sequences, so these envelopes are measurable, and monotone convergence gives \(Q(u-l)\le\eta^2/2\); since \(0\le u-l\le1\), also \(Q(u-l)^2\le\eta^2/2\). A jump of the path may skip intervals, but it never has to fit inside a single bracket, so atoms in the distance distributions do not affect the argument. For a coefficient \(c\) of either sign, write \(c^\pm=\max(\pm c,0)\) and use the bracket \([c^+l-c^-u,\,c^+u-c^-l]\). Products of finitely many covers handle finite sums, and finite unions over anchors and score types, as well as bounded products and squares, preserve polynomial entropy.
\end{proof}

The lemma is applied to each frozen fit separately; it does not assert that the union of all possible fits is a Donsker class.

\begin{lemma}[Off-fold perturbations]\label{lem:radius_perturbation}
Let \(e_{j,h}\) be the difference between the fitted and the limiting uncentered score on fold \(j\). Under \hyperref[ass:complete_learning]{Assumption~\ref*{ass:complete_learning}},
\[
\max_j\sup_h|\mathbb G_{n,j}e_{j,h}|=o_P(1),
\qquad
\max_j\sup_hP_{n,j}e_{j,h}^2=o_P(1),
\]
where \(\mathbb G_{n,j}=\sqrt{n_j}(P_{n,j}-P)\) and \(P\) integrates a new observation conditionally on the training set of fold \(j\).
\end{lemma}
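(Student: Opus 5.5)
We fix a fold \(j\) and condition on the sigma-field \(\mathcal S_j\) generated by the training complement, the auxiliary design, and any learner randomness. Given \(\mathcal S_j\), the fitted \(\hat m_{a,j}\) and \(\hat\pi_j\) are frozen, and the validation observations are i.i.d. and independent of them. We decompose the perturbation as
\[
e_{j,h}
=
\bigl\{\hat m_{a,j}-\bar m_a\bigr\}\Bigl(1-\tfrac{\mathds 1(A=a)}{\hat\pi_{a,j}}\Bigr)
+\mathds 1(A=a)\{Z_{kt}-\bar m_a\}\Bigl(\tfrac1{\hat\pi_{a,j}}-\tfrac1{\pi_a}\Bigr),
\]
with the obvious difference for \(h=(\Delta,k,t)\). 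Here \(\pi\) means the limiting propensity used in \(u_h\): it is \(\pi\) itself in regimes (i) and (iii), and \(\pi_{\beta_0}\) in regime (ii). In regime (i) the second term vanishes. Clipping bounds every coefficient, so \(|e_{j,h}|\le C\) uniformly.

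\textbf{Second-moment bound.} First we bound \(\sup_h Pe_{j,h}^2\). The first term contributes at most \(C\sup_t\|\hat m_{a,j}-\bar m_a\|_{P_X,2}^2\le Cr_n^2\). The second contributes at most \(C\|\hat\pi_j-\pi\|_{P_X,2}^2\). In regime (iii) this is \(r_{\pi,n}^2\). In regime (ii) it is \(O_P(\|\hat\beta_j-\beta_0\|^2)\) by the bounded derivatives, hence \(o_P(1)\) by consistency of \(\hat\beta_j\). So \(\rho_{n,j}^2:=\sup_hPe_{j,h}^2=o_P(1)\), uniformly in \(h\), because the bounds do not depend on \(t\).

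\textbf{Uniform empirical-process bound.} Conditionally on \(\mathcal S_j\), each class \(\{e_{j,h}:h\in\mathcal H\}\) is a finite combination of monotone \([0,1]\)-valued paths in \(t\): \(\hat m_{a,j}(k,t;x)\), \(\bar m_a(k,t;x)\) and \(Z_{kt}\). Their coefficients are fixed bounded measurable functions of \((A,X)\). By Lemma~\ref{lem:monotone_path}, the class therefore has polynomial uniform entropy with constants free of \(j\), of the fit and of \(n\), and it has envelope \(C\). We apply a maximal inequality for such classes with small variance, for instance van der Vaart--Wellner Theorem 2.14.1 or the Chernozhukov--Chetverikov--Kato local maximal inequality. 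This gives
\[
\E\Bigl[\sup_h|\mathbb G_{n,j}e_{j,h}|\Bigm|\mathcal S_j\Bigr]\le C\Bigl\{\rho_{n,j}\sqrt{\log(1/\rho_{n,j})}+\tfrac{\log(1/\rho_{n,j})}{\sqrt{n_j}}\Bigr\},
\]
or, more crudely, \(C J(\rho_{n,j})\) with a uniform entropy integral \(J(\delta)\to0\) as \(\delta\to0\). Since \(\rho_{n,j}\to_P0\), conditional Markov plus dominated convergence yields \(\sup_h|\mathbb G_{n,j}e_{j,h}|=o_P(1)\) unconditionally. We then take the maximum over the finitely many folds.

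\textbf{Empirical second moment.} For the second claim we write \(P_{n,j}e^2=Pe^2+n_j^{-1/2}\mathbb G_{n,j}e^2\). The squared class still has polynomial uniform entropy and bounded envelope, by the closure statement of Lemma~\ref{lem:monotone_path}. Its empirical-process supremum is therefore \(O_P(1)\), and \(P_{n,j}e^2\le\rho_{n,j}^2+O_P(n_j^{-1/2})=o_P(1)\).

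\textbf{Main obstacle.} The delicate step is justifying the maximal inequality conditionally, with constants that are uniform over the random fits. Lemma~\ref{lem:monotone_path} bounds bracketing numbers under \(P\); uniform covering is needed only for the squared class. The bound has to hold for each frozen fit and not over the union of all possible fits, which, as the remark after the lemma notes, is not Donsker. A further subtlety is that the paths \(\hat m_{a,j}\) and \(Z_{kt}\) have atoms in \(t\). Measurability of the supremum over uncountably many \(t\) must therefore come from right-continuity, which reduces the supremum to rational \(t\), together with the standard Borel assumption.
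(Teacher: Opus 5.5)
Your proposal is correct and follows the paper's proof essentially step for step: you condition on the training set of the fold, bound the largest $L^2(P)$ norm of the perturbation by the nuisance errors, and invoke Lemma~\ref{lem:monotone_path} for uniform entropy with constants that do not depend on the fit. You then apply the local maximal inequality with the conditional Markov inequality, repeat the argument for the class of squares, and take the union over the fixed number of folds. One adjustment is needed: the paper works with a deterministic radius $\delta_n\ge n^{-1/4}$ instead of the random $\rho_{n,j}$. You should likewise use $\max\{\rho_{n,j},n_j^{-1/4}\}$, because otherwise the term $\log(1/\rho_{n,j})/\sqrt{n_j}$ in your displayed bound need not vanish when $\rho_{n,j}$ is very small.
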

\begin{proof}
Condition on the training set of one fold. The score difference has a fixed bounded envelope, and its largest \(L^2(P)\) norm is at most \(C(r_n+r_{\pi,n})\), with \(r_{\pi,n}=0\) in regime~(i) and \(r_{\pi,n}\) replaced by \(\|\hat\beta_j-\beta_0\|\) in regime~(ii); this follows by expanding the inverse propensities and using their lower bound. \hyperref[lem:monotone_path]{Lemma~\ref*{lem:monotone_path}} applies to the true and fitted conditional CDFs and to \(Z_{kt}\), with constants that do not depend on the training realization. Choose a deterministic \(\delta_n\downarrow0\) with \(\delta_n\ge n^{-1/4}\) such that the largest conditional \(L^2\) norm is at most \(\delta_n\) with probability tending to one. The local maximal inequality under uniform entropy \citep[Theorem~2.1]{vdvw2011local} bounds the conditional expected supremum by
\[
C\left\{\delta_n\sqrt{\log(C/\delta_n)}+\frac{\log(C/\delta_n)}{\sqrt{n_j}}\right\}=o(1).
\]
The conditional Markov inequality gives the first claim. The same inequality applied to the class of squares gives \(\sup_hP_{n,j}e_{j,h}^2\le\sup_hPe_{j,h}^2+O_P(n_j^{-1/2})=o_P(1)\). A union over the fixed number of folds completes the proof; no independence between the training sets of different folds is used.
\end{proof}

\subsection{Linearization and Process Limit}

\begin{theorem}[Uniform linearization and process limit]\label{thm:radius_process}
Under \hyperref[ass:complete_learning]{Assumption~\ref*{ass:complete_learning}},
\[
\Bigl\|\sqrt n(\hat\theta-\theta)-n^{-1/2}\sum_{i=1}^nf(O_i)\Bigr\|_{\mathcal H}=o_P(1),
\qquad
n^{-1/2}\sum_{i=1}^nf(O_i)\rightsquigarrow G_f
\quad\text{in }\ell^\infty(\mathcal H),
\]
where \(G_f\) is a tight centered Gaussian process with covariance \(P(f_hf_{h'})\). In regime~(ii),
\[
P(v_hv_{h'})=P(u_hu_{h'})-C_hI^{-1}C_{h'}^{\mathsf T}.
\]
No coordinate variance is required to be positive.
\end{theorem}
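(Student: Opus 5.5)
The plan is to decompose, fold by fold, the cross-fitted estimator into an oracle term at the limiting nuisances $(\bar m,\pi)$, an off-fold empirical-process perturbation, and a conditional bias. Write $\hat\theta_h-\theta_h=\sum_j (n_j/n)P_{n,j}\{\phi_h(\hat m_j,\hat\pi_j)-\theta_h\}$. Then
\[
\sqrt n(\hat\theta_h-\theta_h)
=
n^{-1/2}\sum_{i=1}^n u_h(O_i)
+
\sum_j\sqrt{n_j/n}\,\mathbb G_{n,j}e_{j,h}
+
\sum_j (n_j/\sqrt n)\,Pe_{j,h}.
\]
By \hyperref[lem:radius_perturbation]{Lemma~\ref*{lem:radius_perturbation}} the middle term is $o_P(1)$ uniformly in $h$. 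Everything then reduces to the bias term $\sqrt n\,Pe_{j,h}$, which is handled separately in each regime, and to a Donsker argument for the oracle class.

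\textbf{Bias by regime.} Conditioning on the training set and using the AIPW algebra of \hyperref[prop:dr]{Proposition~\ref*{prop:dr}} gives
\[
Pe_{j,h}=P\left[\Bigl\{1-\frac{\pi_a}{\hat\pi_{a,j}}\Bigr\}(\hat m_{a,j}-m_a)\right]
-P\left[\Bigl\{1-\frac{\pi_a}{\pi_a}\Bigr\}(\bar m_a-m_a)\right],
\]
up to the arm sign.
\begin{itemize}[leftmargin=*, label={}]
\item \emph{Regime (i).} $\hat\pi=\pi$, so the bias vanishes identically.
\item \emph{Regime (iii).} $\bar m=m$, and Cauchy--Schwarz with the overlap bound gives $\sqrt n\sup_h|Pe_{j,h}|\le C\sqrt n\,r_nr_{\pi,n}=o_P(1)$.
\item \emph{Regime (ii).} The bias is first order in $\hat\beta_j-\beta_0$ and is the main obstacle.
\end{itemize}

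\textbf{Regime (ii).} I would Taylor-expand $\beta\mapsto P\phi_h(\bar m,\pi_\beta)$ at $\beta_0$. Its derivative is $-P\{\mathds 1(A=a)(Z-\bar m_a)\partial_\beta\pi_{a,\beta}/\pi_a^2\}$, which should be identified with $-C_h$ using $S=(A-\pi)q$. The remainder is $O(\|\hat\beta_j-\beta_0\|^2)$ by the bounded second derivatives, and the cross term with $\hat m_j-\bar m$ is $O(r_n\|\hat\beta_j-\beta_0\|)$. For the expansion, I would use the score equation $P_{n,-j}S_{\hat\beta_j}=o_P(n^{-1/2})$ together with a stochastic-equicontinuity argument for the parametric class. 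This gives $\hat\beta_j-\beta_0=I^{-1}P_{n,-j}S+o_P(n^{-1/2})$, and hence $\sqrt n$-consistency.

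Summing over folds, $\sum_j(n_j/\sqrt n)(-C_h)I^{-1}P_{n,-j}S$ appears. With $\lambda_j=1/J$, each observation lies in $J-1$ training complements, each of size about $n(J-1)/J$. The weighted sum of $P_{n,-j}S$ is therefore $n^{-1}\sum_iS(O_i)+o_P(n^{-1/2})$. This yields $-C_hI^{-1}n^{-1/2}\sum_i S(O_i)$, i.e.\ $f_h=v_h$; this is why balanced folds are needed. Uniformity in $h$ follows because $C_h$ is bounded and the remainders do not depend on $t$ beyond bounded envelopes.

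\textbf{Process limit.} The class $\{f_h:h\in\mathcal H\}$ consists of the fixed monotone paths $\bar m_a(k,t;X)$ and $Z_{kt}$, combined with bounded coefficients, plus the finite-dimensional term $C_hI^{-1}S$, where $t\mapsto C_h$ is bounded. \hyperref[lem:monotone_path]{Lemma~\ref*{lem:monotone_path}} gives finite bracketing integrals, so the class is $P$-Donsker and $n^{-1/2}\sum_i f(O_i)\rightsquigarrow G_f$, a tight process with covariance $P(f_hf_{h'})$. The convergence is in $\ell^\infty(\mathcal H)$ over the unbounded radius set, because the monotone paths converge at infinity and brackets handle the tail.

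\textbf{Covariance and degeneracy.} The identity $P(v_hv_{h'})=P(u_hu_{h'})-C_hI^{-1}C_{h'}^{\mathsf T}$ is direct expansion using $P(SS^{\mathsf T})=I$ and $P(u_hS^{\mathsf T})=C_h$. No variance lower bound is used anywhere. A coordinate with zero variance simply has a degenerate Gaussian marginal, which is compatible with tightness.
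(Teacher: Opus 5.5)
Your proposal is correct and follows essentially the paper's proof. It uses the same fold-wise split into the oracle term, the off-fold perturbation controlled by Lemma~\ref{lem:radius_perturbation}, and the conditional bias, together with the same regime-by-regime bias analysis: the derivative identified with $-C_h$, the $O(r_n\|\hat\beta_j-\beta_0\|)$ cross term, and the balanced-fold weight argument. It also uses the same covariance expansion; the one minor difference is that you get the Donsker property from the bracketing bound of Lemma~\ref{lem:monotone_path} via the bracketing CLT, whereas the paper uses uniform entropy with a local maximal inequality, and both are valid here.
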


\begin{proof}
\textit{Regime (i).} With the true propensity, \(P\phi_a(\hat m_j,\pi)=F_{ak}(t)\) for every off-fold fit, so after subtracting the limiting score only the centered perturbations of \hyperref[lem:radius_perturbation]{Lemma~\ref*{lem:radius_perturbation}} remain. This proves the expansion in regime~(i).

\textit{Regime (iii).} The arm-specific bias equals
\[
P\phi_a(\hat m_j,\hat\pi_j)-F_{ak}(t)
=
P\left[\frac{(m_a-\hat m_{a,j})(\pi_a-\hat\pi_{a,j})}{\hat\pi_{a,j}}\right],
\]
which is bounded by \(\epsilon_*^{-1}r_nr_{\pi,n}\) uniformly in \(a,k,t\); the product-rate condition completes this case.

\textit{Regime (ii).} Write \(\delta_{ak}(t;x)=m_a(k,t;x)-\bar m_a(k,t;x)\). Conditioning on \((A,X)\) gives
\[
C_{1kt}=P\bigl[(1-\pi)\delta_{1k}(t;X)q^{\mathsf T}\bigr],
\qquad
C_{0kt}=-P\bigl[\pi\,\delta_{0k}(t;X)q^{\mathsf T}\bigr],
\]
and differentiating the mean score gives \(\partial_\beta P\phi_h(\bar m,\pi_\beta)|_{\beta_0}=-C_h\). The bounded second derivatives and bounded CDF values give a Taylor remainder \(O_P(\|\hat\beta_j-\beta_0\|^2)\), uniformly in \(h\), and replacing \(\bar m\) by \(\hat m_j\) changes the drift by at most \(Cr_n\|\hat\beta_j-\beta_0\|\). The estimating equation, consistency, and a local uniform law of large numbers for the score derivative yield \(\hat\beta_j-\beta_0=I^{-1}P_{n,-j}S+o_P(n^{-1/2})\). Hence the root-\(n\) expansion is
\[
\mathbb G_nu_h-\sqrt n\,C_hI^{-1}\sum_j\frac{n_j}{n}P_{n,-j}S+o_P(1)
\]
uniformly in \(h\). In the last average, an observation in fold \(\ell\) receives weight \(a_{n,\ell}=\sum_{j\ne\ell}\lambda_{n,j}/(1-\lambda_{n,j})\) relative to its weight \(1/n\) in \(P_nS\), where \(\lambda_{n,j}=n_j/n\). For equal folds this weight equals one; for asymptotically balanced folds it converges to one, and the squared \(L^2\) norm of the resulting replacement error is \(\operatorname{tr}(I)\sum_\ell\lambda_{n,\ell}(a_{n,\ell}-1)^2=o(1)\). This yields the projected score \(v_h\).

\textit{Process limit.} \hyperref[lem:monotone_path]{Lemma~\ref*{lem:monotone_path}} gives a finite uniform entropy integral and total boundedness in \(L^2(P)\). The local maximal inequality applied to the increment class with \(L^2\) norm at most \(\delta\) bounds the asymptotic modulus by \(C\delta\sqrt{\log(C/\delta)}\), which vanishes as \(\delta\to0\); together with the finite-dimensional central limit theorem this gives a tight limit. The finite-dimensional score \(S\) can be added to the class, and the projection coefficients \(C_hI^{-1}\) range over a bounded subset of a fixed Euclidean space, whose covering numbers are polynomial. The map \((z,s)\mapsto(z_h-C_hI^{-1}s)_h\) is continuous because \(\sup_h\|C_h\|<\infty\), which proves the projected limit and the covariance identity.
\end{proof}

\paragraph{Other training protocols.}
If the limiting fold proportions are unequal, the influence function of an observation in fold \(\ell\) is \(u_h-a_\ell C_hI^{-1}S\) with \(a_\ell=\sum_{j\ne\ell}\lambda_j/(1-\lambda_j)\). Since \(\sum_\ell\lambda_\ell a_\ell=1\), the pooled covariance is
\[
P(u_hu_{h'})+\Bigl(\sum_\ell\lambda_\ell a_\ell^2-2\Bigr)C_hI^{-1}C_{h'}^{\mathsf T}.
\]
With two folds of proportions \(w\) and \(1-w\) the coefficient equals \(1/\{w(1-w)\}-5\): it is \(-1\) for \(w=1/2\) but \(1.25\) for \(w=0.2\). If the propensity model is instead fitted on an independent sample of size \(M\) with \(n/M\to\rho\), the additional covariance is \(+\rho\,C_hI^{-1}C_{h'}^{\mathsf T}\). The projected covariance of \hyperref[thm:radius_process]{Theorem~\ref*{thm:radius_process}} therefore applies to balanced cross-fitting only.

\subsection{Multiplier Calibration and Protected Bands}

Let \(\hat U_i(h)=\hat\phi_i(h)-\hat\theta_h\). In regimes~(i) and~(iii), the calibration rows are \(\hat f_i(h)=\hat U_i(h)\). In regime~(ii), evaluate the off-fold propensity score \(\hat S_i=\{A_i-\pi_{\hat\beta_{j(i)}}(X_i)\}q_{\hat\beta_{j(i)}}(X_i)\), where \(j(i)\) is the fold of observation \(i\), and set
\[
\tilde S_i=\hat S_i-P_n\hat S,\quad
\hat I=P_n\tilde S\tilde S^{\mathsf T},\quad
\hat C_h=P_n\hat U(h)\tilde S^{\mathsf T},\quad
\hat f_i(h)=\hat U_i(h)-\hat C_h\hat I^{-1}\tilde S_i.
\]
For a deterministic \(a_n\downarrow0\), the inverse is used only when \(\lambda_{\min}(\hat I)\ge a_n\) and all quantities are finite; otherwise the calibration rows are set to zero and every interval is replaced by its full parameter range. This event has probability tending to zero. The projection changes the calibration rows, not the AIPW point estimate.

\begin{theorem}[Multiplier calibration over all radii]\label{thm:radius_multiplier}
Under the regime of \hyperref[ass:complete_learning]{Assumption~\ref*{ass:complete_learning}} that matches the calibration rows, independent standard normal multipliers \(\xi_i\) satisfy
\[
B_n^*(h)=n^{-1/2}\sum_{i=1}^n\xi_i\hat f_i(h)\rightsquigarrow_\xi G_f
\]
conditionally in probability in \(\ell^\infty(\mathcal H)\), and \(\sup_h|P_n\hat f(h)^2-Pf_h^2|=o_P(1)\).
\end{theorem}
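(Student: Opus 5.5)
The plan is to reduce both claims to statements about the oracle rows \(f_h(O_i)\) plus a uniformly small perturbation. Write \(\hat f_i(h)=f_h(O_i)+\Delta_i(h)\). Conditionally on the data, \(B_n^*\) is a centered Gaussian process with covariance \(P_n\hat f(h)\hat f(h')\). Let \(\tilde B_n(h)=n^{-1/2}\sum_i\xi_if_h(O_i)\) be its oracle counterpart. The argument then has two steps. First, \(\tilde B_n\rightsquigarrow_\xi G_f\) conditionally in probability. This is the multiplier central limit theorem for Donsker classes, since the class \(\{f_h:h\in\mathcal H\}\) is \(P\)-Donsker by the entropy bound of \hyperref[lem:monotone_path]{Lemma~\ref*{lem:monotone_path}} and the process-limit part of \hyperref[thm:radius_process]{Theorem~\ref*{thm:radius_process}}, and it has a bounded envelope, so the conditional multiplier CLT of van der Vaart and Wellner applies. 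Second, \(\sup_h|n^{-1/2}\sum_i\xi_i\Delta_i(h)|\to0\) in conditional probability, in probability. Together these give the first claim by a bounded-Lipschitz-distance argument.

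For the second step, the difference \(n^{-1/2}\sum_i\xi_i\Delta_i\) is, given the data, a Gaussian process with intrinsic variance \(P_n\Delta(h)^2\). I would decompose \(\Delta_i(h)\) into three pieces.
\begin{enumerate}[label=(\alph*),leftmargin=*]
\item The fold-wise score perturbation \(e_{j,h}(O_i)\). Its empirical second moment is uniformly \(o_P(1)\) by \hyperref[lem:radius_perturbation]{Lemma~\ref*{lem:radius_perturbation}}.
\item The centering error \(\theta_h-\hat\theta_h=O_P(n^{-1/2})\) uniformly, by \hyperref[thm:radius_process]{Theorem~\ref*{thm:radius_process}}.
\item In regime (ii), the projection error \(C_hI^{-1}S_i-\hat C_h\hat I^{-1}\tilde S_i\). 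Here I would show \(P_n\|\tilde S-S\|^2\to_P0\) from consistency of \(\hat\beta_j\) and the bounded derivatives of \(q_\beta\), then \(\hat I\to_P I\), and \(\sup_h\|\hat C_h-C_h\|\to_P0\) using (a), (b) and a uniform law of large numbers for \(\{u_hS^{\mathsf T}\}\), which is Glivenko–Cantelli by the same entropy lemma. The positive-definiteness of \(I\) makes the truncation event \(\lambda_{\min}(\hat I)<a_n\) asymptotically negligible.
\end{enumerate}
Together these give \(\sup_hP_n\Delta(h)^2=o_P(1)\).

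Small uniform variance alone does not control the supremum of a Gaussian process; one also needs a complexity bound. I expect this to be the main obstacle, because the perturbation rows depend on the fitted nuisances. For each fixed data realization, \(h\mapsto\Delta_i(h)\) is a finite combination of monotone-in-\(t\) paths with bounded coefficients: the fitted and limiting CDFs, \(Z_{kt}\), and the linear projection term. Its empirical \(L^2(P_n)\) covering numbers are therefore polynomial, with constants independent of the realization, by \hyperref[lem:monotone_path]{Lemma~\ref*{lem:monotone_path}}. Dudley's entropy bound for the conditional Gaussian process then gives
\[
E_\xi\sup_h\Bigl|n^{-1/2}\sum_i\xi_i\Delta_i(h)\Bigr|\le C\,\delta_n\sqrt{\log(C/\delta_n)},
\qquad \delta_n^2=\sup_hP_n\Delta(h)^2,
\]
which is \(o_P(1)\). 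A conditional Markov inequality converts this into the second step.

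The variance claim follows from the same decomposition: \(P_n\hat f(h)^2-P_nf_h^2\) is controlled by Cauchy–Schwarz with the bounded envelope and \(\sup_hP_n\Delta(h)^2=o_P(1\)). The remaining term \(\sup_h|P_nf_h^2-Pf_h^2|\) is \(o_P(1)\) because \(\{f_h^2\}\) is Glivenko–Cantelli, being a bounded square of a polynomial-entropy class.
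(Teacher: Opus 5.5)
Your proposal is correct and follows essentially the same route as the paper. You split the calibration process into the oracle multiplier process plus a perturbation. The perturbation has uniformly $o_P(1)$ empirical second moment, its supremum is controlled by Gaussian chaining under the uniform entropy of Lemma~\ref{lem:monotone_path}, and it has the same three pieces as in the paper (fold-wise score perturbation, centering, and the regime-(ii) projection).

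The differences are minor. You cite the conditional multiplier central limit theorem for Donsker classes, where the paper checks finite-dimensional convergence and the conditional modulus directly. You also fold the centering and projection terms into a single Dudley bound, where the paper bounds each as a scalar or fixed-dimensional Gaussian vector times a uniformly small or bounded coefficient.
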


\begin{proof}
The second conclusion of \hyperref[lem:radius_perturbation]{Lemma~\ref*{lem:radius_perturbation}}, the Cauchy--Schwarz inequality, and uniform laws of large numbers for bounded products give \(\hat I\to_PI\), \(\sup_h\|\hat C_h-C_h\|=o_P(1)\), and \(\sup_hP_n\{\hat f(h)-f_h\}^2=o_P(1)\). Covariance consistency alone does not yield process convergence, so the full perturbation is controlled as follows. Condition on the data and let \(\eta_{n,j}=\sup_h\{P_{n,j}e_{j,h}^2\}^{1/2}=o_P(1)\). The entropy bound of \hyperref[lem:monotone_path]{Lemma~\ref*{lem:monotone_path}} is uniform over probability measures and therefore applies to the empirical measure; Gaussian chaining with one base point gives
\[
E_\xi\sup_h\Bigl|n_j^{-1/2}\sum_{i\in\mathcal I_j}\xi_ie_{j,h}(O_i)\Bigr|
\le
\eta_{n,j}+C\int_0^{2\eta_{n,j}}\sqrt{\log(C/u)}\,du
=o_P(1).
\]
Global centering uses \(\hat\theta_h\) in place of \(\theta_h\): \(\hat U_i(h)-u_h(O_i)=e_{j(i),h}(O_i)-(\hat\theta_h-\theta_h)\), so centering contributes \(-(\hat\theta_h-\theta_h)\,n^{-1/2}\sum_i\xi_i\), which is \(O_P(n^{-1/2})\) uniformly in \(h\) by \hyperref[thm:radius_process]{Theorem~\ref*{thm:radius_process}}. For the projected rows, the remaining differences are a uniformly \(o_P(1)\) coefficient multiplying a fixed-dimensional Gaussian vector and a uniformly bounded coefficient multiplying the Gaussian sum of \(\tilde S_i-S_i=(\hat S_i-S_i)-P_n\hat S\); both parts have conditional second moment \(o_P(1)\), because the bounded derivatives imply \(\max_i\|\hat S_i-S_i\|=o_P(1)\) and \(P_nS=O_P(n^{-1/2})\).

For the oracle multiplier process, conditional finite-dimensional convergence follows from the law of large numbers. On the increment class \(\{f_h-f_{h'}:\|f_h-f_{h'}\|_{P,2}<\delta\}\), the largest empirical squared norm is at most \(\delta^2+o_P(1)\) by the uniform law for the product class, and the same entropy integral controls the conditional modulus. Finite nets and this modulus give conditional tightness and hence the convergence of the oracle multiplier process, which the perturbation bound transfers to the estimated rows. The product-class bounds also give the variance statement.
\end{proof}

Fix \(s_0>0\) and \(c_{\min}>0\), let \(\hat s_h=\max\{(P_n\hat f(h)^2)^{1/2},s_0\}\), and let \(\hat c_{1-\alpha}\) be the conditional \((1-\alpha)\)-quantile of \(\sup_h|B_n^*(h)|/\hat s_h\). The reported band is
\begin{equation}\label{eq:protected_radius_band}
\hat\theta_h\pm\max\{\hat c_{1-\alpha},c_{\min}\}\,\hat s_h/\sqrt n,
\qquad h\in\mathcal H,
\end{equation}
intersected with \([0,1]\) for arm profiles and with \([-1,1]\) for contrasts.

\begin{corollary}[Coverage including degenerate laws]\label{cor:radius_protected}
For every fixed law satisfying \hyperref[ass:complete_learning]{Assumption~\ref*{ass:complete_learning}}, the simultaneous coverage of \eqref{eq:protected_radius_band} has limit inferior at least \(1-\alpha\). If \(G_f\) is not identically zero, the coverage of the band without the lower bound \(c_{\min}\) converges to \(1-\alpha\). If \(G_f\equiv0\), the coverage of \eqref{eq:protected_radius_band} converges to one.
\end{corollary}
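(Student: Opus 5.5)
The plan is to combine the process limit of \hyperref[thm:radius_process]{Theorem~\ref*{thm:radius_process}} with the conditional multiplier limit of \hyperref[thm:radius_multiplier]{Theorem~\ref*{thm:radius_multiplier}}, studentized by the floored scales \(\hat s_h\). Intersecting with \([0,1]\) or \([-1,1]\) only shrinks intervals around a parameter already in that range, so it never destroys coverage and can be ignored. Write \(s_h=\max\{(Pf_h^2)^{1/2},s_0\}\). This is a deterministic function bounded below by \(s_0\). By the variance statement of \hyperref[thm:radius_multiplier]{Theorem~\ref*{thm:radius_multiplier}} and the Lipschitz property of \(x\mapsto\max\{\sqrt x,s_0\}\) on \([0,\infty)\) (Lipschitz constant \(1/(2s_0)\)), we have \(\sup_h|\hat s_h-s_h|=o_P(1)\), and hence \(\sup_h|s_h/\hat s_h-1|=o_P(1)\).

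\textbf{Weak limits of the two statistics.} Since \(h\mapsto z_h/s_h\) is a bounded linear map on \(\ell^\infty(\mathcal H)\) (recall \(s_h\ge s_0\)), the continuous mapping theorem gives
\[
T_n:=\sup_h\sqrt n|\hat\theta_h-\theta_h|/\hat s_h\rightsquigarrow T:=\sup_h|G_f(h)|/s_h .
\]
This uses the fact that replacing \(s_h\) by \(\hat s_h\) changes the supremum by at most \(\sup_h|s_h/\hat s_h-1|\cdot O_P(1)\). The same argument, applied conditionally to \(B_n^*\), shows that the conditional law of \(\sup_h|B_n^*(h)|/\hat s_h\) converges in bounded Lipschitz distance, in probability, to the law of \(T\).

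\textbf{Case \(G_f\not\equiv0\).} Then \(T\) is the supremum of a nondegenerate tight Gaussian process, and its distribution function is continuous on \((0,\infty)\); it is also continuous at \(0\) after excluding the atom, but \(P(T=0)=0\) when some coordinate has positive variance. I would cite the Tsirelson/Lifshits result that the law of the supremum of a separable Gaussian process is absolutely continuous except for a possible atom at the left endpoint of its support, and that the left endpoint is \(0\) with no mass there once the process has a nonzero coordinate. Because the \((1-\alpha)\)-quantile \(c_{1-\alpha}\) of \(T\) is positive and is a continuity point, \(\hat c_{1-\alpha}\to_P c_{1-\alpha}\). The usual quantile argument, as in \hyperref[lem:bootstrap_quantile_transfer]{Lemma~\ref*{lem:bootstrap_quantile_transfer}}, then gives \(\Pr(T_n\le\hat c_{1-\alpha})\to1-\alpha\). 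With \(c_{\min}\) included, the cutoff only increases, so the limit inferior is at least \(1-\alpha\).

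\textbf{Case \(G_f\equiv0\).} Here \(T=0\), so \(T_n\to_P0\). The cutoff satisfies \(\max\{\hat c_{1-\alpha},c_{\min}\}\ge c_{\min}>0\), so \(\Pr(T_n\le c_{\min})\to1\) and coverage converges to one. This case is exactly why \(c_{\min}\) is needed: without it, \(\hat c_{1-\alpha}\to_P0\) and nothing controls the comparison. Combining the two cases gives the limit inferior statement for every fixed law.

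The main obstacle is the anti-concentration and continuity of the distribution of \(T\) near its quantile in the infinite-index, possibly partly degenerate setting. In that setting the finite-dimensional bound \eqref{eq:anticoncentration} is not available. The first thing I would try is the Tsirelson-type theorem quoted above, applied to the separable version of \(G_f/s\) on a countable dense set of radii. Tightness and right-continuity in \(t\) make this reduction to a countable index set legitimate.
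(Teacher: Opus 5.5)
Your proposal is correct and follows the paper's proof. The floor \(s_0\) gives uniform consistency of \(1/\hat s_h\), and both the studentized error and the multiplier maximum converge to \(V=\sup_h|G_f(h)|/\max\{(Pf_h^2)^{1/2},s_0\}\). When \(G_f\not\equiv0\), continuity of the law of \(V\) gives coverage \(1-\alpha\), and raising the cutoff cannot reduce it. When \(G_f\equiv0\), \(c_{\min}>0\) secures coverage one.

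The only difference is your continuity citation. You invoke Tsirelson's theorem, whereas the paper uses Remark~\ref{rem:norm_regularity} (small-ball positivity from tightness, plus log-concavity) together with \citet{gaenssler2007continuity}. Note that \(\hat c_{1-\alpha}\to_P c_{1-\alpha}\) needs the distribution function to be strictly increasing at \(c_{1-\alpha}\), not merely continuous there. Both the full Tsirelson result and the paper's remark supply this. Coverage itself needs only continuity, through the bracketing argument in the proof of Corollary~\ref{cor:scb}.
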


\begin{proof}
Uniform variance consistency and the positive scale floor give uniform consistency of \(1/\hat s_h\). Hence both the maximum of the studentized estimation error and the conditional multiplier maximum converge in distribution to
\[
V=\sup_{h\in\mathcal H}\frac{|G_f(h)|}{\max\{(Pf_h^2)^{1/2},s_0\}}.
\]
If \(G_f\) is not identically zero, the distribution function of \(V\) is continuous and strictly increasing on \((0,\infty)\) (see \hyperref[rem:norm_regularity]{Remark~\ref*{rem:norm_regularity}} and~\citep{gaenssler2007continuity}), so the conditional quantiles converge and Slutsky's lemma gives coverage \(1-\alpha\); enlarging the critical value cannot reduce it. If \(G_f\equiv0\), the studentized estimation error is \(o_P(1)\), whereas the protected cutoff is at least \(c_{\min}>0\).
\end{proof}

\begin{remark}[Regularity of the norm distribution]\label{rem:norm_regularity}
Let \(G\) be a tight centered Gaussian process that is not identically zero. Tightness yields a compact set of positive probability and finite covers of it by arbitrarily small balls; one such ball has positive probability, two independent copies of \(G\) fall into it with positive probability, and their difference has the law of \(\sqrt2\,G\). Hence every ball about zero has positive probability. On a countable norming set, the distribution function of each finite maximum of absolute coordinates is log-concave in the radius, by log-concavity of Gaussian measures on symmetric convex sets, and decreasing limits preserve log-concavity. The logarithm of the limiting distribution function is therefore finite and concave on \((0,\infty)\), hence continuous there. A nonzero coordinate excludes an atom at zero and keeps the distribution function strictly below one at every finite radius; if the log-distribution function were constant on an interval, concavity would keep it constant thereafter, contradicting its convergence to zero at infinity. This gives strict increase.
\end{remark}

\paragraph{A scale floor alone is insufficient.}
Consider regime~(i) with contrast-only calibration on a fixed window. Fix an anchor \(\omega\), let \(\pi=1/2\), \(d(\omega,Y(1))=0\), and \(d(\omega,Y(0))=2\), and take the window \([1/2,3/2]\), on which \(\Delta=1\) and the efficient score vanishes. Use two equal folds of size \(n/2\). Let \(r_n\downarrow0\), and let \(s_j\in\{-1,1\}\) be the sign of the treatment imbalance in the training complement of fold \(j\), with ties assigned \(+1\). On the window set \(\hat m_{1j}=1-r_n(1+s_j)/2\) and \(\hat m_{0j}=r_n(1-s_j)/2\), placing this mass at distance \(0\) and the remaining mass at distance \(2\), so that the fits are proper CDFs on the whole half-axis; they converge uniformly to the true regressions. The fitted contrast score on the window equals \(1+2r_ns_{j(i)}(A_i-1/2)\). If \(N_1\) and \(N_2\) are independent standard normal variables, then
\[
\sqrt n\,|\hat\Delta-1|/r_n\ \dto\ (|N_1|+|N_2|)/\sqrt2,
\qquad
\hat\sigma/r_n\to_P1,
\]
so the scale floor cancels between the bootstrap cutoff and the band width. Since \((|N_1|+|N_2|)/\sqrt2\) is the larger of the absolute values of two independent standard normal variables after an orthogonal rotation, the coverage without \(c_{\min}\) converges to \((1-\alpha)^2\). With any fixed \(c_{\min}>0\) it converges to one.

\subsection{Learners That Satisfy the Conditions}

\begin{proposition}[Fixed-kernel conditional empirical CDF]
\label{prop:fixed_kernel_stability}
Let \(0\le H(x,x')\le L<\infty\) be a measurable kernel whose parameters and preprocessing are deterministic or fixed by independent auxiliary information, and suppose that \(D_a(x)=E\{\mathds{1}(A=a)H(x,X)\}>0\) for \(P_X\)-almost every \(x\). From \(M\) training observations form
\[
\hat m_a(k,t;x)
=
\frac{\sum_{r=1}^M\mathds{1}(A_r=a)H(x,X_r)\mathds{1}\{D_{kr}\le t\}}
     {\sum_{r=1}^M\mathds{1}(A_r=a)H(x,X_r)},
\]
using a fixed proper CDF when the denominator vanishes, and let \(\bar m_a(k,t;x)=E\{\mathds{1}(A=a)H(x,X)\mathds{1}(D_k\le t)\}/D_a(x)\). For a universal constant \(C\),
\[
E\int\sup_{t\ge0}|\hat m_a(k,t;x)-\bar m_a(k,t;x)|^2\,dP_X(x)
\le
\int\min\left\{1,\frac{CL^2}{M D_a(x)^2}\right\}dP_X(x)
\longrightarrow0.
\]
Hence this learner satisfies the stability condition of regimes~(i) and~(ii), without requiring \(\bar m=m\), a density for \(X\), or moments of the distances.
\end{proposition}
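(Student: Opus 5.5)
Fix \(x\), the arm \(a\) and the anchor \(k\). The plan is to bound the squared sup-error pointwise in \(x\), then integrate over \(P_X\) and apply dominated convergence. The estimator is a ratio \(\hat m_a=\hat N(t;x)/\hat D(x)\), where
\[
\hat N(t;x)=M^{-1}\sum_{r=1}^M\mathds 1(A_r=a)H(x,X_r)\mathds 1\{D_{kr}\le t\},
\qquad
\hat D(x)=M^{-1}\sum_{r=1}^M\mathds 1(A_r=a)H(x,X_r).
\]
Their population counterparts are \(N(t;x)=E\{\mathds 1(A=a)H(x,X)\mathds 1(D_k\le t)\}\) and \(D_a(x)\), so that \(\bar m_a=N/D_a\). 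Both \(\hat m_a\) and \(\bar m_a\) lie in \([0,1]\), which gives the trivial bound \(1\) and hence the \(\min\{1,\cdot\}\) term. For the nontrivial bound I would use the ratio identity
\[
\hat m_a-\bar m_a=\frac{(\hat N-N)-\bar m_a(\hat D-D_a)}{\hat D}.
\]
It has the drawback of the random denominator \(\hat D\). I would therefore first use the alternative form \(\hat m_a-\bar m_a=\{(\hat N-N)-\hat m_a(\hat D-D_a)\}/D_a\), which is valid whenever \(\hat D>0\) and has the deterministic denominator \(D_a(x)\). Since \(0\le\hat m_a\le1\), this gives
\[
\sup_t|\hat m_a-\bar m_a|\le D_a(x)^{-1}\Bigl\{\sup_t|\hat N(t;x)-N(t;x)|+|\hat D(x)-D_a(x)|\Bigr\}.
\]
On the event \(\hat D=0\) the fallback CDF is used, and there the error is at most \(1\le|\hat D-D_a|/D_a\). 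The same bound therefore holds with no case split.

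The next step is to control \(E\sup_t|\hat N(t;x)-N(t;x)|^2\) by \(CL^2/M\) with a universal constant \(C\). Write \(W_r=\mathds 1(A_r=a)H(x,X_r)\in[0,L]\). Then \(\hat N(t;x)\) is an empirical mean of \(W_r\mathds 1\{D_{kr}\le t\}\), a monotone path in \(t\) bounded by \(L\). I would rescale by \(L\) and apply either a weighted Dvoretzky--Kiefer--Wolfowitz or a symmetrization argument. The simplest concrete route is \hyperref[lem:monotone_path]{Lemma~\ref*{lem:monotone_path}}: the class \(\{w\mathds 1(d\le t)/L:t\ge0\}\) has bracketing numbers at most \(2+2\eta^{-2}\), uniformly in the measure. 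The bracketing maximal inequality, or equivalently symmetrization with Dudley's entropy integral, whose integral is finite and independent of the law, then yields \(E\sup_t|\hat N-N|^2\le C L^2/M\). The term \(E|\hat D-D_a|^2\le L^2/M\) is elementary.

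Combining these gives the pointwise bound \(E\sup_t|\hat m_a-\bar m_a|^2\le\min\{1,CL^2/(MD_a(x)^2)\}\). Tonelli's theorem (joint measurability follows because the supremum can be taken over rational \(t\) by right-continuity) lets me integrate over \(P_X\). Since \(D_a(x)>0\) almost everywhere, the integrand tends to zero pointwise as \(M\to\infty\) and is dominated by \(1\), so the integral vanishes. For the stability claim, each fold's training complement has size \(M\asymp n\) and is independent of the evaluation points. The display then gives \(E\,\sup_t\|\hat m_{a,j}-\bar m_a\|^2_{P_X,2}\to0\) after exchanging the supremum and the integral, since \(\sup_t\int\le\int\sup_t\). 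Markov's inequality and a union over the finitely many \(j,a,k\) then give \(r_n\to_P0\). The fits are right-continuous proper CDFs in \(t\), and the common limit \(\bar m_a\) does not depend on the fold, as required. The main obstacle is obtaining the uniform-in-\(t\) second-moment bound with a constant free of the distribution of \(D_k\), including its atoms. I expect the bracketing bound of \hyperref[lem:monotone_path]{Lemma~\ref*{lem:monotone_path}}, which is insensitive to atoms, to handle this.
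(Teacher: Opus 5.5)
Your proof is correct, and it takes a different route from the paper in both main steps.

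\textbf{The ratio step.} The paper works on the event \(\{|\hat D-D|\le D/2\}\). There \(\hat D\ge D/2\) and \(N_t\le D\) give \(\sup_t|\hat m_a-\bar m_a|\le(2/D)\{\sup_t|\hat N_t-N_t|+|\hat D-D|\}\), and the complement is charged to Chebyshev's inequality. Your identity \(\hat m_a-\bar m_a=\{(\hat N-N)-\hat m_a(\hat D-D_a)\}/D_a\) has the deterministic denominator \(D_a(x)\). Combined with \(0\le\hat m_a\le1\) and your check that the fallback case obeys the same inequality, it makes the pointwise bound hold on the whole sample space. This removes the case split and gives a smaller constant: with the paper's numerator bound, \(C=2\cdot17=34\) rather than \(140\).

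\textbf{The numerator step.} The paper uses no entropy at all. After symmetrization, sorting the training points by \(D_{kr}\) turns \(\sup_t|\sum_r\varepsilon_rW_r\mathds 1\{D_{kr}\le t\}|\) into a maximum over a subset of the partial sums of a Rademacher martingale. Doob's \(L^2\) inequality then gives \(E\sup_t|\hat N_t-N_t|^2\le16L^2/M\), and ties only thin the set of partial sums.

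Your route through Lemma~\ref{lem:monotone_path} also works, because its bracketing bound holds for every probability measure and the envelope is \(L\). You do, however, need a second-moment maximal inequality, not just the usual first-moment bracketing bound. Either the uniform-entropy \(L^p\) bound or symmetrization followed by conditional chaining of the sub-Gaussian Rademacher process will do. The universal constant is then left implicit.

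\textbf{Trade-off and final step.} The paper's argument is elementary and fully explicit. Yours replaces a good-event argument with an exact algebraic identity, which is the cleaner way to handle the random denominator. Your passage from the integrated bound to \(r_n\to_P0\) spells out a step the paper leaves implicit: the interchange \(\sup_t\int\le\int\sup_t\), Markov's inequality, and a union over the finitely many \((j,a,k)\).
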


\begin{proof}
Fix \(x\), and write \(\hat N_t\) and \(\hat D\) for the training-sample averages in the numerator and denominator, with population counterparts \(N_t\) and \(D=D_a(x)\). Symmetrization and ordering by \(D_{kr}\) turn the numerator fluctuation into a weighted Rademacher partial-sum process, and Doob's \(L^2\) inequality gives \(E\sup_t|\hat N_t-N_t|^2\le16L^2/M\); tied distances only reduce the set of partial sums. Also \(E|\hat D-D|^2\le L^2/M\). On the event \(|\hat D-D|\le D/2\), since \(N_t\le D\),
\[
\sup_t\left|\frac{\hat N_t}{\hat D}-\frac{N_t}{D}\right|
\le
\frac2D\Bigl\{\sup_t|\hat N_t-N_t|+|\hat D-D|\Bigr\},
\]
and Chebyshev's inequality bounds the probability of the complement by \(4L^2/(MD^2)\). The two events contribute at most \(8\cdot17\,L^2/(MD^2)\) and, because two CDFs differ by at most one, \(4L^2/(MD^2)\), so \(C=140\) suffices. Integrating in \(x\) and applying dominated convergence completes the proof; nonnegative weights and finite distances make the population ratio a proper CDF.
\end{proof}

A fixed Gaussian kernel is positive everywhere and gives a simple instance. The limit is the kernel-smoothed conditional CDF rather than the true conditional CDF, which is all that regimes~(i) and~(ii) require. Data-tuned kernels with several limiting bandwidths, forest-based weights, or fold-dependent standardization need a separate argument for the stability condition.

\begin{proposition}[Finite covariate space]\label{prop:finite_strata}
Suppose \(X\) has a fixed finite support and \(\min_{x,a}\Pr(X=x,A=a)>0\). Within each training fold, estimate the propensity score by the empirical treatment proportion in each covariate cell and the conditional CDF by the arm-specific empirical distance CDF in that cell. Clip the propensity at a fixed level outside the true overlap interval and use proper CDF fallbacks for empty cells. Then both uniform nuisance errors are \(O_P(n^{-1/2})\), so regime~(iii) holds over the whole radius half-axis.
\end{proposition}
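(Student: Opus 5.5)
The plan is to reduce everything to cell-wise multinomial estimation on the finite support $\mathcal X$, where the nuisances are sample frequencies. Fix a training complement of size $M\asymp n$ (the folds satisfy $n_j/n\to\lambda_j$ with $J$ fixed). For each cell $x\in\mathcal X$ and arm $a$, write $N_{x}$ and $N_{xa}$ for the training counts of $\{X=x\}$ and $\{X=x,A=a\}$. Then:
\begin{itemize}
\item $\hat\pi_j(x)=N_{x1}/N_x$, clipped at a fixed level $\epsilon_*$ chosen outside $[\epsilon,1-\epsilon]$;
\item $\hat m_{a,j}(k,t;x)$ is the empirical CDF of $D_k$ among the $N_{xa}$ training units in that cell.
\end{itemize}
Since $p_{xa}=\Pr(X=x,A=a)\ge p_{\min}>0$ and $|\mathcal X|$ is finite, Chernoff bounds give $\Pr(\min_{x,a}N_{xa}\ge Mp_{\min}/2)\to1$. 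On this event no cell is empty, so the fallback CDFs are used only on an event of vanishing probability. Because $L^2(P_X)$ norms on a finite support are finite weighted sums, it suffices to bound each cell separately and sum.

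\textbf{Step 1 (propensity).} On the good event, $\hat\pi_j(x)-\pi(x)$ is a ratio of frequencies. The delta method, or directly Chebyshev applied to $N_{x1}-\pi(x)N_x$ conditional on $N_x$, gives $|\hat\pi_j(x)-\pi(x)|=O_P(M^{-1/2})$. Clipping is inactive with probability tending to one, because $\pi(x)\in[\epsilon,1-\epsilon]$ lies strictly inside the clipping interval and $\hat\pi_j(x)\to\pi(x)$. Summing over the finitely many cells and folds gives $r_{\pi,n}=O_P(n^{-1/2})$.

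\textbf{Step 2 (outcome CDFs).} Conditionally on the covariate and treatment labels of the training sample, the distances $D_{kr}$ with $(X_r,A_r)=(x,a)$ are i.i.d. By consistency and exchangeability their law is the true conditional law of $D_k$ given $A=a$, $X=x$, whose CDF is $m_a(k,\cdot;x)$. The Dvoretzky--Kiefer--Wolfowitz inequality (or the Doob/Rademacher argument used in the proof of \cref{prop:fixed_kernel_stability}) then gives
\[
E\Bigl\{\sup_{t\ge0}|\hat m_{a,j}(k,t;x)-m_a(k,t;x)|^2\Bigm| N_{xa}\Bigr\}\le C/N_{xa}.
\]
This bound holds for every law of $D_k$, including laws with atoms, since DKW is distribution-free. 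On the good event the right side is at most $2C/(Mp_{\min})$. Summing over the finitely many $x,a,k,j$ yields $r_n=O_P(n^{-1/2})$ with $\bar m=m$. Each fit is jointly measurable and is a proper right-continuous CDF in $t$, as regime~(iii) requires. The fits are also common in limit across folds, trivially, because the limit is $m$ itself.

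\textbf{Step 3 (conclusion).} Combining the two steps gives $\sqrt n\,r_nr_{\pi,n}=O_P(n^{-1/2})=o_P(1)$ and $r_{\pi,n}=o_P(1)$, which are exactly the rate conditions of regime~(iii) in \cref{ass:complete_learning}. The remaining structural conditions hold by construction:
\begin{itemize}
\item the fitted propensities lie in $[\epsilon_*,1-\epsilon_*]$ because of the clipping;
\item fitting uses only the training complement;
\item the random variables take values in a standard Borel space.
\end{itemize}

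The main technical point is Step 2's uniformity over the whole half-axis $t\ge0$ when there are atoms and the cell sample size is random. I would handle it by conditioning on the label vector and using distribution-free DKW, then unconditioning on the good event. No step seems genuinely hard; the care is mostly in not losing the fallback and clipping events.
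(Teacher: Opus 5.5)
Your proposal is correct and follows essentially the same route as the paper's proof. Both arguments bound all cell-by-arm training counts below with high probability, apply the distribution-free Dvoretzky--Kiefer--Wolfowitz inequality conditionally on the counts so that atoms cause no difficulty, use binomial concentration for the cell propensities, note that clipping and empty-cell fallbacks are asymptotically inactive, and multiply the two \(O_P(n^{-1/2})\) rates; the appeal to consistency and exchangeability in your Step 2 is unnecessary, since \(m_a\) is defined directly as the observed-data conditional CDF.
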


\begin{proof}
All cell-by-arm counts are bounded below by a constant multiple of the training size with probability \(1-O(e^{-cn})\). Conditionally on a count, the distances in a cell are i.i.d., and the Dvoretzky--Kiefer--Wolfowitz inequality gives an \(O_P(n^{-1/2})\) error uniformly in the radius, including atoms. A finite union over cells, arms, anchors, and folds preserves this rate. The empirical treatment proportions have the same rate by binomial concentration, and their denominators are bounded below on the same event. Clipping is asymptotically inactive because its thresholds lie outside the true range, and empty-cell fallbacks occur with vanishing probability. The product of the two errors multiplied by \(\sqrt n\) is \(O_P(n^{-1/2})\).
\end{proof}

\subsection{Measurability and Computation}

Regular versions of the conditional distributions are used throughout, so the conditional CDFs are right-continuous in \(t\) and every supremum over \(t\ge0\) is determined by the nonnegative rationals. Weak convergence in \(\ell^\infty(\mathcal H)\) is understood in the outer-probability sense when the process is not Borel measurable; the scalar suprema and the multiplier quantiles have the measurable versions just described. Proper CDF limits imply \(\theta_{akt}\to1\), \(\theta_{\Delta kt}\to0\), and \(f_h\to0\) in \(L^2(P)\) as the radius of \(h\) tends to infinity. The entropy bound does not depend on the length of the radius interval, so the result covers the whole half-axis directly. At radius zero the closed-ball mass is retained.

If all fits, including fallbacks, are finite-support weighted empirical CDFs with weights that do not depend on \(t\), then all fitted scores, projections, means, variances, and multiplier paths are constant between consecutive points of the union of the evaluation, training, auxiliary, and fallback distances. Evaluating the closed-ball values at these points, together with the initial and final states, computes the supremum over radii exactly; tied distances enter together. \hyperref[alg:complete_radius]{Algorithm~\ref*{alg:complete_radius}} in \hyperref[app:implementation]{Appendix~\ref*{app:implementation}} summarizes the procedure. Smooth CDF fits require a different algorithm for the supremum. In simulation studies the coverage error is a supremum against the true curve, whose extrema need not occur at the observed distances.

\paragraph{Finite multiplier draws.}\phantomsection\label{app:mc_quantiles}
The theorems use exact conditional Gaussian quantiles. With \(B\to\infty\) independent draws, Monte Carlo quantiles are consistent when the limiting process is nonzero. For a fixed number of draws, a conservative choice allocates \(\delta_{\rm MC}<\alpha\), sets \(p_0=1-\alpha+\delta_{\rm MC}\) and \(\eta_B=\{\log(2/\delta_{\rm MC})/(2B)\}^{1/2}\), and uses the \(\lceil B(p_0+\eta_B)\rceil\)-th order statistic of the simulated maxima, without interpolation. By the Dvoretzky--Kiefer--Wolfowitz inequality with Massart's constant \citep{massart1990tight}, this order statistic lies above the exact \(p_0\)-quantile with conditional probability at least \(1-\delta_{\rm MC}\). If \(p_0+\eta_B>1\), the number of draws must be increased. The same rule can be used for the finite-grid cutoff of \hyperref[subsec:finite_grid]{Section~\ref*{subsec:finite_grid}}.

\section{Implementation Details and Design Rules}
\label{app:implementation}

This appendix gives the implementation details suppressed from the main text. The finite-grid theory concerns \(\mathcal G_n^{\mathrm{inf}}\), while the software also computes point-estimation and diagnostic quantities on the larger grid \(\mathcal G_n^{\mathrm{est}}\). \hyperref[alg:complete_radius]{Algorithm~\ref*{alg:complete_radius}} implements inference over all radii at fixed anchors (\hyperref[app:complete_radius]{Appendix~\ref*{app:complete_radius}}).

\subsection{Scalable Nuisance Estimation via Conditional Distribution Learning}\label{sec:nuisance_est}

A naive implementation of \hyperref[thm:ident]{Theorem~\ref*{thm:ident}} would fit a separate binary classifier for every anchor--radius pair $g=(\omega,t)$. Such a construction is unattractive both computationally and statistically. It is expensive, and more importantly, it need not preserve the monotonicity in $t$ required of a conditional CDF:
\[
m_a(\omega,t_1;x)\le m_a(\omega,t_2;x),
\qquad t_1<t_2.
\]

For a fixed anchor $\omega_k\in\mathcal W$, the transformed outcome
\[
D_k:=d(\omega_k,Y)
\]
is scalar-valued, and the nuisance target becomes the conditional CDF
\[
m_a(\omega_k,t;x)
=
\Prob(D_k\le t\mid A=a,X=x).
\]
This suggests estimating $m_a(\omega_k,\cdot\,;x)$ directly as a conditional distribution function rather than separately learning threshold-wise binary regressions.
Fitting the conditional CDF directly keeps the fitted profile monotone in \(t\).

The theory is compatible with conditional distribution learners that return coherent CDFs in $t$, provided the finite-grid nuisance-rate and correlation conditions of \hyperref[app:regularity_proofs]{Appendix~\ref*{app:regularity_proofs}} hold. We describe two structurally compatible classes; the experiments use the DRF weighted ECDF learner, except for the five-space table, which uses the anchor-specific kernel ECDF learner.

\paragraph{(i) Generic conditional distribution learners.}
One option is to employ a distributional learner such as Distributional Random Forests (DRF), which estimates conditional distributions through forest-induced weights and can therefore produce a valid CDF curve in $t$ by construction~\citep{cevid2022drf}.

\paragraph{(ii) Anchor-specific kernel-weighted ECDF learners.}
A second option is to estimate, for each anchor $\omega_k$, a kernel-weighted conditional empirical CDF. This directly targets the same nuisance object \(m_a(\omega_k,t;x)\) and likewise guarantees monotonicity in \(t\) by construction, paralleling the empirical-CDF viewpoint in classical empirical-process theory~\citep{van1996weak}.

These two implementations differ algorithmically, but they target the same conditional CDF nuisance and both produce structurally valid profile regressions. Their inferential use is governed by the finite-grid rate and correlation conditions of \hyperref[app:regularity_proofs]{Appendix~\ref*{app:regularity_proofs}}. We use the DRF weighted ECDF learner outside the five-space table because its conditional-CDF prediction errors decrease more clearly with the sample size (\hyperref[app:learner_robustness]{Appendix~\ref*{app:learner_robustness}}); the kernel ECDF learner is simpler and faster.

The complete cross-fitted estimation procedure is summarized in \hyperref[alg:dr_optimized]{Algorithm~\ref*{alg:dr_optimized}}.

\begin{algorithm}[tbp]
\caption{Finite-Grid Profile Causal Effect Inference}
\label{alg:dr_optimized}
\begin{algorithmic}[1]
\STATE \textbf{Input:} Main analysis sample $\mathcal{D}_n=\{(A_i,X_i,Y_i)\}_{i=1}^n$ and an independent auxiliary design/pilot sample $\mathcal{D}^{\mathrm{pilot}}$ with index set $\mathcal{I}_{\mathrm{pilot}}$.
\STATE \textbf{Hyperparameters:} Estimation grid $\mathcal{G}_n^{\mathrm{est}}=\mathcal{W}_{\mathrm{grid}}\times\mathcal{T}_{\mathrm{grid}}$, candidate subgrid $\mathcal{G}_n^{\mathrm{cand}}\subseteq\mathcal{G}_n^{\mathrm{est}}$, folds $J$, truncation $\epsilon_{\rm clip}$, trimming threshold $c_p\in(0,0.5)$, level $\alpha$, and number of multiplier draws $B$.

\vspace{1mm}
\STATE \textbf{Phase 1: Grid Construction (Pilot Sample)}
\STATE Use $\mathcal{D}^{\mathrm{pilot}}$ only to construct anchors, radii, and the inferential grid; it is not used in doubly robust score evaluation.
\FOR{each $g = (\omega_k, t) \in \mathcal{G}_n^{\mathrm{cand}}$}
    \STATE Compute marginal probability: $\hat{p}_{\mathrm{pilot}}(g) = \frac{1}{|\mathcal{I}_{\mathrm{pilot}}|} \sum_{i \in \mathcal{I}_{\mathrm{pilot}}} \mathds{1}\{d(\omega_k, Y_i) \le t\}$.
\ENDFOR
\STATE Form inferential grid: $\mathcal{G}_n^{\mathrm{inf}} \leftarrow \{g \in \mathcal{G}_n^{\mathrm{cand}} : \hat{p}_{\mathrm{pilot}}(g) \in [c_p, 1-c_p]\}$.

\vspace{1mm}
\STATE \textbf{Phase 2: Cross-Fitted Inference (Main Sample $\mathcal{D}_n$)}
\STATE Throughout Phase 2, if a required fit fails without a valid fallback, or a required computation returns a nonfinite value, report \([-1,1]\) for every retained coordinate, do not reject, and stop before using that result.
\STATE Cache distance matrix $D_{i,k} = d(\omega_k, Y_i)$ for $i \in \mathcal{D}_n$ and $\omega_k \in \mathcal{W}_{\mathrm{grid}}$.
\STATE Partition $\mathcal{D}_n$ into $J$ disjoint folds $\mathcal{I}_1, \dots, \mathcal{I}_J$.
\FOR{$j=1$ to $J$}
    \STATE Train $\hat{\pi}^{(-j)}(x)$ to predict $\Prob(A=1 \mid X=x)$ using $\mathcal{D}_n \setminus \mathcal{I}_j$.
    \FOR{treatment arm $a \in \{0, 1\}$ and anchor $\omega_k \in \mathcal{W}_{\mathrm{grid}}$}
        \STATE Train conditional CDF learner $\hat{F}_{a,k}^{(-j)}(\cdot \mid x)$ using $\{X_i, D_{i,k} : i \notin \mathcal{I}_j, A_i=a\}$.
    \ENDFOR
    \FOR{each validation unit $i \in \mathcal{I}_j$}
        \STATE Clip propensity: $\hat{\pi}_i = \min\{1-\epsilon_{\rm clip}, \max\{\epsilon_{\rm clip}, \hat{\pi}^{(-j)}(X_i)\}\}$.
        \STATE Set $\hat\pi_{1,i}=\hat\pi_i$ and $\hat\pi_{0,i}=1-\hat\pi_i$ for the arm-specific score denominators.
        \FOR{each $g=(\omega_k, t) \in \mathcal{G}_n^{\mathrm{est}}$}
            \FOR{$a \in \{0,1\}$}
                \STATE Predict nuisance: $\hat{m}_{a,i}(g) = \hat{F}_{a,k}^{(-j)}(t \mid X_i)$.
                \STATE Compute arm-specific score $\hat{\phi}_{a,i}(g)$ using the arm-specific AIPW formula.
            \ENDFOR
            \STATE Set causal score: $\hat{\phi}_i(g) = \hat{\phi}_{1,i}(g) - \hat{\phi}_{0,i}(g)$.
        \ENDFOR
    \ENDFOR
\ENDFOR
\STATE Set \(\hat\Delta(g)=n^{-1}\sum_{i=1}^n\hat\phi_i(g)\) for \(g\in\mathcal G_n^{\mathrm{est}}\), center the pooled validation scores at \(\hat\Delta(g)\), compute \(\hat\sigma^2(g)=n^{-1}\sum_{i=1}^n\{\hat\phi_i(g)-\hat\Delta(g)\}^2\), and use the same centered scores for multiplier-bootstrap calibration.
\STATE If \(\mathcal{G}_n^{\mathrm{inf}}\) is empty, or if \(\hat\sigma(g)=0\) for some \(g\in\mathcal{G}_n^{\mathrm{inf}}\), report \([-1,1]\) for every retained coordinate, do not reject, and stop.
\STATE Otherwise, for \(b=1,\ldots,B\), draw i.i.d.\ \(\xi_i^{(b)}\sim N(0,1)\) independently of the data and compute \(M^{(b)}=\max_{g\in\mathcal G_n^{\mathrm{inf}}}\bigl|n^{-1/2}\sum_{i=1}^n\xi_i^{(b)}\{\hat\phi_i(g)-\hat\Delta(g)\}/\hat\sigma(g)\bigr|\). Take \(\hat c_{1-\alpha}\) as the empirical \((1-\alpha)\)-quantile of \(M^{(1)},\ldots,M^{(B)}\), which approximates the conditional quantile as \(B\to\infty\), or use the finite-draw rule of \hyperref[app:mc_quantiles]{Appendix~\ref*{app:mc_quantiles}}. Report \(\hat\Delta(g)\pm\hat c_{1-\alpha}\hat\sigma(g)/\sqrt n\) for \(g\in\mathcal G_n^{\mathrm{inf}}\), and reject the global null when \(\max_{g\in\mathcal G_n^{\mathrm{inf}}}\sqrt n|\hat\Delta(g)|/\hat\sigma(g)>\hat c_{1-\alpha}\).
\STATE \textbf{Output:} Point estimates $\hat{\Delta}(g)$ on $\mathcal{G}_n^{\mathrm{est}}$ for visualization and diagnostics; standard errors, simultaneous bands, bootstrap cutoffs, and the formal max-statistic $T_n^{\max}=\max_{g\in\mathcal{G}_n^{\mathrm{inf}}}\sqrt n|\hat{\Delta}(g)|/\hat\sigma(g)$ on $\mathcal{G}_n^{\mathrm{inf}}$.
\end{algorithmic}
\end{algorithm}

For a fixed set of anchors, \hyperref[alg:complete_radius]{Algorithm~\ref*{alg:complete_radius}} gives the corresponding procedure over all radii (\hyperref[app:complete_radius]{Appendix~\ref*{app:complete_radius}}).

\begin{algorithm}[tbp]
\caption{Inference over All Radii at Fixed Anchors}
\label{alg:complete_radius}
\begin{algorithmic}[1]
\STATE Freeze the metric, the anchors \(\omega_1,\ldots,\omega_K\), and the auxiliary information. Choose a training regime from \hyperref[ass:complete_learning]{Assumption~\ref*{ass:complete_learning}} and fixed \(s_0,c_{\min}>0\).
\STATE Split the main sample into folds independently of the data; use balanced folds in the parametric-propensity regime.
\STATE Throughout, if a required fit fails without a valid fallback, or a required computation returns a nonfinite value, report the full parameter ranges and stop before using that result.
\FOR{each fold}
\STATE On the training complement only, fit the conditional CDFs as finite-support weighted empirical CDFs whose weights do not depend on the radius, using the same form for any fallback. Use the known propensity score in regime~(i) and fit it as specified in \hyperref[ass:complete_learning]{Assumption~\ref*{ass:complete_learning}} in regimes~(ii) and~(iii). Evaluate the arm-specific AIPW paths and their contrasts on the validation fold.
\ENDFOR
\STATE Average the off-fold scores to obtain \(\hat\theta_h\), and set \(\hat U_i(h)=\hat\phi_i(h)-\hat\theta_h\).
\STATE In the parametric-propensity regime, compute \(\hat S_i\), \(\tilde S_i\), \(\hat I\), and \(\hat C_h\) as in \hyperref[app:complete_radius]{Appendix~\ref*{app:complete_radius}}; if \(\lambda_{\min}(\hat I)<a_n\), report the full parameter ranges and stop, and otherwise set the calibration rows to \(\hat f_i(h)=\hat U_i(h)-\hat C_h\hat I^{-1}\tilde S_i\). In the other regimes set \(\hat f_i(h)=\hat U_i(h)\).
\STATE Compute \(\hat s_h=\max\{(n^{-1}\sum_{i=1}^n\hat f_i(h)^2)^{1/2},s_0\}\) from the calibration rows.
\STATE For each multiplier draw, use one Gaussian vector shared across arms, anchors, and radii, and compute the supremum exactly over all jump points, ties, fallback supports, and endpoint states.
\STATE Use the conditional quantile, or the finite-draw rule of \hyperref[app:mc_quantiles]{Appendix~\ref*{app:mc_quantiles}}, and set the cutoff to \(\max(\hat c_{1-\alpha},c_{\min})\).
\STATE Report \(\hat\theta_h\pm\max(\hat c_{1-\alpha},c_{\min})\hat s_h/\sqrt n\), intersected with \([0,1]\) for arms and \([-1,1]\) for contrasts.
\end{algorithmic}
\end{algorithm}

Monotonicity matters for the all-radii procedure of \hyperref[app:complete_radius]{Appendix~\ref*{app:complete_radius}}, which requires proper fitted CDFs; the finite-grid theory of \hyperref[app:regularity_proofs]{Appendix~\ref*{app:regularity_proofs}} does not use it. Threshold-wise binary learners fitted separately across radii therefore remain compatible with the finite-grid procedure, subject to its rate and correlation conditions, but need not produce the proper CDFs required for all radii.

\subsection{Diagnostics for Nuisance Estimation}\label{subsec:nuisance_diag}

Our estimator depends on nuisance components $\pi(x)$ and $m_a(g;x)$ that are scalar conditional probabilities, so standard out-of-fold diagnostics are natural.

\paragraph{Overlap.}
Let $\tilde\pi(X)$ denote the fitted propensity score before clipping. We report
\[
\min_i \tilde\pi(X_i),
\qquad
\max_i \tilde\pi(X_i),
\qquad
\frac1n\sum_{i=1}^n
\mathds{1}\{\tilde\pi(X_i)\notin[\epsilon_{\rm clip},1-\epsilon_{\rm clip}]\},
\]
the last being the fraction of units whose propensity is clipped. A substantial clipped fraction indicates weak overlap.

\paragraph{Calibration for the profile regressions.}
Since $\hat m_a(g;X)$ predicts the binary outcome $Z(g)=\mathds{1}\{d(\omega,Y)\le t\}$, calibration can be assessed by Brier scores and calibration curves. A grid-aggregated Brier score is
\begin{equation}\label{eq:int_brier}
\mathrm{IBS}_a
:=
\sum_{g\in\mathcal G_n^{\mathrm{est}}}
\tilde w_g\,
\mathbb P_n^{(a)}
\Big(\hat m_a(g;X)-Z(g)\Big)^2,
\end{equation}
where $\{\tilde w_g\}$ are deterministic aggregation weights over $\mathcal G_n^{\mathrm{est}}$ and $\mathbb P_n^{(a)}$ denotes the held-out empirical average in arm $a$.

\paragraph{Structural validity of the conditional CDF learner.}
For each anchor, we also monitor whether the fitted nuisance path is a valid CDF in $t$. This property is built into both the DRF weighted ECDF learner and the anchor-specific kernel ECDF learner, but it can fail for threshold-wise binary learners. This diagnostic therefore directly supports the algorithmic choice made in \hyperref[sec:nuisance_est]{Appendix~\ref*{sec:nuisance_est}}.

\subsection{Strategy Naming and Design Rules}
\label{subsec:design_rules}

\paragraph{Empirical anchor-selection rules.}
The results of \hyperref[app:geometry_anchor_theory]{Appendix~\ref*{app:geometry_anchor_theory}} suggest the following practical protocol; \hyperref[tab:anchor_strategy_catalog]{Table~\ref*{tab:anchor_strategy_catalog}} names the strategies used in the experiments. Below, \(\mathcal D_n^{\rm des}\) is the anchor-design sample, \(\mathcal W_K\) is a set of \(K\) selected anchors, \(\mathcal G_K\) is the grid formed by these anchors and the shared radii, \(\widehat\Delta_{\rm pilot}(g)\) and \(\widehat\sigma_{\rm pilot}(g)\) are the profile estimate and the score standard deviation computed on the pilot sample, and \(\widehat c_{K,1-\alpha}\) is a critical value for the maximum over \(\mathcal G_K\).

\begin{enumerate}
\item Use an independent design or pilot sample. Never choose anchors
using the same main sample that is used for the final doubly robust
score average.

\item Start with a geometry-coverage design. A medoid-start MaxMin rule
is the default because it reduces the empirical covering radius
\[
    \widehat\rho_K=\max_{y\in\mathcal D_n^{\rm des}}
    \min_{\omega\in \mathcal W_K}d(y,\omega).
\]

\item Use one radius vector shared by all anchors: quantiles of the
pooled auxiliary distances to the selected anchors, typically at levels
between 0.05 and 0.95. This gives interpretable profile levels
and avoids extreme tails where the profile indicators are nearly
constant.

\item Apply marginal pilot trimming:
\[
    \widehat p_{\rm pilot}(\omega,t)\in[c_p,1-c_p].
\]
The rule uses only marginal ball probabilities, not treatment contrasts.

\item If signal-aware anchors are desired, use only the pilot sample to
estimate the standardized signal \(|\widehat\Delta_{\rm pilot}(g)|/\widehat\sigma_{\rm pilot}(g)\), and combine this
ranking with guardrails: minimum anchor separation, shell quotas, and
the same marginal trimming rule. This yields an adaptive benchmark, not
the formal default design.

\item The grid size can be guided by the standardized power proxy
\[
    \widehat{\mathcal M}_K
    =
    \max_{g\in \mathcal G_K}
    \frac{\sqrt n|\widehat\Delta_{\rm pilot}(g)|}
         {\widehat\sigma_{\rm pilot}(g)}
    -
    \widehat c_{K,1-\alpha}.
\]
This proxy is a working diagnostic rather than a necessary condition
for a power gain: a larger grid can raise power even when the strongest
standardized signal is unchanged (see the example after
\hyperref[prop:max-detection-boundary]{Proposition~\ref*{prop:max-detection-boundary}}).

\item Report anchor diagnostics: covering radius, inferential grid size,
trimmed fraction, minimum estimated score standard deviation, bootstrap
critical value, and sensitivity across anchor seeds.
\end{enumerate}

\section{Interpreting Simultaneous Profile Bands}
\label{app:effect_outputs}

\subsection{Signs, Shells, and Largest Effects}

Let \([L_k(t),U_k(t)]\) be a simultaneous band for \(\Delta_k(t)=\Delta(\omega_k,t)\), either on a finite grid or over all radii at fixed anchors. For a set of radii \(T\), define
\[
\underline\kappa_k(T)=\sup_{t\in T}\max\{L_k(t),-U_k(t),0\},
\qquad
\overline\kappa_k(T)=\sup_{t\in T}\max\{|L_k(t)|,|U_k(t)|\}.
\]
On the event that the band covers every coordinate, \(\underline\kappa_k(T)\le\kappa_k(T)\le\overline\kappa_k(T)\) for \(\kappa_k(T)=\sup_{t\in T}|\Delta_k(t)|\), simultaneously for all anchors and all sets \(T\) of covered radii. Thus \(\overline\kappa_k(T)\le\tau\) establishes equivalence at the probability tolerance \(\tau\) on that set, whereas failure to reject zero is not an equivalence statement. For the shell \(\{y:s<d(\omega_k,y)\le t\}\) with \(s<t\), the treatment effect on its probability is \(\Delta_k(t)-\Delta_k(s)\), which lies in \([L_k(t)-U_k(s),\,U_k(t)-L_k(s)]\) on the same event. These are projections of one confidence set rather than separate coverage statements; scalar distributional-effect methods use the same interval-probability operation~\citep{byambadalai2024distributional}.

\subsection{What Finitely Many Profile Coordinates Cannot Recover}

\begin{proposition}[A radius grid can miss a mean-matched effect]
\label{prop:radius_blindness}
For any finite radius grid there is a one-anchor alternative for which every grid effect and the mean-distance effect are zero, but \(\sup_{t\ge0}|\Delta(t)|=1/2\).
\end{proposition}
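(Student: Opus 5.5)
The plan is to build the alternative directly on the real line, using the distance $D=d(\omega,Y(a))$ as the only relevant quantity. Since $\Delta(t)=F_1(t)-F_0(t)$ depends only on the laws of $D$ under the two arms, it suffices to produce two laws $\mu_0,\mu_1$ on $[0,\infty)$ with the following properties:
\begin{itemize}
\item their CDFs agree at every grid radius;
\item their means agree;
\item their CDFs differ by $1/2$ somewhere.
\end{itemize}
These can then be realized as potential-outcome distance laws. Take any metric space containing a geodesic ray or a segment from $\omega$, for instance $\Omega=\mathbb R$ with $\omega=0$, or any space with objects at prescribed distances from $\omega$. Choose no confounding: $X$ is independent of everything and $\pi\equiv1/2$. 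Then Assumption~\ref{ass:id} holds trivially, and the grid effects equal $F_1(t_\ell)-F_0(t_\ell)$.

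For the construction, let the finite grid be $t_1<\dots<t_L$ and pick an open gap $(t_\ell,t_{\ell+1})$, or $(t_L,\infty)$, or $[0,t_1)$ if that is nonempty; at least one gap of positive length exists. Inside it choose points $c-h<c<c+h$. Put
\[
\mu_0=\tfrac12\delta_{c-h}+\tfrac12\delta_{c+h},\qquad \mu_1=\delta_c .
\]
Both laws place all their mass inside the gap, so $F_0(t_\ell)=F_1(t_\ell)$ for every grid radius: both are $0$ below the gap and $1$ above it. Both means equal $c$, so the mean-distance effect $\mathbb E\,d(\omega,Y(1))-\mathbb E\,d(\omega,Y(0))$ is zero. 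For $t\in[c-h,c)$ we have $F_1(t)=0$ and $F_0(t)=1/2$, so $|\Delta(t)|=1/2$. The supremum cannot exceed $1/2$, because $F_1-F_0$ only takes values in $\{-1/2,0,1/2\}$.

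If the gap lies below $t_1$, we need $c-h\ge0$; this is handled by choosing the points inside $(0,t_1)$. Realizing the distances requires objects $y$ with $d(\omega,y)=c\pm h$ and $c$. In $\mathbb R$ these are simply the points $c\pm h$ and $c$, with $\omega=0$.

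The main obstacle is interpretive rather than technical. Phrases like "one-anchor alternative" and "for any finite radius grid" must be read as existence within some metric space, rather than within an arbitrary given $\Omega$, since a discrete space may not offer the needed distances. I would state the construction in $\mathbb R$, or in any space containing an isometric copy of an interval from $\omega$, and note this explicitly.
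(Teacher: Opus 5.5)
Your proposal is correct and matches the paper's proof: both use nonnegative real outcomes with anchor zero, a two-point law at $c\pm h$ for $Y(0)$ against a point mass at $c$ for $Y(1)$, with $[c-h,c+h]$ placed in a gap of the grid, so that every grid contrast vanishes, both mean distances equal $c$, and $|\Delta|=1/2$ on $[c-h,c)$. Your extra details (an unconfounded assignment with $\pi\equiv1/2$ to meet the identification assumptions, and the care taken when the gap lies below $t_1$) are harmless additions that the paper leaves implicit.
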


\begin{proof}
Use nonnegative real outcomes with anchor zero, and choose \(0<h<c\) such that \([c-h,c+h]\) contains no grid radius. Let \(Y(0)\) put mass \(1/2\) at \(c-h\) and at \(c+h\), and let \(Y(1)=c\). Both mean distances equal \(c\). The profile contrast is \(-1/2\) on \([c-h,c)\), \(1/2\) on \([c,c+h)\), and zero elsewhere, so it vanishes at every grid radius.
\end{proof}

The example concerns radius resolution only. A scalar distributional-effect procedure over all thresholds of the same distance can detect this effect.

\begin{proposition}[Complete profiles at finitely many anchors]
\label{prop:anchor_blindness}
There exist two distinct laws on a four-point subset of \(\mathbb R^2\) and two anchors such that the map from a support point to its pair of anchor distances is injective, the two laws have identical complete profiles at both anchors, and a ball with another center has probability difference one.
\end{proposition}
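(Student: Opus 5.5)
The plan is an explicit construction with two equal-weight two-point laws. Because each law puts mass \(1/2\) on each of two points, the complete profile of a law at an anchor is determined by the multiset of distances from that anchor to its two support points. So it suffices to find four points \(p_1,p_2,q_1,q_2\) and anchors \(a,b\) with three properties. First, the distances to \(a\) form the same multiset for \(\{p_1,p_2\}\) as for \(\{q_1,q_2\}\). Second, the same holds for distances to \(b\). Third, the pairing between \(a\)-distances and \(b\)-distances is swapped between the two laws. Concretely, choose \(r_1\ne r_2\) and \(s_1\ne s_2\), and require
\[
\bigl(d(a,p_1),d(b,p_1)\bigr)=(r_1,s_1),\quad \bigl(d(a,p_2),d(b,p_2)\bigr)=(r_2,s_2),
\]
\[
\bigl(d(a,q_1),d(b,q_1)\bigr)=(r_1,s_2),\quad \bigl(d(a,q_2),d(b,q_2)\bigr)=(r_2,s_1).
\]
The four distance pairs are then distinct, which is exactly the required injectivity. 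Let \(P\) put mass \(1/2\) on each of \(p_1,p_2\), and let \(Q\) put mass \(1/2\) on each of \(q_1,q_2\). For every \(t\), the profile at \(a\) under either law is \(\tfrac12\mathds 1\{r_1\le t\}+\tfrac12\mathds 1\{r_2\le t\}\), and similarly at \(b\) with \(s_1,s_2\). Hence the complete profiles agree at both anchors.

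To realize the pairs in \(\mathbb R^2\), I take \(a=(0,0)\), \(b=(2,0)\), \(r_1=s_1=3/2\) and \(r_2=s_2=2\). Each pair \((r,s)\) satisfies \(|r-s|\le2\le r+s\), so there is a point with those distances. Choosing it in the upper half-plane makes it unique, with coordinates
\[
x=\frac{r^2-s^2+4}{4},\qquad y=\sqrt{r^2-x^2}.
\]
This gives \(p_1=(1,\sqrt5/2)\) and \(p_2=(1,\sqrt3)\). It also gives \(q_1=(9/16,y_q)\) and \(q_2=(23/16,y_q)\) with \(y_q=\sqrt{9/4-81/256}\approx1.391\). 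The four points are distinct.

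The separating ball is the step that needs checking rather than bookkeeping. Take its center \(c\) to be the midpoint of \(p_1\) and \(p_2\), namely \(c=(1,(\sqrt5/2+\sqrt3)/2)\approx(1,1.425)\), and take radius \(\rho=(\sqrt3-\sqrt5/2)/2\approx0.307\). Both \(p_j\) lie on the boundary of \(B(c,\rho)\). Each \(q_j\) is at horizontal distance \(7/16=0.4375>\rho\) from \(c\), so both \(q_j\) lie outside. The ball therefore has probability \(1\) under \(P\) and \(0\) under \(Q\), a difference of one. The only remaining verification is the numerical comparison \(7/16>(\sqrt3-\sqrt5/2)/2\), which holds because \(\sqrt3-\sqrt5/2<0.62<7/8\).
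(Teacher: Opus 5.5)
Your construction is correct and is essentially the paper's argument. Both take equal-weight laws on two ``diagonal'' points with anchor-distance pairs \((r_1,s_1),(r_2,s_2)\) versus two ``swapped'' points with pairs \((r_1,s_2),(r_2,s_1)\), and then a ball separating the two supports (the paper uses anchors \((0,0)\) and \((1,0)\), radii \(1\) and \(\sqrt2\), and the ball of radius \(1/3\) about \((1/2,1)\)); your ball puts \(p_1,p_2\) exactly on its boundary, which is valid under the paper's closed-ball convention \(\mathds 1\{d(\omega,Y)\le t\}\), though any radius in \((\rho,7/16)\) would place them strictly inside.
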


\begin{proof}
Take anchors \((0,0)\) and \((1,0)\), and let \(r_1=1\) and \(r_2=\sqrt2\). For \(i,j\in\{1,2\}\) put
\[
x_{ij}=(r_i^2-r_j^2+1)/2,
\qquad
y_{ij}=\bigl(x_{ij},\sqrt{r_i^2-x_{ij}^2}\bigr).
\]
The four points \(y_{11}=(1/2,\sqrt3/2)\), \(y_{22}=(1/2,\sqrt7/2)\), \(y_{12}=(0,1)\), and \(y_{21}=(1,1)\) are distinct, and the distances of \(y_{ij}\) to the two anchors are \((r_i,r_j)\), so the distance-pair map is injective. Let \(P_0=(\delta_{y_{11}}+\delta_{y_{22}})/2\) and \(P_1=(\delta_{y_{12}}+\delta_{y_{21}})/2\). Under either law, the distance to each anchor equals \(r_1\) or \(r_2\) with probability \(1/2\), so the complete profiles agree at both anchors. The ball with center \((1/2,1)\) and radius \(1/3\) contains \(y_{11}\) and \(y_{22}\) but neither \(y_{12}\) nor \(y_{21}\), so its probability differs by one between the two laws.
\end{proof}

What the two anchors miss is the joint law of the distance vector, not the identity of individual points. The choice of center matters in a similar way: if \(Y(0)=-1\) and \(Y(1)=1\) on the real line, the complete profile contrast at anchor zero vanishes, whereas for any center \(x\in(0,1)\) the contrast at radius one equals one.

\subsection{Ball Features and Mean Distances}
\label{app:ball_kernel}

\begin{proposition}[Ball features as a kernel mean embedding]
\label{prop:ball_kernel}
For a finite nonnegative measure \(\mu\) on anchor--radius pairs, the kernel
\[
k_\mu(y,y')=\int\mathds{1}\{d(\omega,y)\le t\}\,\mathds{1}\{d(\omega,y')\le t\}\,d\mu(\omega,t)
\]
is positive semidefinite, and its squared mean-embedding distance between the laws of \(Y(1)\) and \(Y(0)\) equals \(\int\Delta(\omega,t)^2\,d\mu(\omega,t)\).
\end{proposition}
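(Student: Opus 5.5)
Positive semidefiniteness is the easy half. The plan is to write the kernel as an integral of rank-one kernels, $k_\mu(y,y')=\int \varphi_{g}(y)\varphi_{g}(y')\,d\mu(g)$ with $\varphi_g(y)=\mathds{1}\{d(\omega,y)\le t\}$ and $g=(\omega,t)$. For points $y_1,\ldots,y_m$ and real $c_1,\ldots,c_m$,
\[
\sum_{i,l}c_ic_l k_\mu(y_i,y_l)=\int\Bigl(\sum_i c_i\varphi_g(y_i)\Bigr)^2 d\mu(g)\ge0 .
\]
Interchanging the finite sum and the integral is legitimate because each integrand is bounded by $(\sum_i|c_i|)^2$ and $\mu$ is finite. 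Measurability of $(\omega,t,y)\mapsto\varphi_g(y)$ needs a short check. The set $\{d(\omega,y)\le t\}$ is closed in $\Omega\times\mathbb R_+\times\Omega$, since $d$ is continuous, so it is Borel. Separability of $\Omega$ makes the Borel $\sigma$-field of $\Omega\times\Omega$ equal to the product $\sigma$-field, so Tonelli applies.

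For the embedding distance, I will use the standard identity for a bounded measurable kernel,
\[
\|\mu_{P_1}-\mu_{P_0}\|_{\mathcal H}^2
=\E k(Y,Y')-2\E k(Y,\tilde Y)+\E k(\tilde Y,\tilde Y').
\]
Here $Y,Y'$ are independent draws from the law of $Y(1)$, and $\tilde Y,\tilde Y'$ are independent draws from the law of $Y(0)$. The three groups of variables are independent of one another, so the potential-outcome coupling is irrelevant. Since $0\le k_\mu\le\mu(\mathcal G)$, the embeddings exist as Bochner integrals and the identity holds.

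Substituting the integral form and using Fubini (again with bounded integrands and finite $\mu$), each term factorizes by independence. For example,
\[
\E k_\mu(Y,Y')=\int \Pr\{d(\omega,Y(1))\le t\}^2\,d\mu,
\]
and the other two terms are analogous. Writing $F_a(g)=\Pr\{d(\omega,Y(a))\le t\}$, the sum is
\[
\int\{F_1(g)^2-2F_1(g)F_0(g)+F_0(g)^2\}\,d\mu=\int\Delta(g)^2\,d\mu(g),
\]
which is the claim.

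I expect no real obstacle. The only care needed is the measurability and Fubini justification for integrating over anchors in a general separable metric space, which closedness of the ball-incidence set handles. One could alternatively identify the RKHS with a subspace of $L^2(\mu)$ through the feature map $y\mapsto\varphi_\cdot(y)$, but the moment identity above avoids having to describe $\mathcal H$ explicitly.
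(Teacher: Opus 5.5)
Your proof is correct and is essentially the paper's: the positive-semidefiniteness step (the quadratic form as $\mu$-integral of a square) is identical, and expanding the squared embedding distance into three kernel expectations that factor by Fubini and independence is just the expanded form of the paper's one-line observation that the mean feature maps in $L^2(\mu)$ differ by $\Delta$. Your measurability check is more careful than the paper's. The one loose phrase is that boundedness alone makes the embeddings Bochner integrals: this also needs strong measurability, which separability of $L^2(\mu)$ supplies here, and in any case the moment identity holds with the weak (Riesz) definition of the embedding.
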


\begin{proof}
The map \(y\mapsto(\mathds{1}\{d(\omega,y)\le t\})_{(\omega,t)}\) takes values in \(L^2(\mu)\) and is bounded. For real coefficients \(c_i\), the quadratic form \(\sum_{i,j}c_ic_jk_\mu(y_i,y_j)\) equals \(\int[\sum_ic_i\mathds{1}\{d(\omega,y_i)\le t\}]^2\,d\mu\ge0\). The difference of the two mean feature maps is \(\Delta\), whose squared \(L^2(\mu)\) norm is the stated distance.
\end{proof}

Positive semidefiniteness of this kernel does not require strong negative type. For finitely many ball features \(b(y)\in\{0,1\}^p\) and a positive diagonal weight matrix \(W\), so that \(k(y,y')=b(y)^{\mathsf T}Wb(y')\), the kernel witness evaluated at locations \(v_1,\ldots,v_q\) equals \(B_VW\Delta\), where \(\Delta\in\mathbb R^p\) collects the profile contrasts and the \(\ell\)th row of \(B_V\) is \(b(v_\ell)^{\mathsf T}\). The witness vector is therefore a linear transformation of the profile vector, and the two null hypotheses agree when \(B_VW\) is injective on the attainable profile contrasts.

The profile curve at one anchor also determines a truncated mean-distance effect. For every finite \(R>0\),
\[
\int_0^R\Delta(\omega,t)\,dt
=
E\min\{d(\omega,Y(0)),R\}-E\min\{d(\omega,Y(1)),R\},
\]
because \(E\min\{D,R\}=\int_0^R\Pr(D>t)\,dt\) for a nonnegative random variable \(D\); atoms are allowed. Letting \(R\to\infty\) requires finite first moments of the distances. This scalar summary can be estimated as a bounded average treatment effect and may suffice for some scientific questions; \hyperref[prop:radius_blindness]{Proposition~\ref*{prop:radius_blindness}} shows that it can vanish while the radius curve does not.

\section{Additional Geometry and Anchor-Design Theory}
\label{app:geometry_anchor_theory}

This appendix collects the geometry used to interpret anchor profiles and the design theory used to select finite inferential grids. These results support the population-level profile discussion and the auxiliary-selected design rules; the formal simultaneous inference is given in Appendices~\ref{app:regularity_proofs} and~\ref{app:complete_radius}.

\begin{definition}[Strong negative type \citep{lyons2013distance}]\label{def:snt}
A metric space \((\Omega,d)\) has negative type if, for every finite collection \(x_1,\ldots,x_m\in\Omega\) and real coefficients \(a_1,\ldots,a_m\) with \(\sum_{i=1}^m a_i=0\),
\[
\sum_{i=1}^m\sum_{j=1}^m a_i a_j d(x_i,x_j)\le 0.
\]
It has strong negative type if it has negative type and, for any Borel probability measures \(P\) and \(Q\) with finite first distance moments,
\[
2\E d(X,Y)-\E d(X,X')-\E d(Y,Y')=0
\quad\Longrightarrow\quad
P=Q,
\]
where \(X,X'\stackrel{i.i.d.}{\sim}P\), \(Y,Y'\stackrel{i.i.d.}{\sim}Q\), and the two pairs are independent.
\end{definition}

\subsection{Anchor Selection: Theoretical Standards and Empirical Rules}

The profile causal effect map is indexed by anchor--radius pairs
\(g=(\omega,t)\). Hence, the choice of anchors is not a purely
computational detail: it determines which geometric aspects of the
object-valued causal effect are visible, how stable the standardized
scores are, and how much multiplicity is paid by the max statistic.
This section formalizes anchor selection as a design problem.

Let \(\mathcal F_n^{\rm des}\subseteq\mathcal F_n^{\mathrm{aux}}\)
denote the auxiliary design sigma-field used to construct anchors,
radii, and the trimming mask. In sample-based designs, \(\mathcal F_n^{\rm des}\) may be
generated by an auxiliary design sample \(\mathcal D_n^{\rm des}\),
independent of the main analysis sample \(\mathcal D_n^{\rm main}\).
After conditioning on \(\mathcal F_n^{\rm des}\), the selected grid is
fixed and independent of the main score-evaluation sample. An
anchor/radius design rule is an \(\mathcal F_n^{\rm des}\)-measurable
selection of a candidate grid
\[
    \mathcal G_n^{\rm cand}\subset \mathcal W\times\mathcal T .
\]
The inferential grid
\(\mathcal G_n^{\rm inf}\subseteq \mathcal G_n^{\rm cand}\) is obtained by the trimming rule of \hyperref[sec:defs]{Section~\ref*{sec:defs}}, whose inputs are \(\mathcal F_n^{\rm des}\)-measurable. The key inferential requirement is not
that the grid be deterministic unconditionally, but that it be
independent of the main analysis sample and fixed after conditioning on
\(\mathcal F_n^{\rm des}\).

\paragraph{Design principles.}
A valid and useful anchor design should satisfy five principles.

\begin{enumerate}
\item \textbf{Design exogeneity.} Anchors and radii must be fixed
before the main-sample doubly robust scores are evaluated. They may be
deterministic or pilot-measurable, but not selected using the same
outcomes that enter the final score averages.

\item \textbf{Geometric coverage.} The anchor set should cover the
working anchor domain \(\mathcal W\) at a resolution fine enough to approximate
the continuous profile map when such an approximation is intended.

\item \textbf{Score nondegeneracy.} Points of the inferential grid should avoid
near-deterministic indicators. Operationally, this motivates trimming
profile cells with marginal pilot probabilities close to 0 or 1.

\item \textbf{Standardized signal relevance.} For testing, the relevant
quantity is not the raw effect \(|\Delta(g)|\), but the standardized
effect \(|\Delta(g)|/\sigma(g)\), where \(\sigma^2(g)\) is the efficient
score variance.

\item \textbf{Multiplicity budget.} Adding anchors can improve signal
coverage, but the critical value of the max statistic can only increase.
Anchor selection therefore involves a power--stability tradeoff rather
than a universally optimal densification rule.
\end{enumerate}

\begin{proposition}[Design-conditional transfer of fixed-grid validity]
\label{prop:design-valid-anchors}
Suppose \(\mathcal G_n^{\rm inf}\) is measurable with respect to an auxiliary
design sigma-field
\(\mathcal F_n^{\rm des}\subseteq\mathcal F_n^{\mathrm{aux}}\),
independent of the main analysis sample, and that the dimension and nondegeneracy
conditions
\[
    \log^7(|\mathcal G_n^{\rm cand}|n)=o(n),
    \qquad
    \inf_{g\in \mathcal G_n^{\rm inf}}\sigma(g)\ge c_\sigma>0
\]
hold, the second with probability tending to one. Assume the measurable-enumeration and
high-probability event conventions of Assumptions~\ref{ass:meas_design}
and~\ref{ass:grid_det}. Assume the conditions of \hyperref[thm:clt]{Theorem~\ref*{thm:clt}} and
\hyperref[cor:scb]{Corollary~\ref*{cor:scb}} hold uniformly over \(\mathcal G_n^{\rm inf}\). Then,
conditionally on \(\mathcal F_n^{\rm des}\), the cross-fitted profile
estimator admits the same uniform linearization, Gaussian approximation,
multiplier-bootstrap validity, simultaneous confidence bands, and
max-test level control as in \hyperref[thm:clt]{Theorem~\ref*{thm:clt}},
\hyperref[cor:scb]{Corollary~\ref*{cor:scb}}, and \hyperref[thm:global_test]{Theorem~\ref*{thm:global_test}}.
\end{proposition}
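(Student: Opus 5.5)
The plan is to reduce Proposition~\ref{prop:design-valid-anchors} to the already-proved results by observing that nothing in their proofs uses how the grid was chosen. Only three things matter: that the grid is measurable with respect to a sigma-field independent of the main sample, that it is finite, and that it is nondegenerate. First I would set \(\mathcal F_n^{\mathrm{aux}}\) in Assumptions~\ref{ass:grid_det} and~\ref{ass:meas_design} to be any sigma-field containing \(\mathcal F_n^{\rm des}\) and still independent of the main sample; \(\mathcal F_n^{\mathrm{aux}}\) itself works. By hypothesis \(\mathcal G_n^{\rm inf}\) is \(\mathcal F_n^{\rm des}\)-measurable, hence \(\mathcal F_n^{\mathrm{aux}}\)-measurable. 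Independence of the design from the main sample then lets Lemma~\ref{lem:pilot_fixed_vector} apply verbatim: conditionally on \(\mathcal F_n^{\mathrm{aux}}\), the grid is a fixed finite index set, and the main-sample observations stay i.i.d.\ with their original law.

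Next I would verify the hypotheses of Theorem~\ref{thm:clt} and Corollary~\ref{cor:scb} on an event \(\mathcal A_n\). Take \(\mathcal A_n\) to be the event that \(\mathcal G_n^{\rm inf}\) is nonempty and \(\inf_g\sigma(g)\ge c_\sigma\). Both conditions are functions of the design and of the fixed law, so \(\mathcal A_n\) is \(\mathcal F_n^{\rm des}\)-measurable, and \(\Pr(\mathcal A_n)\to1\) by assumption. The dimension condition \(\log^7(|\mathcal G_n^{\rm cand}|n)=o(n)\) is assumed directly. The remaining assumptions (\ref{ass:id}, \ref{ass:uniform_remainder}, \ref{ass:gauss_boot}) are granted "uniformly over \(\mathcal G_n^{\rm inf}\)" by hypothesis, so Assumption~\ref{ass:grid_det} holds as stated.

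With these hypotheses in place, I would rerun the chain in its existing order:
\begin{enumerate}
\item Lemma~\ref{lem:finite_grid_orthogonal_remainder} gives the uniform linearization. Its conditioning sigma-field \(\mathcal S_j\) already includes \(\mathcal F_n^{\mathrm{aux}}\supseteq\mathcal F_n^{\rm des}\).
\item Theorem~2.2 of \citet{chernozhukov2013gaussian}, applied to the signed augmented vector, gives the Gaussian approximation for \(T_n^{\max,\circ}\).
\item Lemma~\ref{lem:max_perturbation_transfer} and equation~\eqref{eq:correct_scale_rate} carry the approximation over to the feasible studentized maximum.
\item Lemmas~\ref{lem:multiplier_max_bootstrap} and~\ref{lem:bootstrap_quantile_transfer} give bootstrap validity and quantile calibration.
\item The bracketing argument yields band coverage and, under the null, max-test size.
\end{enumerate}
All of these statements are conditional on \(\mathcal F_n^{\mathrm{aux}}\). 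To get statements conditional on \(\mathcal F_n^{\rm des}\), I would use the tower property. Each conclusion has the form "a conditional probability error tends to zero in probability". A bounded conditional error \(W_n\) with \(W_n\to_P0\) satisfies \(E[W_n\mid\mathcal F_n^{\rm des}]\to_P0\), since \(E\,E[W_n\mid\mathcal F_n^{\rm des}]=EW_n\to0\) by bounded convergence. So I can project from \(\mathcal F_n^{\mathrm{aux}}\) down to \(\mathcal F_n^{\rm des}\). The simplest route is to take \(\mathcal F_n^{\mathrm{aux}}=\mathcal F_n^{\rm des}\) whenever the pilot trimming inputs are design-measurable, which is what the proposition's hypothesis says.

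I expect the main obstacle to be the anti-concentration constant and the Kolmogorov-distance conclusions, which must hold with constants that do not depend on the random design. Here the bound \eqref{eq:anticoncentration} is safe because it uses only unit variances and \(p_n\le N_n^{\rm cand}\). The constant \(C_\psi\) in the Gaussian approximation depends only on \(\epsilon\) and \(c_\sigma\), so it is uniform on \(\mathcal A_n\). The care needed is in phrasing "\(\to_P\)" under the joint law while working on \(\mathcal A_n\). I would handle this by multiplying each conditional error by \(\mathds 1_{\mathcal A_n}\), which is design-measurable, and noting that \(\Pr(\mathcal A_n^c)\to0\).
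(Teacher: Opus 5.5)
Your proposal is correct and takes essentially the paper's route. You condition on the design sigma-field, so the grid becomes a fixed finite index set independent of the i.i.d.\ main sample, and then you rerun the linearization, Gaussian-approximation, studentization, bootstrap, and bracketing chain of Theorem~\ref{thm:clt}, Corollary~\ref{cor:scb}, and Theorem~\ref{thm:global_test}. Your tower-property projection (or the choice \(\mathcal F_n^{\mathrm{aux}}=\mathcal F_n^{\rm des}\)) and the design-measurable event \(\mathcal A_n\) spell out what the paper asserts in one line when it conditions on \(\mathcal F_n^{\rm des}\) directly and averages over the high-probability event.
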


\begin{proof}
Condition on \(\mathcal F_n^{\rm des}\). After conditioning,
\(\mathcal G_n^{\rm inf}\) is fixed and independent of the main sample. The main
sample remains i.i.d., and the cross-fitting folds are formed only
inside the main sample. The proof of \hyperref[thm:clt]{Theorem~\ref*{thm:clt}} therefore applies
without modification to the fixed finite vector
\(\{\psi(O;g):g\in \mathcal G_n^{\rm inf}\}\). The assumed nuisance rates imply
that the second-order bias and plug-in empirical-process remainders are
uniformly negligible over the selected grid. The dimension condition and
the bounded standardized scores yield the high-dimensional Gaussian
approximation for the maximum of the standardized score vector. The
studentization bound \eqref{eq:correct_scale_rate}, the correlation condition of \hyperref[ass:gauss_boot]{Assumption~\ref*{ass:gauss_boot}}, and Gaussian anti-concentration \eqref{eq:anticoncentration} transfer this
approximation to the studentized estimator and to the multiplier
bootstrap quantile. Since the conditional coverage and level statements
hold on an event whose auxiliary-design probability tends to one,
unconditional validity follows by averaging over
\(\mathcal F_n^{\rm des}\).
\end{proof}

\hyperref[prop:design-valid-anchors]{Proposition~\ref*{prop:design-valid-anchors}} is a transfer principle for
auxiliary-sample-based anchor construction. An auxiliary or pilot sample may be used to
stabilize the grid or to form a signal-aware benchmark, but the
resulting inference is conditional on the selected grid and is valid only
because that grid is external to the main score evaluation.

We now state a coverage criterion showing why anchor dispersion matters.

\begin{assumption}[Uniform profile anti-concentration]
\label{assump:profile-anticons}
There exists \(C_{\rm prof}<\infty\) such that, for each treatment arm
\(a\in\{0,1\}\), each anchor \(\omega\in\mathcal W\), each \(u\ge0\), and each
\(r>0\),
\[
    \mathbb P\{u<d(\omega,Y(a))\le u+r\}\le C_{\rm prof} r .
\]
\end{assumption}

This assumption is used only for continuous-domain approximation. It is
not needed for pointwise identification or finite-grid inference.

\begin{proposition}[Lipschitz profile effects and grid approximation]
\label{prop:grid-approx-anchor}
Under \hyperref[assump:profile-anticons]{Assumption~\ref*{assump:profile-anticons}}, the profile causal
effect satisfies
\[
    |\Delta(\omega,t)-\Delta(\omega',t')|
    \le
    2C_{\rm prof}\{d(\omega,\omega')+|t-t'|\}.
\]
Consequently, if \(\mathcal W_K\subset \mathcal W\) is a finite
measurable \(\rho_W\)-net of \(\mathcal W\) and
\(\mathcal T_L\subset \mathcal T\) is a finite measurable \(\rho_T\)-net of
\(\mathcal T\), then
\[
    \left|
    \sup_{(\omega,t)\in \mathcal W\times\mathcal T}|\Delta(\omega,t)|
    -
    \max_{(\omega,t)\in \mathcal W_K\times \mathcal T_L}|\Delta(\omega,t)|
    \right|
    \le
    2C_{\rm prof}(\rho_W+\rho_T).
\]
Moreover, if a measurable nearest-neighbor tie-breaking rule induces a
partition of \(\mathcal W\times\mathcal T\) with cells \(\{C_g\}\) and
weights \(q_g=\mu(C_g)\), then
\[
    \left|
    \int_{\mathcal W\times\mathcal T}\Delta(g)^2\,d\mu(g)
    -
    \sum_{g\in \mathcal W_K\times \mathcal T_L}q_g\Delta(g)^2
    \right|
    \le
    4C_{\rm prof}(\rho_W+\rho_T),
\]
where \(\mu\) is any probability measure on
\(\mathcal W\times\mathcal T\).
\end{proposition}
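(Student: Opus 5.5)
The plan is to prove the Lipschitz bound first and then obtain both approximation statements from it by routine net arguments. Write \(F_a(\omega,t)=\Pr\{d(\omega,Y(a))\le t\}\), so that \(\Delta=F_1-F_0\). It is enough to show, for each arm,
\[
|F_a(\omega,t)-F_a(\omega',t')|\le C_{\rm prof}\{d(\omega,\omega')+|t-t'|\},
\]
because the triangle inequality then gives the factor \(2\).

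I would split the increment into a radius change at a fixed anchor and an anchor change at a fixed radius. The radius change is immediate from \hyperref[assump:profile-anticons]{Assumption~\ref*{assump:profile-anticons}}. For \(t<t'\), we have \(F_a(\omega,t')-F_a(\omega,t)=\Pr\{t<d(\omega,Y(a))\le t'\}\le C_{\rm prof}|t-t'|\). The case \(t=t'\) is trivial.

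For the anchor change, set \(\delta=d(\omega,\omega')\). The triangle inequality gives
\[
d(\omega',Y)\le d(\omega,Y)+\delta,
\]
so \(\{d(\omega,Y)\le t-\delta\}\subseteq\{d(\omega',Y)\le t\}\subseteq\{d(\omega,Y)\le t+\delta\}\). Both \(F_a(\omega',t)\) and \(F_a(\omega,t)\) therefore lie between \(F_a(\omega,t-\delta)\) and \(F_a(\omega,t+\delta)\).

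This step is the main obstacle. The sandwich alone yields the constant \(2C_{\rm prof}\delta\) per arm, not \(C_{\rm prof}\delta\), since the window has width \(2\delta\). To recover the sharp constant I would use the one-sided inclusions separately. If \(F_a(\omega',t)\ge F_a(\omega,t)\), then
\[
F_a(\omega',t)-F_a(\omega,t)\le F_a(\omega,t+\delta)-F_a(\omega,t)=\Pr\{t<d(\omega,Y(a))\le t+\delta\}\le C_{\rm prof}\delta .
\]
If instead \(F_a(\omega',t)<F_a(\omega,t)\), then \(F_a(\omega,t)-F_a(\omega',t)\le F_a(\omega,t)-F_a(\omega,t-\delta)\le C_{\rm prof}\delta\), where for \(t<\delta\) the left endpoint is replaced by \(0\) and the same bound holds. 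The assumption is applied with \(u=t-\delta\ge0\) or \(u=t\), so both one-sided bounds have width \(\delta\).

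Combining the two changes gives the per-arm Lipschitz bound, and differencing the arms gives the stated constant \(2C_{\rm prof}\).

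For the sup statement:
\begin{itemize}
\item The net maximum is at most the supremum over \(\mathcal W\times\mathcal T\), since the nets lie inside the domain.
\item Conversely, every \((\omega,t)\) has a net point \((\omega_k,t_l)\) with \(d(\omega,\omega_k)\le\rho_W\) and \(|t-t_l|\le\rho_T\).
\item Hence \(|\Delta(\omega,t)|\le|\Delta(\omega_k,t_l)|+2C_{\rm prof}(\rho_W+\rho_T)\).
\item Taking suprema gives the bound.
\end{itemize}

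For the integral statement, take the measurable nearest-neighbour partition, so that every point of \(C_g\) is within \((\rho_W,\rho_T)\) of \(g\). Since \(|\Delta|\le1\), we have \(|\Delta(h)^2-\Delta(g)^2|=|\Delta(h)-\Delta(g)|\,|\Delta(h)+\Delta(g)|\le 2\cdot 2C_{\rm prof}(\rho_W+\rho_T)\). Integrating over \(C_g\), summing over \(g\), and using \(\sum_g q_g=1\) yields the bound \(4C_{\rm prof}(\rho_W+\rho_T)\). Measurability of \(\Delta\) on the product space follows from its continuity, which the Lipschitz bound supplies.
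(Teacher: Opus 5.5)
Your argument is essentially the paper's. Both use the triangle-inequality inclusion \(\{d(\omega,Y(a))\le t\}\subseteq\{d(\omega',Y(a))\le t+d(\omega,\omega')\}\), a one-sided increment bound from Assumption~\ref{assump:profile-anticons}, and then the same net arguments for the supremum and the integral. The paper does the first part in one step with \(\delta=d(\omega,\omega')+|t-t'|\). It writes \(F_a(\omega,t)-F_a(\omega',t')\le F_a(\omega',t'+\delta)-F_a(\omega',t')\le C_{\rm prof}\delta\) and then interchanges the two points. So it needs neither your split into radius and anchor moves nor your case analysis.

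One step does not go through as written. In your second case with \(t<\delta\), replacing \(F_a(\omega,t-\delta)\) by \(0\) leaves you needing \(\Pr\{d(\omega,Y(a))\le t\}\le C_{\rm prof}\delta\). The assumption only controls \(\Pr\{u<d(\omega,Y(a))\le u+r\}\) with \(u\ge0\) and \(r>0\), so it says nothing about the atom \(\Pr\{Y(a)=\omega\}\) at radius zero. For \(t=0\) the whole quantity is that atom, so "the same bound holds" is not justified by the assumption at \(\omega\).

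The inequality is still true. Apply the assumption at \(\omega'\) on \((\delta-r,\delta]\) and let \(r\downarrow0\) to get \(\Pr\{Y(a)=\omega\}=0\). The simpler repair is the paper's device: move up at the other anchor rather than down at the same one. Since \(\{d(\omega,Y(a))\le t\}\subseteq\{d(\omega',Y(a))\le t+\delta\}\), you get \(F_a(\omega,t)-F_a(\omega',t)\le\Pr\{t<d(\omega',Y(a))\le t+\delta\}\le C_{\rm prof}\delta\) with \(u=t\ge0\). This is just your first case with \(\omega\) and \(\omega'\) swapped. With that change the rest of your proof is fine, including the remark that Lipschitz continuity gives measurability of \(\Delta\).
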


\begin{proof}
Fix \(a\in\{0,1\}\) and write
\(F_{a,\omega}(t)=\mathbb P\{d(\omega,Y(a))\le t\}\). Let
\(\delta=d(\omega,\omega')+|t-t'|\). By the triangle inequality,
\[
    d(\omega',Y(a))\le d(\omega,Y(a))+d(\omega,\omega').
\]
Hence
\[
    F_{a,\omega}(t)
    \le
    F_{a,\omega'}(t+\!d(\omega,\omega'))
    \le
    F_{a,\omega'}(t'+\delta).
\]
Therefore
\[
    F_{a,\omega}(t)-F_{a,\omega'}(t')
    \le
    F_{a,\omega'}(t'+\delta)-F_{a,\omega'}(t')
    \le
    C_{\rm prof}\delta .
\]
Interchanging \((\omega,t)\) and \((\omega',t')\) gives
\[
    |F_{a,\omega}(t)-F_{a,\omega'}(t')|
    \le C_{\rm prof}\delta .
\]
Taking the difference between arms \(a=1\) and \(a=0\) yields
\[
    |\Delta(\omega,t)-\Delta(\omega',t')|
    \le 2C_{\rm prof}\delta .
\]

For the supremum approximation, take any
\(g=(\omega,t)\in \mathcal W\times\mathcal T\)
and choose \(g_K=(\omega_K,t_L)\in \mathcal W_K\times\mathcal T_L\) with
\(d(\omega,\omega_K)\le \rho_W\) and \(|t-t_L|\le \rho_T\). The
Lipschitz bound gives
\[
    |\Delta(g)|\le |\Delta(g_K)|+2C_{\rm prof}(\rho_W+\rho_T).
\]
Taking the supremum over \(g\) proves the upper inequality; the reverse
inequality is immediate because
\(\mathcal W_K\times\mathcal T_L\subset\mathcal W\times\mathcal T\).

For the integrated squared approximation, let \(g^\star\) be the nearest
grid point to \(g\). Since \(|\Delta(g)|\le 1\), we have
\[
    |\Delta(g)^2-\Delta(g^\star)^2|
    \le
    2|\Delta(g)-\Delta(g^\star)|
    \le
    4C_{\rm prof}(\rho_W+\rho_T).
\]
Integrating over the nearest-neighbor cells gives the stated bound.
\end{proof}

Discrete object spaces may violate \hyperref[assump:profile-anticons]{Assumption~\ref*{assump:profile-anticons}}
because distance distributions can have atoms. This does not affect the
finite-grid inference results above; it only limits when the
continuous-domain approximation bound should be invoked. \hyperref[prop:grid-approx-anchor]{Proposition~\ref*{prop:grid-approx-anchor}} concerns the population map; it does not control the empirical or multiplier processes between grid points, so it does not by itself give a band over a continuum of anchors.

\hyperref[prop:grid-approx-anchor]{Proposition~\ref*{prop:grid-approx-anchor}} gives a theoretical standard:
anchors should be selected so that the covering radius of the working
domain is small relative to the effect size one wants to detect. A
MaxMin or farthest-point medoid rule is natural because it directly
targets the covering radius. For exact farthest-first selection on a
finite auxiliary cloud started from a cloud point, such as the medoid, let
\(r_K\) be the covering radius of the cloud by the first \(K\) anchors and
\(r_*\) the smallest covering radius of the cloud by any \(K\) centers.
Adding the farthest remaining point gives \(K+1\) points that are pairwise
at distance at least \(r_K\), because the covering radius does not increase
along the recursion. In an optimal cover two of these points share a ball,
so they are within \(2r_*\) of each other and \(r_K\le2r_*\). The argument
uses the triangle inequality and an initial cloud point; it does not apply
to completions of externally specified scientific anchors.

For power, coverage alone is insufficient. The max test is driven by
standardized signal.

\begin{proposition}[Near-optimal pilot signal selection]
\label{prop:pilot-signal-selection}
Let \(\mathcal G_n^{\rm guard}\) be a nonempty guarded subset of a finite candidate
grid \(\mathcal G_n^{\rm cand}\) with \(\sigma(g)>0\) for every
\(g\in \mathcal G_n^{\rm guard}\), let \(S(g)=|\Delta(g)|/\sigma(g)\) be the
standardized signal on \(\mathcal G_n^{\rm guard}\), and let \(\widehat S(g)\) be a
pilot estimate satisfying
\[
    \sup_{g\in \mathcal G_n^{\rm guard}}
    |\widehat S(g)-S(g)|\le \varepsilon_n
\]
with probability tending to one. Let \(\widehat g\) be any maximizer of
\(\widehat S(g)\) over \(\mathcal G_n^{\rm guard}\). Then, on the event where the
displayed bound holds,
\[
    S(\widehat g)
    \ge
    \max_{g\in \mathcal G_n^{\rm guard}}S(g)-2\varepsilon_n .
\]
More generally, on the same event, the top-\(K\) pilot-selected set contains at least one
grid point whose standardized signal is within \(2\varepsilon_n\) of the
best guarded-grid signal.
\end{proposition}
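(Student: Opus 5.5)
The argument is a deterministic two-sided sandwich, carried out on the event where the uniform pilot bound holds. Write \(\mathcal E_n=\{\sup_{g\in\mathcal G_n^{\rm guard}}|\widehat S(g)-S(g)|\le\varepsilon_n\}\). By hypothesis \(\Pr(\mathcal E_n)\to1\), so it suffices to prove both claims pointwise on \(\mathcal E_n\); no probabilistic argument is needed beyond that. Because \(\mathcal G_n^{\rm guard}\) is finite and nonempty, and \(\sigma(g)>0\) there, both \(S\) and \(\widehat S\) are finite-valued. Hence a maximizer \(g^\star\) of \(S\) and the pilot maximizer \(\widehat g\) both exist.

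\emph{Single maximizer.} On \(\mathcal E_n\) I chain three inequalities:
\[
S(\widehat g)\ge\widehat S(\widehat g)-\varepsilon_n\ge\widehat S(g^\star)-\varepsilon_n\ge S(g^\star)-2\varepsilon_n .
\]
The first and third use the uniform bound at \(\widehat g\) and at \(g^\star\), respectively. The middle one uses only that \(\widehat g\) maximizes \(\widehat S\); any tie-breaking rule works. Since \(S(g^\star)=\max_{g\in\mathcal G_n^{\rm guard}}S(g)\), this is the first claim.

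\emph{Top-\(K\) version.} Let \(\widehat{\mathcal G}_K\) be any set of \(K\ge1\) points of largest \(\widehat S\) values, with ties broken arbitrarily. Such a set always contains some maximizer of \(\widehat S\). If \(K\) exceeds the size of the guarded grid, take the whole grid, which still contains a maximizer. Applying the single-maximizer bound to that element \(\widehat g\in\widehat{\mathcal G}_K\) gives \(S(\widehat g)\ge\max S-2\varepsilon_n\), which is the second claim. If the guardrails (anchor separation, shell quotas) prevent selecting the top-ranked cell outright, the statement should be read as requiring that the selected set contain a \(\widehat S\)-maximizer over \(\mathcal G_n^{\rm guard}\). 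This is exactly what guarding the grid before ranking ensures.

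\emph{Main obstacle.} The only delicate point is interpretive rather than technical. The bound concerns the population standardized signal \(S\) at the selected cell, and the selection is made on the pilot sample, so \(\widehat g\) is \(\mathcal F_n^{\rm des}\)-measurable. Consequently \(S(\widehat g)\) is a legitimate design-conditional quantity, and Proposition~\ref{prop:design-valid-anchors} continues to apply to the selected grid. I would note this in a remark, but it is not needed for the inequality itself.
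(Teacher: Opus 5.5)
Your proof is correct and follows the paper's argument. It uses the same sandwich \(S(\widehat g)\ge\widehat S(\widehat g)-\varepsilon_n\ge\widehat S(g^\star)-\varepsilon_n\ge S(g^\star)-2\varepsilon_n\) on the uniform-error event, and it reduces the top-\(K\) case to the fact that the top-\(K\) set contains a maximizer of \(\widehat S\).
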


\begin{proof}
Let \(g^\star\in\arg\max_{g\in \mathcal G_n^{\rm guard}}S(g)\). Since
\(\widehat g\) maximizes \(\widehat S\),
\[
    \widehat S(\widehat g)\ge \widehat S(g^\star).
\]
Using the uniform pilot error twice,
\[
    S(\widehat g)
    \ge
    \widehat S(\widehat g)-\varepsilon_n
    \ge
    \widehat S(g^\star)-\varepsilon_n
    \ge
    S(g^\star)-2\varepsilon_n .
\]
The same argument applies to the top-\(K\) set because it contains a
maximizer of \(\widehat S\).
\end{proof}

\paragraph{Task-relative designs.}
Geometric coverage gives a default design when no scientific distinction
is prespecified. When the question concerns a declared family of marginal
law changes \(\nu_\theta=P_{1,\theta}-P_{0,\theta}\), write
\(s_G(\theta)=(\nu_\theta(B_g):g\in G)\) for the signature of
\(\nu_\theta\) on a grid \(G\), where \(B_g\) is the ball of coordinate
\(g\). The grid resolves a target \(q(\theta)\) within the family if
\(s_G(\theta)=s_G(\theta')\) implies \(q(\theta)=q(\theta')\).
Equivalently, with \(\mathsf S_G\nu=(\nu(B_g):g\in G)\), two law changes
have the same profile on \(G\) exactly when their difference lies in the
kernel of \(\mathsf S_G\). Labels such as location, dispersion, tail, or
local structure therefore require a declared family and its separation on
the chosen grid; \hyperref[prop:anchor_blindness]{Proposition~\ref*{prop:anchor_blindness}} shows that this
separation can fail even with complete radius profiles at several anchors.
For such a family, a working design may maximize the smallest
coordinatewise standardized separation between the signatures of the
declared alternative classes, relative to a working max-test cutoff, with
all inputs fixed by external or auxiliary information. A union
\(G_{\rm coverage}\cup G_{\rm task}\) of a coverage design and a task
design can be analyzed as one frozen grid with one common cutoff. These
are design principles; the experiments in \hyperref[sec:experiments]{Section~\ref*{sec:experiments}}
do not optimize this criterion.

\subsection{Efficiency--Power Pareto Frontier under a Fixed Metric}

After the outcome metric \(d\) is fixed, the remaining statistical
choices include the anchor grid, the radius grid, trimming threshold,
nuisance learners, and testing functional. These choices affect several
objectives simultaneously: pointwise estimation efficiency,
simultaneous-band width, global-test power, representation error, and
computational cost.

For an admissible design \(D\), let \(G_D\) be its inferential grid and
let
\[
    \sigma_D^2(g)
    =
    {\rm Var}\{\psi(O;g)\},\qquad g\in G_D,
\]
be the efficient score variance. Let \(c_D(1-\alpha)\) be the
\((1-\alpha)\)-quantile of the corresponding Gaussian max statistic.
We define the first-order simultaneous-band width criterion
\[
    \mathcal B(D)
    =
    c_D(1-\alpha)
    \cdot
    |G_D|^{-1}\sum_{g\in G_D}\sigma_D(g),
\]
so that, for a fixed design satisfying the conditions of \hyperref[cor:scb]{Corollary~\ref*{cor:scb}}, the average simultaneous confidence-band width is
\(2\mathcal B(D)/\sqrt n+o_P(n^{-1/2})\). Smaller
\(\mathcal B(D)\) means higher simultaneous estimation efficiency.

For local alternatives
\[
    \Delta_n(g)=h(g)/\sqrt n,
\]
we use as a design criterion the Gaussian power
\[
    \mathcal P(D;h)
    =
    \mathbb P\left\{
    \max_{g\in G_D}
    \left|Z_D(g)+\frac{h(g)}{\sigma_D(g)}\right|
    >
    c_D(1-\alpha)
    \right\},
\]
where \(Z_D\) is the centered Gaussian process with the correlation
matrix of the standardized efficient score vector. It approximates the rejection probability of the max test along these alternatives whenever the shifted maximum and the feasible cutoff admit the corresponding Gaussian approximations.

\subsubsection{Pointwise Efficiency Implies Pointwise Local Power}

We first record a standard benchmark. For a fixed profile location, the
efficient estimator is also locally most powerful among regular
asymptotically linear estimators using the same Wald calibration.

\begin{proposition}[Pointwise efficiency and local power]
\label{thm:pointwise-eff-power}
Fix \(g\). Consider a regular asymptotically linear estimator
\(\widetilde\Delta(g)\) with asymptotic variance \(0<v(g)<\infty\), and the
two-sided level-\(\alpha\) Wald test studentized by a consistent estimator
of \(v(g)\). Along a regular parametric submodel with
\(\Delta_n(g)=h(g)/\sqrt n\), this test has asymptotic power
\[
    \pi_v(h)
    =
    1-\Phi\left(z_{1-\alpha/2}-\frac{|h(g)|}{\sqrt{v(g)}}\right)
    +
    \Phi\left(-z_{1-\alpha/2}-\frac{|h(g)|}{\sqrt{v(g)}}\right).
\]
This power is nonincreasing in \(v(g)\). Therefore, any estimator
attaining the semiparametric efficiency bound
\(\sigma_{\rm eff}^2(g)\) maximizes first-order pointwise local power
among regular estimators.
\end{proposition}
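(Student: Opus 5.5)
The plan is to combine Le Cam's third lemma with a monotonicity argument for the power function of a folded normal. First fix a regular parametric submodel $\{P_s\}$ through $P$ with score $\dot\ell$. Then $\Delta_n(g)=\Delta(P_{h/\sqrt n})$ satisfies $\sqrt n\{\Delta_n(g)-\Delta(g)\}\to h(g)$. Since $\widetilde\Delta(g)$ is asymptotically linear with influence function $\tilde\psi$ and $E\tilde\psi^2=v(g)$, the pair consisting of $\sqrt n\{\widetilde\Delta(g)-\Delta(g)\}$ and the log-likelihood ratio is jointly asymptotically normal under $P$. Le Cam's third lemma then gives, under $P_{h/\sqrt n}$,
\[
\sqrt n\{\widetilde\Delta(g)-\Delta(g)\}\rightsquigarrow N\bigl(E[\tilde\psi\,\dot\ell]\,h_0,\;v(g)\bigr),
\]
where $h_0$ is the submodel direction. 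Regularity is used here: it forces the shift $E[\tilde\psi\dot\ell]h_0$ to equal the pathwise derivative of $\Delta$ along the submodel, and that derivative is $h(g)$ by construction. So under $P_{h/\sqrt n}$ we have $\sqrt n\,\widetilde\Delta(g)\rightsquigarrow N(h(g),v(g))$ when testing $\Delta(g)=0$. The consistent variance estimator remains consistent under the contiguous alternatives by Le Cam's first lemma. Slutsky's lemma then gives $\sqrt n\widetilde\Delta(g)/\hat v^{1/2}\rightsquigarrow N(h(g)/\sqrt{v(g)},1)$.

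With this limit, the displayed power formula $\pi_v(h)$ follows because $x\mapsto\mathbb 1\{|x|>z_{1-\alpha/2}\}$ has a boundary of Lebesgue measure zero, so the portmanteau theorem applies.

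For monotonicity, write $\mu=|h(g)|/\sqrt{v(g)}$ and $p(\mu)=1-\Phi(z-\mu)+\Phi(-z-\mu)$ with $z=z_{1-\alpha/2}$. Differentiating gives
\[
p'(\mu)=\varphi(z-\mu)-\varphi(z+\mu),
\]
which is nonnegative for $\mu\ge0$ because $|z-\mu|\le z+\mu$. Hence $p$ is nondecreasing in $\mu$, and therefore nonincreasing in $v(g)$.

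To conclude, invoke the convolution theorem (equivalently, the efficiency bound of Theorem~\ref{thm:ident}): every regular asymptotically linear estimator has $v(g)\ge\sigma^2_{\rm eff}(g)=E\psi(O;g)^2$. An estimator attaining $\sigma^2_{\rm eff}(g)$, such as the AIPW estimator under the conditions above, therefore has power at least $\pi_v(h)$ for every competitor and every $h$.

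The main obstacle is the Le Cam step. It needs care to state exactly how regularity pins the drift to $h(g)$ and not some other covariance term, and to check that $\hat v$ stays consistent under the local alternatives. The rest is routine calculus.
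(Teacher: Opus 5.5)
Your proof is correct and follows the paper's argument: the same power formula for the limit $|Z+\nu|$, the same derivative $\varphi(z-\mu)-\varphi(z+\mu)\ge0$ obtained from $|z-\mu|\le z+\mu$, and the same closing appeal to the efficiency bound. The only difference is that you derive the local limit $N(h(g)/\sqrt{v(g)},1)$ of the studentized statistic from Le Cam's third lemma, regularity, and contiguity, whereas the paper simply asserts it. Regularity alone would already give that limit, since $\sqrt n\{\widetilde\Delta(g)-\Delta_n(g)\}$ keeps the limit $N(0,v(g))$ along the local alternatives.
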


\begin{proof}
Let
\[
    \nu=\frac{|h(g)|}{\sqrt{v(g)}},\qquad c=z_{1-\alpha/2}.
\]
The limiting Wald statistic is \(|Z+\nu|\), with \(Z\sim N(0,1)\). Hence
\[
    \pi_v(h)
    =
    \mathbb P(|Z+\nu|>c)
    =
    1-\Phi(c-\nu)+\Phi(-c-\nu).
\]
Differentiating with respect to \(\nu\ge0\) gives
\[
    \frac{d}{d\nu}\pi_v(h)
    =
    \phi(c-\nu)-\phi(c+\nu)\ge0,
\]
because \(c+\nu\ge |c-\nu|\) for \(c,\nu\ge0\) and the standard normal
density decreases with \(|x|\); the derivative is positive when \(c>0\) and \(\nu>0\), since then \(c+\nu>|c-\nu|\). Thus power increases with \(\nu\), and
\(\nu\) decreases with \(v(g)\). Since the semiparametric efficiency
bound is the smallest attainable asymptotic variance among regular
estimators, an efficient estimator maximizes the first-order pointwise
local power.
\end{proof}

\hyperref[thm:pointwise-eff-power]{Proposition~\ref*{thm:pointwise-eff-power}} clarifies the role of the doubly
robust efficient score: at a fixed anchor--radius pair, efficiency and
power are aligned. The conflict appears once the coordinate itself is a design choice,
already between two single-coordinate designs (\hyperref[prop:non-dominance]{Proposition~\ref*{prop:non-dominance}}).

\subsubsection{Global Power and Multiplicity}

For a grid \(G_D\), define the maximal standardized local signal
\[
    \mu_D(h)
    =
    \max_{g\in G_D}
    \frac{|h(g)|}{\sigma_D(g)}.
\]

\begin{proposition}[A sufficient condition for max-test power]
\label{prop:max-detection-boundary}
For every \(c\in\mathbb R\),
\[
    \mathbb P\left\{
    \max_{g\in G_D}
    \left|Z_D(g)+\frac{h(g)}{\sigma_D(g)}\right|
    >c\right\}
    \ge
    1-\Phi\{c-\mu_D(h)\}.
\]
In particular, \(\mathcal P(D;h)\ge1-\Phi\{c_D(1-\alpha)-\mu_D(h)\}\), so that, with \(z_\beta=\Phi^{-1}(\beta)\), the condition \(\mu_D(h)\ge c_D(1-\alpha)+z_\beta\) gives \(\mathcal P(D;h)\ge\beta\).
\end{proposition}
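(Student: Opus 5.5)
The plan is to reduce the maximum to a single well-chosen coordinate. Since $G_D$ is finite (it is the inferential grid of an admissible design), the maximum $\mu_D(h)$ is attained at some $g^\star\in G_D$, that is,
\[
\mu_D(h)=\frac{|h(g^\star)|}{\sigma_D(g^\star)}.
\]
The maximum over $G_D$ dominates the $g^\star$ coordinate, so
\[
\mathbb P\Bigl\{\max_{g\in G_D}\Bigl|Z_D(g)+\frac{h(g)}{\sigma_D(g)}\Bigr|>c\Bigr\}
\ge
\mathbb P\bigl\{|Z_D(g^\star)+s\,\mu_D(h)|>c\bigr\},
\]
where $s=\operatorname{sign}h(g^\star)$. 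If $h(g^\star)=0$, take $s=1$; then $\mu_D(h)=0$ and the argument below still applies.

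The key fact is that $Z_D$ has the correlation matrix of the standardized efficient scores. Its diagonal entries equal one, so $Z_D(g^\star)\sim N(0,1)$, and by symmetry of the normal law $sZ_D(g^\star)$ is also standard normal. Writing $W=sZ_D(g^\star)$, we get
\[
|Z_D(g^\star)+s\mu_D(h)|=|W+\mu_D(h)|\ge W+\mu_D(h).
\]
Hence the probability on the right is at least
\[
\mathbb P\{W>c-\mu_D(h)\}=1-\Phi\{c-\mu_D(h)\},
\]
and this holds for every real $c$, including negative ones. This proves the first display.

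The remaining claims follow directly. Taking $c=c_D(1-\alpha)$ and recalling the definition of $\mathcal P(D;h)$ gives the second inequality. If $\mu_D(h)\ge c_D(1-\alpha)+z_\beta$, then
\[
c_D(1-\alpha)-\mu_D(h)\le -z_\beta.
\]
Because $\Phi$ is monotone, $1-\Phi\{c_D(1-\alpha)-\mu_D(h)\}\ge 1-\Phi(-z_\beta)=\Phi(z_\beta)=\beta$.

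There is no real obstacle. The only points that need care are that $\sigma_D(g)>0$ on the grid, so the standardized signals are defined (this is part of admissibility and the nondegeneracy condition), and that the unit variances of $Z_D$ make its marginal at $g^\star$ exactly standard normal.
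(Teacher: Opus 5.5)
Your proposal is correct and follows essentially the same route as the paper: pick a maximizer $g^\star$ of $|h(g)|/\sigma_D(g)$, use the sign flip so that the signed coordinate $sZ_D(g^\star)$ is standard normal (unit diagonal of the correlation matrix), bound the maximum below by that single coordinate, and set $c=c_D(1-\alpha)$. You additionally spell out the $z_\beta$ step via monotonicity of $\Phi$ and $1-\Phi(-z_\beta)=\beta$, which the paper leaves implicit.
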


\begin{proof}
Let \(g^\star\in\arg\max_{g\in G_D}|h(g)|/\sigma_D(g)\), and define
\[
    s^\star=
    \begin{cases}
    1, & h(g^\star)\ge0,\\
    -1, & h(g^\star)<0,
    \end{cases}
    \qquad
    \mu_D(h)=s^\star h(g^\star)/\sigma_D(g^\star).
\]
Since \(s^\star Z_D(g^\star)\sim N(0,1)\) and the maximum dominates the
signed coordinate,
\[
\begin{aligned}
    \mathbb P\left\{
    \max_{g\in G_D}
    \left|Z_D(g)+\frac{h(g)}{\sigma_D(g)}\right|
    >c\right\}
    &\ge
    \mathbb P\left\{
    s^\star Z_D(g^\star)+\mu_D(h)>c
    \right\} \\
    &=
    1-\Phi(c-\mu_D(h)).
\end{aligned}
\]
Taking \(c=c_D(1-\alpha)\) gives the bound for \(\mathcal P(D;h)\), and the sufficient condition follows.
\end{proof}

The bound motivates the signal-minus-critical-value proxy
\[
    \mu_D(h)-c_D(1-\alpha).
\]
Adding anchors may increase \(\mu_D(h)\) by covering a stronger signal
region, but it may also increase \(c_D(1-\alpha)\). The bound is not a
necessary condition for power, because the maximum uses the whole signal
vector and its dependence. With independent coordinates that each have
standardized shift \(2\) and \(\alpha=0.05\), one coordinate gives
\(c=1.960\) and power \(0.516\), whereas two coordinates give
\(c=2.236\) and power \(0.648\), although the largest shift is the same.
Profile scores attain this configuration: for a Euclidean outcome on
\(\{0,1\}^2\) with independent fair coordinates, the anchors \((0,1/2)\)
and \((1/2,0)\) with radius \(0.6\) select the events that the first and
the second coordinate equal zero; under randomization with probability
\(1/2\) the two oracle contrast scores have unit variance and are
uncorrelated, and raising each of these probabilities to
\(1/2+2/\sqrt n\) under treatment gives the stated shifts.

\begin{lemma}[Multiplicity monotonicity under a common Gaussian process]
\label{lem:multiplicity}
If \(G_D\subseteq G_{D'}\) and both maxima are evaluated under the same
standardized Gaussian score process \(Z(\cdot)\), then
\[
    c_{D'}(1-\alpha)\ge c_D(1-\alpha).
\]
\end{lemma}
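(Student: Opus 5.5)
The plan is a pathwise coupling argument. Both critical values are defined through the same standardized Gaussian score process \(Z(\cdot)\), and its restriction to \(G_D\) is exactly the process defining \(c_D(1-\alpha)\). Concretely, I would realize a single centered Gaussian vector \(\{Z(g):g\in G_{D'}\}\) with unit variances and the correlation structure of the standardized efficient scores on \(G_{D'}\). Its subvector \(\{Z(g):g\in G_D\}\) then has the correlation matrix of the standardized scores on \(G_D\), because correlations between two coordinates do not depend on which other coordinates are retained. Hence \(M_D=\max_{g\in G_D}|Z(g)|\) has the law used to define \(c_D(1-\alpha)\), and \(M_{D'}=\max_{g\in G_{D'}}|Z(g)|\) has the law used to define \(c_{D'}(1-\alpha)\).

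The key step is the pointwise inequality \(M_{D'}\ge M_D\), which holds because a maximum over a superset dominates the maximum over a subset. It follows that, for every \(c\),
\[
\mathbb P(M_{D'}\le c)\le\mathbb P(M_D\le c).
\]
So the distribution function \(F_{D'}\) of \(M_{D'}\) lies below the distribution function \(F_D\) of \(M_D\); that is, \(M_{D'}\) stochastically dominates \(M_D\).

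Next I would pass to quantiles. Using the generalized-inverse definition \(c_D(1-\alpha)=\inf\{c:F_D(c)\ge1-\alpha\}\), any \(c\) with \(F_{D'}(c)\ge1-\alpha\) also satisfies \(F_D(c)\ge F_{D'}(c)\ge1-\alpha\). The set over which the infimum defining \(c_{D'}\) is taken is therefore contained in the set defining \(c_D\). The infimum over a smaller set is at least as large, which gives \(c_{D'}(1-\alpha)\ge c_D(1-\alpha)\). This step needs neither continuity nor strict monotonicity of the distribution functions, so degenerate coordinates cause no difficulty.

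The only point requiring care is the interpretive one of confirming that ``the same standardized Gaussian score process'' means the two maxima are marginals of one joint process. If the designs instead carried different variance normalizations \(\sigma_D\) versus \(\sigma_{D'}\) on shared cells, the coupling would fail. This cannot happen here, because \(\sigma_D(g)\) is the efficient-score standard deviation at \(g\) and does not depend on the design. The standardized score at a shared cell is therefore the same random variable under both designs, and the coupling argument goes through.
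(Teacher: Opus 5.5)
Your proposal is correct and follows the paper's proof: the almost-sure domination \(\max_{g\in G_{D'}}|Z(g)|\ge\max_{g\in G_D}|Z(g)|\) under one common process gives first-order stochastic dominance, which orders the \((1-\alpha)\)-quantiles. Your generalized-inverse quantile step and your remark that \(\sigma_D(g)\) does not depend on the design spell out details the paper leaves implicit.
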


\begin{proof}
Since \(G_D\subseteq G_{D'}\),
\[
    \max_{g\in G_{D'}}|Z(g)|
    \ge
    \max_{g\in G_D}|Z(g)|
\]
almost surely. Therefore the distribution of the larger-grid maximum
first-order stochastically dominates the distribution of the smaller-grid
maximum, and its \((1-\alpha)\)-quantile cannot be smaller.
\end{proof}

\hyperref[lem:multiplicity]{Lemma~\ref*{lem:multiplicity}} explains why denser nested grids under a
fixed score process are not automatically better: a new anchor can raise the
simultaneous critical value, and whether power improves depends on the
whole standardized signal vector and its dependence. The example after
\hyperref[prop:max-detection-boundary]{Proposition~\ref*{prop:max-detection-boundary}} shows a power gain without
any increase in the strongest standardized signal. Across different
designs the standardized covariance structure also changes, so the lemma
does not compare arbitrary design classes.

\subsubsection{Pareto Efficiency of Designs}

Let \(\mathfrak D\) denote a collection of admissible designs. Besides
\(\mathcal B(D)\) and \(\mathcal P(D;h)\), let
\(\mathcal R(D)\) denote a representation error, such as the covering
radius bound in \hyperref[prop:grid-approx-anchor]{Proposition~\ref*{prop:grid-approx-anchor}}, and let
\(\mathcal C(D)\) denote computational cost, for example proportional to
\(|G_D|\) times the cost of nuisance evaluation.

We say that \(D_1\) dominates \(D_2\) if
\[
    \mathcal B(D_1)\le \mathcal B(D_2),\qquad
    \mathcal R(D_1)\le \mathcal R(D_2),\qquad
    \mathcal C(D_1)\le \mathcal C(D_2),\qquad
    \mathcal P(D_1;h)\ge \mathcal P(D_2;h),
\]
with at least one strict inequality. A design is Pareto efficient if no
other admissible design dominates it.

\begin{proposition}[Weighted scalarization yields Pareto-efficient designs]
\label{thm:pareto-scalarization}
Let \(\lambda_B,\lambda_R,\lambda_C,\lambda_P>0\). Suppose the
optimization problem
\[
    \max_{D\in\mathfrak D}
    \left[
        \lambda_P\mathcal P(D;h)
        -
        \lambda_B\mathcal B(D)
        -
        \lambda_R\mathcal R(D)
        -
        \lambda_C\mathcal C(D)
    \right]
\]
has a solution \(D^\star\). Then \(D^\star\) is Pareto efficient.
\end{proposition}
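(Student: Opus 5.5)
The proof is by contradiction, using the standard argument that a maximizer of a strictly positive weighted sum of objectives cannot be dominated. Write the scalarized objective as
\[
J(D)=\lambda_P\mathcal P(D;h)-\lambda_B\mathcal B(D)-\lambda_R\mathcal R(D)-\lambda_C\mathcal C(D),
\]
and suppose some admissible \(D\in\mathfrak D\) dominates \(D^\star\). By the definition of dominance we have \(\mathcal B(D)\le\mathcal B(D^\star)\), \(\mathcal R(D)\le\mathcal R(D^\star)\), \(\mathcal C(D)\le\mathcal C(D^\star)\) and \(\mathcal P(D;h)\ge\mathcal P(D^\star;h)\), with at least one of these inequalities strict.

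The key step is to compare \(J(D)\) with \(J(D^\star)\) term by term. Each weight is strictly positive, and the three cost-type criteria enter with a minus sign. Hence every summand of \(J(D)-J(D^\star)\) is nonnegative:
\[
J(D)-J(D^\star)=\lambda_P\{\mathcal P(D;h)-\mathcal P(D^\star;h)\}+\lambda_B\{\mathcal B(D^\star)-\mathcal B(D)\}+\lambda_R\{\mathcal R(D^\star)-\mathcal R(D)\}+\lambda_C\{\mathcal C(D^\star)-\mathcal C(D)\}.
\]
The strict inequality makes the corresponding summand strictly positive, because its weight is strictly positive. Therefore \(J(D)>J(D^\star)\), which contradicts the optimality of \(D^\star\) over \(\mathfrak D\). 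So no admissible design dominates \(D^\star\), and \(D^\star\) is Pareto efficient.

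The only point needing care is that all four criteria must be finite real numbers on \(\mathfrak D\), so that the differences above are well defined. This holds because \(\mathcal P\in[0,1]\), \(\mathcal B\) is finite for a finite grid, and \(\mathcal R\) and \(\mathcal C\) are finite by their definitions. Strict positivity of all four weights is essential. If one weight were zero, a design that improves only that criterion could tie with \(D^\star\) in \(J\), and \(D^\star\) would then be only weakly Pareto efficient. No probabilistic results from earlier in the paper are needed, since the statement is purely order-theoretic.
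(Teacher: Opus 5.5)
Your proof is correct and follows the paper's argument exactly: assume some admissible design dominates \(D^\star\), use the strict positivity of all four weights to conclude that the scalarized objective is strictly larger there, and contradict the optimality of \(D^\star\). Your added remark that the four criteria must be real-valued for the differences to make sense is a fair caveat that the paper leaves implicit.
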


\begin{proof}
Suppose \(D^\star\) is not Pareto
efficient. Then there exists \(D'\in\mathfrak D\) such that
\[
    \mathcal P(D';h)\ge \mathcal P(D^\star;h),\quad
    \mathcal B(D')\le \mathcal B(D^\star),\quad
    \mathcal R(D')\le \mathcal R(D^\star),\quad
    \mathcal C(D')\le \mathcal C(D^\star),
\]
with at least one strict inequality. Since all weights are strictly
positive, the scalarized objective is strictly larger at \(D'\) than at
\(D^\star\), contradicting the optimality of \(D^\star\). Hence
\(D^\star\) must be Pareto efficient.
\end{proof}

In nonconvex design classes, some Pareto-efficient designs need not solve any weighted scalarization with nonzero nonnegative weights.

The following example shows that efficiency and power can conflict even
in the simplest one-coordinate designs.

\begin{proposition}[Non-dominance of efficiency and power]
\label{prop:non-dominance}
Fix a level \(\alpha\) and let \(c=z_{1-\alpha/2}\). Consider two
single-coordinate designs \(D_E=\{g_E\}\) and \(D_P=\{g_P\}\) with
score standard deviations
\[
    0<\sigma_E<\sigma_P.
\]
Let the local alternative satisfy \(h(g_E)=0\) and \(h(g_P)>0\). Then
\(D_E\) has smaller simultaneous-band width, while \(D_P\) has strictly
larger local power. Hence neither design dominates the other.
\end{proposition}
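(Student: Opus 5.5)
The proof is a direct computation with the two criteria just defined. For a single-coordinate design the band criterion reduces to \(c\) times one standard deviation, and the Gaussian power reduces to a one-dimensional normal probability. We then compare the two designs on each criterion separately.

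\emph{Width.} For a single coordinate, the Gaussian max statistic is \(|Z|\) with \(Z\sim N(0,1)\). Hence \(c_{D_E}(1-\alpha)=c_{D_P}(1-\alpha)=z_{1-\alpha/2}=c\). It follows that \(\mathcal B(D_E)=c\,\sigma_E\) and \(\mathcal B(D_P)=c\,\sigma_P\). Since \(c>0\) for \(\alpha\in(0,1)\) and \(\sigma_E<\sigma_P\), we get \(\mathcal B(D_E)<\mathcal B(D_P)\), so \(D_E\) has strictly smaller band width.

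\emph{Power.} Because \(h(g_E)=0\), we have
\[
\mathcal P(D_E;h)=\mathbb P(|Z|>c)=\alpha .
\]
For \(D_P\), set \(\nu=h(g_P)/\sigma_P>0\). Then
\[
\mathcal P(D_P;h)=1-\Phi(c-\nu)+\Phi(-c-\nu).
\]
We use the derivative computed in the proof of \hyperref[thm:pointwise-eff-power]{Proposition~\ref*{thm:pointwise-eff-power}}: \(\phi(c-\nu)-\phi(c+\nu)\) is strictly positive for \(c>0\) and \(\nu>0\). So this power is strictly increasing in \(\nu\) on \([0,\infty)\), and it equals \(\alpha\) at \(\nu=0\). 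Therefore \(\mathcal P(D_P;h)>\alpha=\mathcal P(D_E;h)\).

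\emph{Non-dominance.} For \(D_E\) to dominate \(D_P\) we would need \(\mathcal P(D_E;h)\ge\mathcal P(D_P;h)\), which fails. For \(D_P\) to dominate \(D_E\) we would need \(\mathcal B(D_P)\le\mathcal B(D_E)\), which also fails. Either failure is enough, so no assumption on the \(\mathcal R\) and \(\mathcal C\) components is needed.

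The only point that needs care is the strictness of the power inequality, and the positive-derivative argument from \hyperref[thm:pointwise-eff-power]{Proposition~\ref*{thm:pointwise-eff-power}} supplies it.
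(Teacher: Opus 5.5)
Your proposal is correct and follows the paper's proof: the band criterion reduces to \(c\sigma\) for a single coordinate, and the power is \(\alpha\) under \(D_E\) and \(1-\Phi(c-\nu_P)+\Phi(-c-\nu_P)>\alpha\) under \(D_P\), by the strict monotonicity established in Proposition~\ref{thm:pointwise-eff-power}. You also note explicitly that one failed inequality already rules out dominance, so the \(\mathcal R\) and \(\mathcal C\) components play no role; this is a small clarification that the paper leaves implicit.
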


\begin{proof}
For a one-coordinate design, the simultaneous-band width multiplier is
\(c\sigma\). Since \(\sigma_E<\sigma_P\),
\[
    c\sigma_E<c\sigma_P,
\]
so \(D_E\) is more efficient in the band-width sense.

Under \(D_E\), the local signal is zero, so the two-sided Wald test has
asymptotic power \(\alpha\). Under \(D_P\), the local noncentrality is
\[
    \nu_P=\frac{h(g_P)}{\sigma_P}>0.
\]
The power is
\[
    1-\Phi(c-\nu_P)+\Phi(-c-\nu_P).
\]
As shown in the proof of \hyperref[thm:pointwise-eff-power]{Proposition~\ref*{thm:pointwise-eff-power}}, this
quantity is strictly increasing in \(\nu_P\) for \(\nu_P>0\), and equals
\(\alpha\) when \(\nu_P=0\). Therefore \(D_P\) has strictly larger power
than \(D_E\). One design is better for efficiency, the other for power,
so neither dominates the other.
\end{proof}

\paragraph{Empirical Pareto protocol.}
In practice, we recommend evaluating a finite family of candidate designs
\(\{D_\ell\}\), such as geometry-only MaxMin designs, guarded
pilot-signal designs, and hybrid shell-balanced designs. For each
candidate, compute pilot or simulation estimates of
\[
    \widehat{\mathcal B}(D_\ell),\qquad
    \widehat{\mathcal P}(D_\ell;h)
    \ \text{or}\
    \widehat{\mathcal M}(D_\ell)
    =
    \max_{g\in G_{D_\ell}}
    \frac{\sqrt n|\widehat\Delta_{\rm pilot}(g)|}
         {\widehat\sigma_{\rm pilot}(g)}
    -
    \widehat c_{D_\ell,1-\alpha},
\]
together with representation error, trimming fraction, inferential grid
size, and computational cost. Designs dominated on all reported metrics
should be discarded. The final design can then be selected either by a
prespecified weight vector or by choosing a stable knee point on the
empirical Pareto frontier. This selection must be completed before the
main analysis sample is used.

\FloatBarrier
\section{Simulation Details}
\label{app:simulation_details}

\subsection{Five-Space Configuration}

The five-space calibration study uses the same inferential construction across all outcome spaces and anchor strategies. The main analysis sample is used only for nuisance fitting and doubly robust score evaluation; auxiliary design information is used only to freeze anchors, radii, and the inferential grid before main-sample inference. The anchor, pilot, and reference sizes in \hyperref[tab:full_config]{Table~\ref*{tab:full_config}} are components of the auxiliary information summarized theoretically by \(\mathcal F_n^{\mathrm{aux}}\). The high-sample block in the main text uses the DRF weighted ECDF learner at \(n=3200,6400,12800\) and a separate design with bivariate Euclidean outcomes: \(X\sim N(0,I_2)\), propensity score \(\mathrm{expit}(0.6X_1-0.6X_2)\), a \(10\times10\) grid frozen from an independent auxiliary sample of size \(10{,}000\), trimming probabilities computed on a second independent sample of the same size, true profiles approximated from \(100{,}000\) oracle draws, \(500\) null and \(300\) alternative repetitions per sample size, \(500\) multiplier draws, and two folds. The propensity score is estimated by logistic regression with ridge penalty \(10^{-4}\), and the fitted values are clipped to \([0.05,0.95]\). In every reported max-test analysis, the trimming probabilities \(\hat p_{\mathrm{pilot}}\) of \hyperref[sec:defs]{Section~\ref*{sec:defs}} are computed on a reference sample, which plays the role of \(\mathcal I_{\mathrm{pilot}}\) and is separate from the samples that construct and score the anchors.

\begin{table}[t]
\centering
\caption{Configuration of the five-space calibration study.}
\label{tab:full_config}
\scriptsize
\begin{tabular}{@{}lp{0.62\textwidth}@{}}
\toprule
Component & Value \\
\midrule
Outcome spaces & Euclidean, Wasserstein, SPD, graph, rooted tree \\
Sample sizes & $200,400,800,1600,3200$ \\
Monte Carlo repetitions & $500$ \\
Multiplier-bootstrap draws & $1000$ \\
Anchors and radii & $K_\omega=10$, $K_t=10$ \\
Radius rule & quantiles at $10$ equally spaced levels from $0.05$ to $0.95$ of the pooled distances from the anchor-design sample to the selected anchors, shared across anchors \\
Inferential-grid trimming & pilot probabilities in $[0.05,0.95]$ \\
Cross-fitting folds & $2$ \\
True propensity score & $0.05+0.9\,\mathrm{expit}(0.8X_1-0.7X_2+0.5X_3)$, $X\sim N(0,I_3)$ \\
Fitted propensity score & logistic regression (ridge penalty $10^{-4}$), clipped to $[0.05,0.95]$ \\
SCB level & $0.90$ \\
Global test level & $0.05$ \\
Conditional CDF learner & kernel-weighted ECDF; covariates standardized within each training set; bandwidth the median pairwise distance, at least \(0.15\) \\
Auxiliary design sizes & anchor/pilot/reference sizes $20{,}000$; signal candidates $3000$ \\
Oracle sample for true profiles & $200{,}000$ \\
Tree dissimilarity & recursive alignment with optional child swaps; $1.0$ per unmatched node plus $0.25$ times the node-value difference, divided by the $15$ node slots \\
\bottomrule
\end{tabular}
\end{table}

In the five-space block, the multiplier draws for the tree-space runs and for the graph Guarded Shell-Signal alternative at \(n=3200\) used score rows centered within each cross-fitting fold, while the standard deviations in these runs were computed from globally centered rows. In the high-sample, separation, confounding, ABIDE, and insurance experiments, coordinates whose main-sample estimated standard deviation is below \(0.02\) are excluded from the inferential grid in addition to the pilot-probability trimming. Where the retained counts were recorded, this screen rarely changed the grid: it removed no probability-retained coordinate in the high-sample runs and a single coordinate in \(12\) of the \(6400\) ABIDE fits. The tree dissimilarity can violate the triangle inequality, so the tree space is a dissimilarity space rather than a metric space; the procedure uses its ball indicators as features in the same way as for the other spaces. In every experiment, the radii form one vector shared by all anchors of a design. They are quantiles of pooled auxiliary distances, except in the mean-zero examples, where eight radii are equally spaced between \(0.1\) and \(0.9\) times the largest pilot distance to the central anchor. The Guarded Shell-Signal shells are defined by distance to a medoid of the pilot outcomes. In the five-space block, the candidate anchors are sorted by this distance and split into three groups of nearly equal size; in the insurance and ABIDE benchmarks, the shells are cut at the tertiles of the distances, so that tied distances fall in the same shell. Medoids of large pools, including the starting anchor of Medoid MaxMin, are computed approximately: a random subset of candidate points is scored by total distance to a random subset of support points.

\begin{table}[t]
\centering
\caption{Anchor-design strategies.}
\label{tab:anchor_strategy_catalog}
\scriptsize
\begin{tabular}{@{}p{0.26\textwidth}p{0.68\textwidth}@{}}
\toprule
Strategy & Rule \\
\midrule
Medoid MaxMin & Select anchors by a medoid-start MaxMin rule on an independent anchor-design sample; radii use fixed empirical quantiles. \\
Guarded Shell-Signal & Split candidate anchors into three shells by distance to a pilot medoid, score them on a pilot sample independent of the main sample, apply guarded selection within shells, and freeze the resulting grid before main-sample score evaluation. \\
Balanced LHS & Select the balanced winner from a Latin-hypercube-screened family of pilot-conditioned candidate rules, balancing power, calibration, coverage, and error diagnostics. \\
Calibration LHS & Select the Latin-hypercube-screened candidate emphasizing null calibration and simultaneous-band coverage. \\
Power LHS & Select the Latin-hypercube-screened candidate emphasizing alternative rejection subject to validity diagnostics. \\
\bottomrule
\end{tabular}
\end{table}

\subsection{Exploratory Anchor-Design Screening}

We additionally ran an exploratory anchor-design screening to assess whether pilot-conditioned anchor selection can improve power while preserving the same max-statistic inferential calibration. The screening generated a family of signal-aware LHS candidate rules from independent pilot data, evaluated them under the same studentized max-statistic calibration, and selected representative winners according to balanced, calibration-oriented, and power-oriented criteria. All screened variants freeze anchors before main-sample inference, but they are treated as exploratory because their selection rule is more adaptive than the formal Medoid MaxMin design. \hyperref[tab:anchor_strategy_catalog]{Table~\ref*{tab:anchor_strategy_catalog}} describes the strategies.

\begin{table}[t]
\centering
\caption{Exploratory anchor-design screening at \(n=800\) with the kernel-weighted ECDF learner, averaged over the five outcome spaces (\(500\) repetitions per cell). Power and Size: rejection rates of the level-\(0.05\) max test; SCB: null coverage of the \(90\%\) simultaneous band; \(L^2\) and Sup: estimation errors under the alternative, relative to a Monte Carlo approximation of the true profile.}
\label{tab:anchor_screening_summary}
\scriptsize
\begin{tabular}{@{}lccccc@{}}
\toprule
Strategy & Power & Size & SCB & $L^2$ & Sup \\
\midrule
Medoid MaxMin & 0.860 & 0.040 & 0.917 & 0.0264 & 0.0802 \\
Guarded Shell-Signal & 0.876 & 0.035 & 0.926 & 0.0292 & 0.0776 \\
Balanced LHS & 0.893 & 0.038 & 0.923 & 0.0286 & 0.0740 \\
Calibration LHS & 0.876 & 0.035 & 0.928 & 0.0276 & 0.0776 \\
Power LHS & 0.881 & 0.041 & 0.925 & 0.0278 & 0.0760 \\
\bottomrule
\end{tabular}
\end{table}

In this exploratory screening (\hyperref[tab:anchor_screening_summary]{Table~\ref*{tab:anchor_screening_summary}}), the pilot-conditioned rules increased average power relative to Medoid MaxMin, with null rejection rates and band coverage in a similar range. We view Balanced LHS as an exploratory extension, and Calibration LHS and Power LHS as calibration- and power-oriented variants. Medoid MaxMin, whose anchors depend only on auxiliary geometry, remains the default design. In this screening, the multiplier draws used score rows centered within each cross-fitting fold.

\subsection{Additional Experimental Figures}\label{app:exp_figures}

Figures~\ref{fig:app_convergence_diagnostics}--\ref{fig:app_anchor_screening} show the high-sample error and calibration diagnostics, an estimator-level double-robustness check, pilot trimming fractions in the five-space study, and the exploratory anchor-design screening.

\begin{figure}[t]
\centering
\includegraphics[width=0.82\textwidth]{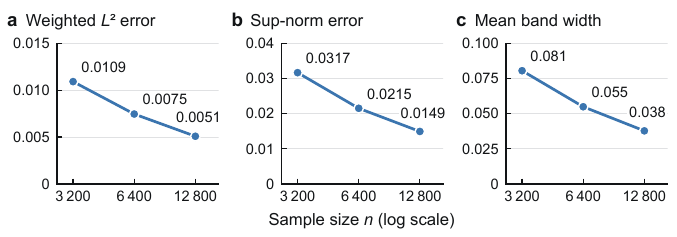}
\caption{DRF finite-grid estimation error under the alternative in the high-sample block: (a) weighted \(L^2\) error, (b) sup-norm error, and (c) mean band width, averaged over \(300\) repetitions per sample size; all \(300\) reject at each sample size. Errors are measured against a Monte Carlo approximation of the true profile from \(100{,}000\) oracle draws.}
\label{fig:app_convergence_diagnostics}
\end{figure}

\begin{figure}[t]
\centering
\includegraphics[width=0.455\textwidth]{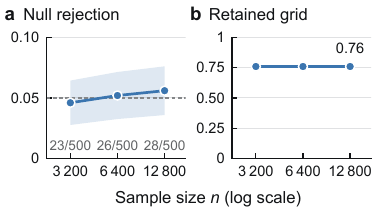}
\caption{DRF null calibration of the studentized max-statistic test on the inferential grid: (a) null rejection rate with counts and (b) retained grid fraction, \(500\) repetitions per sample size. Shaded bands are \(95\%\) Wald intervals and the dashed line marks the nominal level. Because the true null curve is zero, the \(95\%\) simultaneous band covers it exactly when the max test does not reject, so its coverage is \(0.954\), \(0.948\), and \(0.944\) at the three sample sizes.}
\label{fig:app_null_calibration_by_n}
\end{figure}

\begin{figure}[t]
\centering
\includegraphics[width=0.72\textwidth]{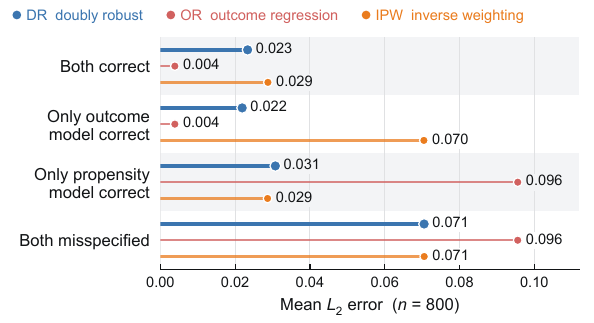}
\caption{Estimator-level double-robustness check in a Euclidean simulation design at $n=800$: mean \(L^2\) error of the doubly robust (DR), outcome-regression (OR), and inverse-weighting (IPW) estimators under four specification regimes, with \(500\) repetitions per regime. DR stays accurate when either nuisance model is correct (\(0.022\)--\(0.031\)). When the outcome model is correct, OR alone is most accurate (\(0.004\)); when both models are misspecified, DR and IPW coincide (\(0.071\)).}
\label{fig:app_double_robustness_sanity}
\end{figure}

\begin{figure}[t]
\centering
\includegraphics[width=0.72\textwidth]{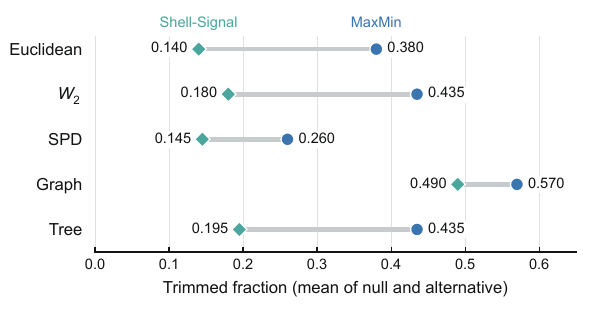}
\caption{Pilot trimming by outcome space and anchor strategy. Each value averages the trimmed fractions of the null and alternative pilots; for graphs the two move in opposite directions (Medoid MaxMin \(0.27\) and \(0.87\), Guarded Shell-Signal \(0.80\) and \(0.18\)).}
\label{fig:app_trimming_diagnostics}
\end{figure}

\begin{figure}[t]
\centering
\includegraphics[width=0.70\textwidth]{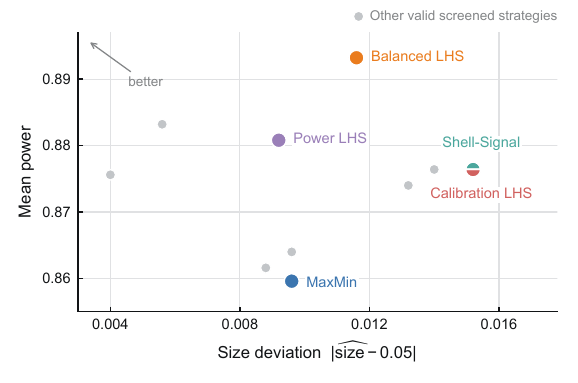}
\caption{Exploratory anchor-design screening: mean power against the absolute deviation of size from \(0.05\), averaged over the five outcome spaces, for candidate anchor-selection rules. Gray points are the other screened rules that passed the validity gate. Guarded Shell-Signal and Calibration LHS have identical coordinates and share a split marker.}
\label{fig:app_anchor_screening}
\end{figure}

\clearpage
\FloatBarrier
\section{Real-Geometry and Related-Work Supplementary Results}
\label{app:real_object_related_work}

This appendix records supplementary figures and interpretation boundaries for the evaluations summarized in \hyperref[sec:experiments]{Section~\ref*{sec:experiments}}. These analyses use the same profile estimator, inferential-grid mask, and max-statistic calibration as the main analysis; the difference is the empirical substrate or the neighboring baseline being tested.

\subsection{Real-Geometry Semi-Synthetic Benchmarks}
\label{app:realgeom_benchmarks}

The insurance benchmark uses restricted processed data derived from the social-network insurance experiment of~\citet{cai2015social}. We keep the real network, real covariates, and observed treatment structure as the substrate for known-truth graph-valued potential outcomes (Figures~\ref{fig:app_insurance} and~\ref{fig:app_insurance_metric}). The benchmark has \(1{,}844\) ego networks, of which \(922\) form the main sample; outcomes are compared by the Frobenius distance between adjacency matrices, and each anchor design uses \(500\) repetitions. Null rejection stays below the nominal level in every design, and the pilot-conditioned designs increase structural-alternative rejection relative to the geometry-only Medoid MaxMin baseline.

\begin{figure}[!htb]
\centering
\includegraphics[width=0.92\textwidth]{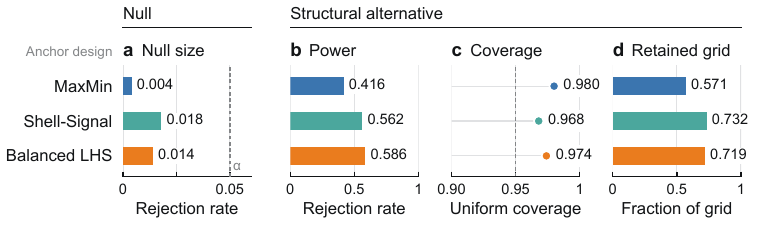}
\caption{Insurance real-network semi-synthetic benchmark under DRF across anchor designs: (a) null size, (b) structural-alternative power, (c) uniform coverage, and (d) retained grid fraction. Panels (c) and (d) use the structural-alternative runs; coverage is evaluated against the average profile effect over the main-sample units. Each design uses \(500\) repetitions.}
\label{fig:app_insurance}
\end{figure}

\begin{figure}[!htb]
\centering
\includegraphics[width=0.82\textwidth]{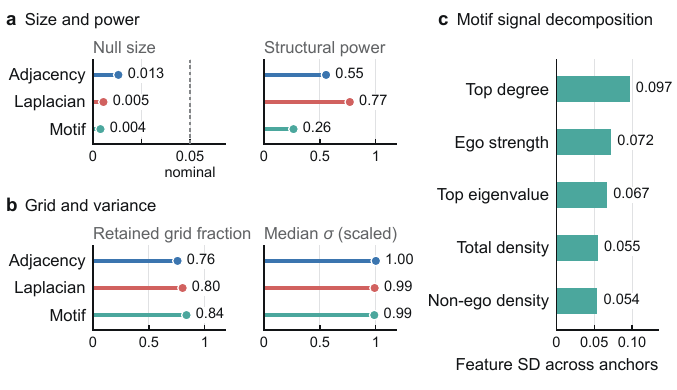}
\caption{Insurance metric diagnostics, averaged over \(26\) anchor strategies. (a) All three metrics are conservative under the null (sizes \(0.013\), \(0.005\), and \(0.004\)); the Laplacian metric has the highest structural power and the motif-summary metric the lowest. (b) Retained grid fraction and scaled median standard deviation. (c) The five of the eight motif-summary features with the largest standard deviation across anchors. These diagnostics complement the adjacency-Frobenius results.}
\label{fig:app_insurance_metric}
\end{figure}

The ABIDE benchmark uses ABIDE PCP CPAC functional-connectivity derivatives with Craddock-200 ROI time series~\citep{dimartino2014autism,craddock2013neuro,craddock2012whole} as high-dimensional SPD objects. The known-truth semi-synthetic layer retains real connectomes and covariates while injecting controlled prototype-shift and dispersion-collapse alternatives (\hyperref[fig:main_abide_semisynth_drf]{Figure~\ref*{fig:main_abide_semisynth_drf}}). The benchmark has \(845\) units, of which \(422\) form the main sample, and compares connectomes with the log-Euclidean metric. The effect strength \(s\) is a mixing weight in log-matrix coordinates: under dispersion collapse, \(Y(0)\) is the observed connectome and \(Y(1)\) moves toward the midpoint of the medoids of the two diagnostic groups with weight \(0.55s\); under prototype shift, \(Y(0)\) and \(Y(1)\) move toward the typically developing and autism medoids with weights \(0.20s\) and \(0.45s\). The anchor design is Guarded Shell-Signal on an \(8\times8\) grid, with \(200\) repetitions per effect strength. The observational ASD/TD comparison is reported as a descriptive profile analysis (\hyperref[fig:app_abide_observational]{Figure~\ref*{fig:app_abide_observational}}).

\begin{figure}[!htb]
\centering
\includegraphics[width=\textwidth]{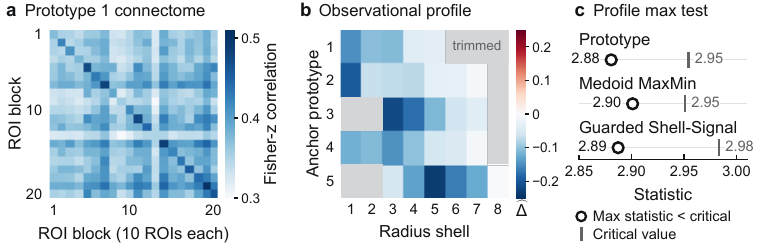}
\caption{ABIDE observational profile analysis (ASD versus TD). (a) First prototype connectome. (b) Profile map for the prototype anchors; gray cells are trimmed. (c) The profile max test does not reject in any of the three anchor designs. A mean-connectome proxy, not shown, rejects (statistic \(1.645\), critical value \(1.544\)). The analysis is descriptive; calibration and power are assessed in the semi-synthetic benchmark.}
\label{fig:app_abide_observational}
\end{figure}

\subsection{Auxiliary-Sample Allocation}
\label{subsec:allocation}

The anchor design uses auxiliary observations that are not available for score evaluation. To examine this cost, we used the ABIDE semi-synthetic substrate with a fixed pool of \(845\) units, the prototype-shift alternative, and three allocations of the pool (\hyperref[tab:allocation]{Table~\ref*{tab:allocation}}). State L uses a larger auxiliary sample; state C uses a smaller auxiliary sample with the same \(422\)-unit main sample and leaves \(170\) units unused; state D returns these units to the main sample. Each state uses an \(8\times8\) candidate grid, the DRF learner, two folds, \(500\) multiplier draws, and \(500\) paired repetitions at each effect strength. These runs are separate from those of \hyperref[fig:main_abide_semisynth_drf]{Figure~\ref*{fig:main_abide_semisynth_drf}}.

\begin{table}[htbp]
\centering
\small
\caption{Allocation of a fixed pool of \(845\) ABIDE units. Pilot and reference samples form the auxiliary design, and trimming uses the reference sample; entries in the last two columns are rejection rates of the level-\(0.05\) max test with the DRF learner, \(500\) paired repetitions each.}
\label{tab:allocation}
\begin{tabular}{@{}lrrrrrr@{}}
\toprule
State & Pilot & Reference & Main & Unused & Effect \(0\) & Effect \(0.10\) \\
\midrule
L & 254 & 169 & 422 & 0 & 0.020 & 0.448 \\
C & 152 & 101 & 422 & 170 & 0.020 & 0.388 \\
D & 152 & 101 & 592 & 0 & 0.018 & 0.660 \\
\bottomrule
\end{tabular}
\end{table}

At effect \(0.10\), the paired difference C\(-\)L is \(-0.060\), with a conservative \(95\%\) Monte Carlo interval \([-0.119,-0.001]\), and D\(-\)L is \(0.212\), with interval \([0.139,0.282]\). Each interval combines two \(97.5\%\) Clopper--Pearson intervals for the discordant pairs through a Bonferroni bound. The comparison concerns the complete procedures: the states differ in their auxiliary designs and, for D, in the main sample, and in one null replication of state C the main-sample standard-deviation screen removed one of \(46\) coordinates that state D retained. In this setting, shrinking the auxiliary sample without using the freed units lowered power, whereas returning them to the main sample raised it substantially.

Three elementary calculations clarify the cost of the auxiliary sample. Write \(N=m_p+m_r+n\) for the pilot, reference, and main sizes. For a grid that is already fixed, with the same score variance and nuisance quality, reserving auxiliary units inflates the first-order pointwise standard error by the factor \(\sqrt{N/n}\) relative to using all \(N\) units; this comparison ignores the max-test cutoff and the design choices themselves. Conditional on a pilot-fixed set of \(p\) candidate coordinates whose pooled ball probabilities lie at least \(\gamma\) from both trimming thresholds, and with the reference sample independent of the pilot sample, Hoeffding's inequality and a union bound show that the reference-sample trimming differs from population trimming with probability at most \(2p\exp(-2m_r\gamma^2)\). Finally, a pilot of size \(m_p\) contains no point of a region of probability \(\rho\) with probability \((1-\rho)^{m_p}\), so a rule that places anchors at pilot points needs \(m_p\ge\log(1/\delta)/\{-\log(1-\rho)\}\) to see such a region with probability at least \(1-\delta\). Grids specified from external information do not consume study units, and auxiliary units could also be reused for off-fold nuisance training; such extensions require their own independence arguments and are not used in this paper.

\subsection{Distribution-Valued Observational Analysis}

The NHANES analysis follows the Lin-style processed sample from the 2005--2006 NHANES Physical Activity Monitor data~\citep{lin2023causal,cdc2008nhanespaxraw}, keeps the covariate specification with age and gender, and treats marriage as the exposure, coded as married or living with a partner versus otherwise. The sample contains \(5{,}524\) participants aged \(14\) to \(85\) and is analyzed without survey weights. The exposure is rare among the \(966\) participants aged \(14\)--\(17\) (\(15\) exposed), so overlap is weak in this age group, and the analysis is descriptive. The profile analysis answers a different question from a horizontal quantile-effect curve: whether the outcome distribution becomes closer to prototype neighborhoods such as low-activity, high-activity, right-tail-active, or concentrated activity profiles (\hyperref[fig:app_nhanes]{Figure~\ref*{fig:app_nhanes}}).

\begin{figure}[t]
\centering
\includegraphics[width=\textwidth]{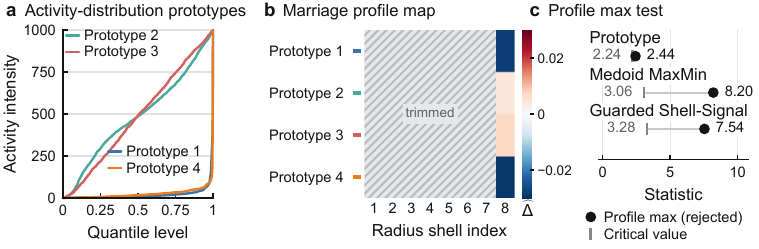}
\caption{NHANES activity distributions with marriage as the exposure. (a) Four activity-distribution prototypes, shown as quantile functions. (b) Profile map for the prototype anchors; hatched cells are trimmed, leaving \(4\) of \(32\) cells. (c) Profile max statistic and critical value for three anchor designs; the profile max test rejects in all three. The mean-quantile proxy, not shown, has statistic \(19.81\) and critical value \(4.83\). The profile view describes movement toward or away from prototype neighborhoods and complements the horizontal quantile contrast; the analysis is descriptive.}
\label{fig:app_nhanes}
\end{figure}

\subsection{Related-Work Stress Tests}

In the related-work simulations, each sample of total size \(n\) is split at random into an anchor-design part (\(30\%\)), a trimming-reference part (\(20\%\)), and a main part (\(50\%\)) used for nuisance fitting and score evaluation. The simulations isolate three points. First, the Wasserstein collapse example follows the separating construction in which the mean-quantile contrast is weak or zero, but the center-anchor profile effect is large. Second, the graph and SPD mean-zero examples keep the Fr\'echet-mean proxy near zero while changing dispersion or multimodality (\hyperref[fig:app_related_work]{Figure~\ref*{fig:app_related_work}}). Third, under confounded causal nulls, a metric two-sample statistic that ignores covariates rejects frequently, because it compares observed groups (\hyperref[fig:main_estimand_confounding_drf]{Figure~\ref*{fig:main_estimand_confounding_drf}}(b)).

\begin{figure}[t]
\centering
\includegraphics[width=0.90\textwidth]{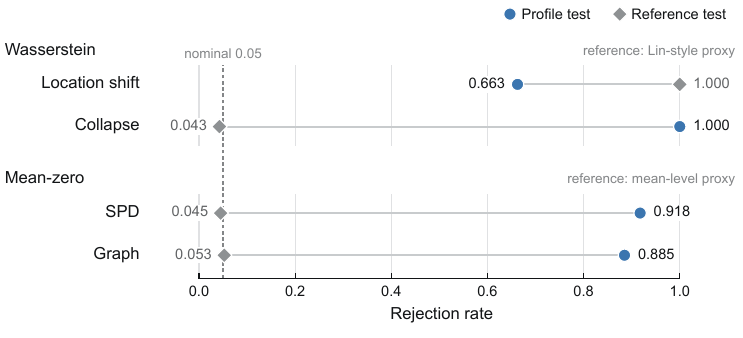}
\caption{Related-work separating examples under DRF. The Lin-style location setting is a positive control; the Wasserstein collapse setting shows a profile signal with weak mean-quantile contrast. The graph and SPD mean-zero examples show profile power under object-law collapse while the Fr\'echet-mean (mean-level) proxy remains close to nominal; the Lin-style proxy is the mean-quantile proxy. Rejection rates pool the total sample sizes \(n=200\) and \(n=400\), with \(200\) repetitions each; separately, the profile test rejects at \(0.470\) and \(0.855\) in the location setting, \(0.790\) and \(0.980\) for graphs, and \(0.845\) and \(0.990\) for SPD matrices.}
\label{fig:app_related_work}
\end{figure}

\paragraph{Matched tasks.}
We also compared the profile test with neighboring proxies on two randomized tasks: a location shift of Wasserstein quantile outcomes, against the mean-quantile proxy, and a log-location shift of SPD outcomes, against a log-Euclidean Fr\'echet-mean proxy, each with shift \(\theta\in\{0,0.5,1\}\) (\hyperref[tab:matched_tasks]{Table~\ref*{tab:matched_tasks}}). Both procedures use the same \(400\) main observations, with treatment assigned independently with probability \(0.5\); the profile procedure additionally uses \(200\) auxiliary observations, \(120\) for anchor design and \(80\) for trimming. Its grid combines one scientific anchor (the standard normal quantile object or the SPD identity) with seven farthest-first anchors and eight shared radii. The propensity score is estimated by logistic regression clipped at \(0.05\), the profile calibration uses \(500\) multiplier draws, and the proxies use \(500\) permutations. Under the null, the retained fraction of the \(64\) candidate coordinates averaged \(0.735\) and \(0.729\) for the Wasserstein and SPD tasks, the empirical covering radius of the anchors on the design sample \(0.500\) and \(1.434\), the radii ranged on average from \(0.262\) to \(2.444\) and from \(0.860\) to \(2.664\), and the max-test critical value averaged \(3.135\) and \(3.216\). Both tests reject almost always under the Wasserstein shifts, whereas the Fr\'echet-mean proxy is more powerful for the SPD shift at \(\theta=0.5\). The two procedures test different functionals with different multiplicity, so the comparison concerns rejection behavior on a shared task rather than efficiency for a common estimand.

\begin{table}[htbp]
\centering
\small
\caption{Matched randomized tasks: rejection rates at level \(0.05\) with Wilson \(95\%\) Monte Carlo intervals, \(200\) repetitions per cell. The profile procedure uses \(200\) auxiliary observations in addition to the \(400\) shared main observations.}
\label{tab:matched_tasks}
\begin{tabular}{@{}llll@{}}
\toprule
Task & \(\theta\) & Profile test & Proxy test \\
\midrule
Wasserstein location & \(0\) & 0.035 [0.017, 0.070] & 0.060 [0.035, 0.102] \\
 & \(0.5\) & 1.000 [0.981, 1.000] & 0.975 [0.943, 0.989] \\
 & \(1\) & 1.000 [0.981, 1.000] & 1.000 [0.981, 1.000] \\
SPD log-location & \(0\) & 0.035 [0.017, 0.070] & 0.035 [0.017, 0.070] \\
 & \(0.5\) & 0.710 [0.644, 0.768] & 0.895 [0.845, 0.930] \\
 & \(1\) & 1.000 [0.981, 1.000] & 1.000 [0.981, 1.000] \\
\bottomrule
\end{tabular}
\end{table}

These supplementary experiments complement the five-space simulation block: they examine portability to real geometries and the relationship between the profile estimand and neighboring baselines.

\clearpage
\FloatBarrier
\section{Conditional-CDF Learner Comparison: DRF and Kernel ECDF}
\label{app:learner_robustness}

The high-sample, separation, confounding, and real-geometry experiments use the DRF weighted ECDF learner, and the five-space table uses the anchor-specific kernel-weighted ECDF learner. This appendix records the paired comparison of the two learners. Both learners estimate the same conditional-CDF nuisance target and preserve monotonicity in \(t\); they differ in finite-sample nuisance behavior and runtime.

\subsection{High-\texorpdfstring{\(n\)}{n} Finite-Sample Diagnostics}

The high-sample design is given in \hyperref[app:simulation_details]{Appendix~\ref*{app:simulation_details}}. Its diagnostics are empirical counterparts of the rates in \hyperref[ass:uniform_remainder]{Assumption~\ref*{ass:uniform_remainder}}, computed on the main sample against the true nuisance functions: \(\hat r_\pi\) is the root mean squared error of the out-of-fold clipped propensity predictions, \(\hat r_m\) is the largest such error of the out-of-fold conditional-CDF predictions over arms and grid cells, the squared-rate and product-rate diagnostics are \((\hat r_m^2+\hat r_\pi^2)L^2\) and \(\sqrt n\,\hat r_m\hat r_\pi L^{1/2}\), and the scaled remainder is \(L^{1/2}\max_g|n^{-1/2}\sum_{i=1}^n\{\hat\phi_i(g)-\phi(O_i;g,\eta_0)\}|\) over the cells retained by the trimming mask and the standard-deviation screen of \hyperref[app:simulation_details]{Appendix~\ref*{app:simulation_details}}, where \(L=\log100\) is the logarithm of the candidate-grid size.

Kernel ECDF gives conservative finite-sample max-statistic behavior and decreasing estimation error, but its empirical nuisance diagnostics do not clearly improve at the tested sample sizes. In the high-sample controlled runs, ECDF null rejection is \(0.020\), \(0.022\), and \(0.022\) for \(n=3200,6400,12800\), while SCB coverage is \(0.980\), \(0.978\), and \(0.978\). Its weighted \(L^2\) and sup-norm errors decrease, but the conditional-CDF error \(\hat r_m\) remains nearly flat around \(0.355\), and the product-rate diagnostics do not clearly decline.

DRF improves the nuisance diagnostics. Across \(n=3200,6400,12800\), \(\hat r_m\) decreases from \(0.193\) to \(0.120\), the squared-rate diagnostic from \(0.796\) to \(0.305\), and the product-rate diagnostic from \(0.447\) to \(0.272\). The scaled remainder also decreases, from \(1.554\) to \(1.058\). These empirical errors decrease over the reported sample sizes; they describe the fitted procedure and do not verify the rate conditions of \hyperref[ass:uniform_remainder]{Assumption~\ref*{ass:uniform_remainder}}.

\subsection{Learner Scorecard}

\begin{table}[!htbp]
\centering
\caption{Learner comparison summary.}
\label{tab:learner_robustness_scorecard}
\scriptsize
\setlength{\tabcolsep}{3pt}
\begin{tabular}{@{}>{\raggedright\arraybackslash}p{0.22\textwidth}>{\raggedright\arraybackslash}p{0.22\textwidth}>{\raggedright\arraybackslash}p{0.20\textwidth}>{\raggedright\arraybackslash}p{0.26\textwidth}@{}}
\toprule
Block & DRF weighted ECDF & Kernel ECDF & Use in paper \\
\midrule
High-\(n\) diagnostics & Null rejection \(0.046\)--\(0.056\); \(\hat r_m\), the squared-rate and product-rate diagnostics, and the scaled remainder decrease with \(n\) & More conservative null rejection \(0.020\)--\(0.022\); nuisance-rate diagnostics weak & DRF is used in the main text for this block. \\
Insurance graph & Structural power \(0.416\)--\(0.586\); null size \(0.004\)--\(0.018\) & Structural power \(0.392\)--\(0.550\); null size \(0.004\)--\(0.022\) & DRF is at least as strong in the graph benchmark. \\
ABIDE semi-synthetic & Collapse power \(0.430\) at effect \(0.05\), \(0.955\) at \(0.10\); prototype power \(0.970\) at \(0.20\) & Nearly identical power curve; slightly wider bands & Supports portability to high-dimensional connectomes. \\
Confounded null & Wasserstein \(n=400\): profile \(0.030\), naive \(0.674\) & Profile \(0.038\), naive \(0.674\) & The naive metric test, which targets the observational null, rejects frequently under the causal null. \\
Separating examples & Wasserstein collapse profile \(1.000\) versus mean-quantile proxy \(0.040\)--\(0.045\); SPD/graph mean-zero profile \(0.980\)--\(0.990\) at \(n=400\) & Similar separating-example behavior & Supports estimand separation; the tree space is not included in this block. \\
\bottomrule
\end{tabular}
\end{table}

The learner comparison (\hyperref[tab:learner_robustness_scorecard]{Table~\ref*{tab:learner_robustness_scorecard}}) supports the use of DRF outside the five-space table. DRF improves the finite-grid nuisance diagnostics and performs well on the ABIDE semi-synthetic power curves, the confounding stress tests, the separating examples, and the insurance graph benchmark. Kernel ECDF remains valuable because it is simple, fast, and structurally coherent, but its nuisance-rate diagnostics are weaker in the high-sample analysis.

\subsection{Relation to Reported Claims}

These paired results support three limited conclusions. First, the conditional-CDF nuisance target can be estimated by more than one structurally coherent learner. Second, DRF has smaller and more clearly decreasing empirical conditional-CDF prediction errors in the high-sample runs. Third, under confounded causal nulls the doubly robust profile test stays below the nominal level, whereas the naive metric two-sample test, which compares observed groups, rejects frequently.

\section{Experimental Reproducibility Details}
\label{app:reproducibility_details}

The code, configuration files, and figure-rendering scripts used for the reported simulation, semi-synthetic, and portability analyses are available from the authors upon request. The code is organized into reusable modules for object-space data generation, anchor and radius construction, nuisance estimation, doubly robust profile estimation, simultaneous bands, max-statistic tests, and visualization. The script directory contains runners for the five-space calibration experiments, confounding stress tests, related-work separating examples, insurance graph semi-synthetic benchmark, ABIDE semi-synthetic benchmark, and figure regeneration.

The configurations used for the reported results cover the null and alternative finite-grid calibration blocks, nuisance-rate diagnostics, Wasserstein separating examples, graph and SPD mean-zero examples, confounding-failure experiments, insurance graph semi-synthetic runs, and ABIDE semi-synthetic runs. Large raw datasets, restricted insurance-network data, and large per-repetition matrices are not redistributed; for public or externally governed data sources, provenance and access notes are provided instead of the raw data.

The reported experiments are CPU-based and require no GPU; their cost is dominated by repeated nuisance fitting, inferential-grid construction, and bootstrap max-statistic calibration. The distributional random forest nuisance learner uses an R backend. The realized backend is recorded for each fit, and no fallback to another backend occurred in the reported runs.

The figure-rendering script regenerates the main and appendix figures from lightweight derived outputs. These regenerated figures cover the finite-grid convergence and calibration diagnostics, estimator-level double-robustness check, trimming diagnostics, anchor-design screening, related-work stress tests, insurance metric diagnostics, real-geometry semi-synthetic benchmarks, and the observational figures. The observational ABIDE and NHANES figures are descriptive; the known-truth validation claims in the paper rely on the simulation and semi-synthetic blocks described above.

\end{document}